\newcommand{\myparagraph}[1]{\smallskip

  \noindent\textbf{#1}\quad}
\newcommand{\hideforspeed}[1]{#1}
\theoremstyle{plain}
\newtheorem{theorem}{Theorem}[section]
\newtheorem{proposition}[theorem]{Proposition}
\newtheorem{lemma}[theorem]{Lemma}
\newtheorem{corollary}[theorem]{Corollary}
\theoremstyle{definition}
\newtheorem{definition}[theorem]{Definition}
\newtheorem{remark}[theorem]{Remark}
\newtheorem{example}[theorem]{Example}
\pgfplotsset{compat=newest}
    \gdef\node@@on@layer{%
      \setbox\tikz@tempbox=\hbox\bgroup\pgfonlayer{#1}\unhbox\tikz@tempbox\endpgfonlayer\egroup}
\def\node@on@layer{\aftergroup\node@@on@layer}
\newcommand{\pfnote}[1]{}
\newcommand{\nbnote}[1]{}
\newcommand{\ssnote}[1]{}
\newcommand{\prob}{\mathbb{P}}
\newcommand{\calT}{\mathcal{T}}
\newcommand{\calM}{\mathcal{M}}
\newcommand{\pref}{\succ}
\newcommand{\CP}{{{\mathrm{CP}}}}
\newcommand{\pos}{{\mathrm{pos}}}
\newcommand{\calL}{{\mathcal{L}}}
\newcommand{\reals}{{\mathbb{R}}}
\newcommand{\calD}{{\mathcal{D}}}
\newcommand{\rawPOS}{{{\mathrm{rawPOS}}}}
\newcommand{\normPOS}{{{\mathrm{nPOS}}}}
\newcommand{\EMD}{{{\mathrm{EMD}}}}
\newcommand{\ID}{{{\mathrm{ID}}}}
\newcommand{\UN}{{{\mathrm{UN}}}}
\newcommand{\AN}{{{\mathrm{AN}}}}
\newcommand{\ST}{{{\mathrm{ST}}}}
\newcommand{\freq}{{{\mathrm{freq}}}} 
\newcommand{\IC}{{\mathrm{IC}}}
\newcommand{\Mallows}{{\mathrm{Mal}}}
\newcommand{\groupsep}{{\mathrm{GS}}}
\newcommand{\Bal}{{\mathrm{Bal}}}
\newcommand{\Flat}{{\mathrm{Flat}}}
\newcommand{\Con}{{\mathrm{Con}}}
\newcommand{\Wal}{{\mathrm{Wal}}}
\newcommand{\SP}{{\mathrm{SP}}}
\newcommand{\normphi}{{{\mathrm{norm}\hbox{-}\phi}}}
\definecolor{darkgreen}{rgb}{0,0.5,0}
\definecolor{darkpink}{rgb}{0.75,0.25,0.25}
\definecolor{RED}{rgb}{1,0,0}
\newcommand{\figurecut}[1]{}
\newcommand{\cutfornow}[1]{}
\newcommand{\APPENDIX}[1]{}
\title{Expected Frequency Matrices of Elections: Computation, Geometry, and Preference Learning}
\author[1]{Niclas Boehmer}
\author[2]{Robert Bredereck}
\author[3]{Edith Elkind}
\author[4]{\\Piotr Faliszewski}
\author[4]{Stanisław Szufa}
\affil[1]{\small
  Technische Universit\"at Berlin, Algorithmics and Computational 
  Complexity\protect\\
  niclas.boehmer@tu-berlin.de}
\affil[2]{\small
  TU Clausthal, 
  robert.bredereck@tu-clausthal.de}
 \affil[3]{\small
  University of Oxford,
 elkind@cs.ox.ac.uk}
 \affil[4]{\small
  AGH University,
  \{faliszew,szufa\}@agh.edu.pl}
\date{\today}
\begin{document}
\maketitle              
\begin{abstract}
We use the ``map of elections'' approach
of Szufa et al. (AAMAS-2020) to analyze several well-known
vote distributions.
For each of them,
we give 
an explicit formula or an efficient algorithm for 
computing its frequency matrix, which captures the probability that a given 
candidate appears in a given position in a sampled vote. We use these matrices
to draw the ``skeleton map'' of distributions, evaluate its robustness, and
analyze its properties. 
Finally, we develop a general and unified framework for learning the distribution of real-world preferences using the frequency matrices of established vote distributions.
\end{abstract}

\section{Introduction}

Computational social choice is a research area at the intersection of social choice (the science of collective decision-making) and computer science, which focuses on the algorithmic analysis of problems related to preference aggregation and elicitation \citep{bra-con-end:b:comsoc}. Many of the early papers in this field were primarily theoretical, focusing on establishing the worst-case complexity of winner determination and strategic behavior under various voting rules---see, e.g., the papers of \citet{hem-hem-rot:j:dodgson}, \citet{dwo-kum-nao-siv:c:rank-aggregation}, and \citet{con-lan-san:j:when-hard-to-manipulate}---but 
more recent work often combines theoretical investigations with empirical analysis. 
For example, formal bounds on the running time and/or approximation ratio of a winner determination algorithm can be complemented by experiments 
that evaluate its performance on realistic instances; see, e.g., the works of~\citet{con:c:slater}, \citet{bet-bre-nie:j:kemeny}, \citet{fal-lac-pet-tal:c:csr-heuristics} and \citet{wan-sik-she-zha-jia-xia:c:practical-algo-stv}. 

However, performing high-quality experiments requires the ability to
organize and understand the available data. One way to achieve this is
to form a so-called ``map of elections,'' recently introduced by
\citet{szu-fal-sko-sli-tal:c:map} and extended by
\citet{boe-bre-fal-nie-szu:t:compass}. The idea is as follows. First,
we fix a distance measure between elections. Second, we sample a number of
elections from various distributions and real-life datasets---e.g.,
those collected in PrefLib~\citep{mat-wal:c:Preflib}---and measure the
pairwise distances between them. Third, we embed these elections into the
2D plane, mapping each election to a point so that the Euclidean
distances between points are approximately equal to the distances
between the respective elections.  Finally, we plot these points,
usually coloring them to indicate their origin (e.g., the
distribution from which a given election was sampled); see
\Cref{fig:compass_map} later in the paper 
for an example
of such a map. A location of an election on a map provides useful information about its properties.  For example,
\citet{szu-fal-sko-sli-tal:c:map} and
\citet{boe-bre-fal-nie:c:winner-robustness,boe-bre-fal-nie-szu:t:compass} have shown that it can be used to predict (a)~the Borda 
score of the winner of the election, (b)~the running time of ILP solvers computing the
winners under
the Harmonic-Borda multiwinner voting rule, or (c)~the robustness of
Plurality and Borda winners.  Moreover, real-world elections of the
same type (such as the ones from politics, sports, or surveys) tend to
cluster in the same areas of the map; see also the positions on the
map of the datasets collected
by \citet{boe-sch:t:collecting-data}.
As such, the map has proven to be a useful framework to analyze the nature of elections and to visualize experimental results in a non-aggregate fashion.

Unfortunately, extending the map to incorporate additional examples and
distributions is a challenging task, as the visual representation
becomes cluttered and, more importantly, the embedding algorithms,
which map elections to points in 2D, find it more difficult to
preserve pairwise distances between points as the number of points
increases. It is therefore desirable to reduce the number of points in
a way that preserves the key features of the framework.

We address this challenge by drawing a map of {\em distributions}
rather than individual elections, which we call the {\em skeleton
  map}. That is, instead of sampling 20--30 points from each
distribution and placing them all on the map, as
\citet{szu-fal-sko-sli-tal:c:map} and
\citet{boe-bre-fal-nie-szu:t:compass} do (obtaining around 800 points
in total), we create a single point for each distribution. This
approach is facilitated by the fact that prior work on the ``map of
elections'' framework represented elections by their \emph{frequency
  matrices}, which capture their essential features.
The starting point of our work is the
observation that this representation extends to distributions in a
natural way.
Thus, if we can compute the frequency matrix of some distribution $\mathcal D$, then,
instead of sampling elections from $\mathcal D$ and creating a point
on the map for each sample, we can create a single point for
$\mathcal D$ itself.

\myparagraph{Our Contribution.}
We provide three sets of results.
First, for a number of prominent vote distributions, we show how to compute their frequency matrices, 
by providing an explicit formula or an efficient algorithm.
  Second, we draw the map of distributions (the {\em skeleton map}) and argue for its credibility and robustness. 
Finally, we use our results to
estimate the parameters of the distributions that are closest to the real-world elections considered by \citet{boe-bre-fal-nie-szu:t:compass}. 
In more detail, we work in the setting of preference learning, where we are given an election and we want to learn the parameters of some distribution, so as to maximize 
the similarity of the votes sampled from this distribution and the input election. For example, we may be interested in fitting the classic
model of \citet{mal:j:mallows}. This model is parameterized by a
central vote~$v$ and a dispersion parameter~$\phi$, which specifies
how likely it is to generate a vote at some distance from the central
one (alternatively, one may use, e.g., the Plackett--Luce model).  Previous
works on preference learning typically proposed algorithms to learn the
parameters of one specific (parameterized) vote distribution (see,
e.g., the works of
\citet{lu-bou:j:sampling-mallows,MM09,DBLP:conf/uai/MeilaC10,DBLP:journals/jmlr/VitelliSCFA17,DBLP:journals/csda/MurphyM03,DBLP:conf/nips/AwasthiBSV14}
for (mixtures of) the Mallows model and the works of
\citet{DBLP:conf/icml/GuiverS09,hunter2004mm,minka2004power,gormley2008exploring}
for (mixtures of) the Plackett--Luce model).
Using frequency matrices, we offer a more general approach.  Indeed,
given an election and a parameterized vote distribution whose
frequency matrix we can compute, the task of learning the distribution's parameters boils down to finding
parameters that minimize the distance between the election and the matrices of the
distribution.  While this minimization problem may
be quite challenging, our approach offers a uniform framework for dealing
with multiple kinds of distributions at the same time. 
We find that 
for the case of the Mallows distribution, our approach learns parameters
very similar to those established using maximum likelihood-based approaches.
Omitted proofs and discussions
are in the appendix. The source code used for
the experiments is available in a GitHub repository\footnote{\url{github.com/Project-PRAGMA/Expected-Frequency-Matrices-NeurIPS-2022}}.

\section{Preliminaries}\label{se:prelims}

Given an integer $t$, we write $[t]$ to denote the set $\{1, \ldots, t\}$.
We interpret a vector $x \in \reals^{m}$ as an $m \times 1$
matrix (i.e., we use column vectors as the default).

\myparagraph{Preference Orders and Elections.}
Let $C$ be a finite, nonempty set of candidates. We refer to total
orders over~$C$ as \emph{preference orders} (or, equivalently,
\emph{votes}), and denote the set of all
preference orders over~$C$ by $\calL(C)$. Given a vote $v$ and a candidate $c$, by
$\pos_v(c)$ we mean the position of~$c$ in~$v$ (the top-ranked
candidate has position~$1$, the next one has position $2$, and so on).
If a candidate $a$ is ranked above another candidate $b$ in
vote $v$, we write $v \colon a \pref b$. 
Let $\mathrm{rev}(v)$ denote the {\em reverse} of vote $v$.
An
{\em election} $E = (C,V)$ consists of a set $C = \{c_1, \ldots, c_m\}$ of candidates and a
collection $V = (v_1, \ldots, v_n)$ of votes. Occasionally we refer to the elements of $V$ as voters rather than votes.

\myparagraph{Frequency Matrices.}
Consider an election $E = (C,V)$ with $C = \{c_1, \ldots, c_m\}$ and
$V = (v_1, \ldots, v_n)$. For each candidate $c_j$ and position
$i \in [m]$, we define $\#\freq_E(c_j,i)$ to be the fraction of the
votes from $V$ that 
rank $c_j$ in position $i$. We define the column vector
$\#\freq_E(c_j)$ to be $(\#\freq_E(c_j,1), \ldots, \#\freq_E(c_j,m))$
and matrix $\#\freq(E)$ to consist of vectors
$\#\freq_E(c_1), \ldots, \#\freq_E(c_m)$. We refer to $\#\freq(E)$ as the {\em frequency
  matrix of election~$E$}.  Frequency matrices are bistochastic, i.e.,
their entries are nonnegative and each of their rows and columns sums
up to one.
\begin{example}\label{ex:election-matrix}
  Let $E = (C,V)$ be an election with candidate set
  $C = \{a,b,c,d,e\}$ and four voters, $v_1$, $v_2$, $v_3$, and
  $v_4$. Below, we show the voters' preference orders (on the left) and the
  election's frequency matrix (on the right). \vspace{-0.8cm}
 \begin{center}
  \begin{minipage}[b]{0.2\columnwidth}
    \begin{align*}
      &v_1 \colon a \pref b \pref c \pref d \pref e, \\
      &v_2 \colon c \pref b \pref d \pref a \pref e, \\
      &v_3 \colon d \pref e \pref c \pref b \pref a, \\
      &v_4 \colon b \pref c \pref a \pref d \pref e. 
     \end{align*}
  \end{minipage}\quad\quad
  \begin{minipage}[b]{0.4\columnwidth}
    \begin{align*}
      \kbordermatrix{ & a & b & c & d & e  \\
      1 &                \nicefrac{1}{4} & \nicefrac{1}{4} & \nicefrac{1}{4} & \nicefrac{1}{4} & 0\\
      2 &                0               & \nicefrac{1}{2} & \nicefrac{1}{4} & 0               & \nicefrac{1}{4}\\
      3 &                \nicefrac{1}{4} & 0               & \nicefrac{1}{2} & \nicefrac{1}{4} & 0\\
      4 &                \nicefrac{1}{4} & \nicefrac{1}{4} & 0               & \nicefrac{1}{2} & 0\\
      5 &                \nicefrac{1}{4} & 0               & 0               & 0               & \nicefrac{3}{4}   }
    \end{align*}
  \end{minipage}
  \end{center}
\end{example}

Given a vote $v$, we write $\#\freq(v)$ to denote the frequency matrix
of the election containing this vote only; $\#\freq(v)$ is a
permutation matrix, with a single $1$ in each row and in each
column. Thus, for an election $E = (C,V)$ with $V = (v_1,\ldots, v_n)$
we have
$\textstyle \#\freq(E) = \frac{1}{n}\cdot\sum_{i=1}^n \#\freq(v_i)$.

\myparagraph{Compass Matrices.}
For even $m$, \citet{boe-bre-fal-nie-szu:t:compass} defined the following four
$m \times m$ ``compass'' matrices,
which appear to be 
extreme on the
``map of elections'':
\begin{enumerate}
\item The {\em identity matrix}, $\ID_m$, has ones on the diagonal and zeroes everywhere
  else (it corresponds to an election where all voters agree on a single
  preference order).
\item The {\em uniformity matrix}, $\UN_m$, has all entries equal to $\nicefrac{1}{m}$ (it
  corresponds to lack of agreement; each candidate is ranked at each position equally often).
\item The {\em stratification matrix}, $\ST_m$, is partitioned into
  four quadrangles, where all entries in the top-left and bottom-right
  quadrangles are equal to $\nicefrac{2}{m}$, and all other entries
  are equal to zero (it corresponds to partial agreement; the voters
  agree which half of the candidates is superior, but disagree on
  everything else).
\item The {\em antagonism matrix}, $\AN_m$, has values
  $\nicefrac{1}{2}$ on both diagonals and zeroes elsewhere (it
  captures a conflict: it is a matrix of an election where  
  half of the voters rank the candidates in one
  way and half of the voters rank them in the opposite way).
\end{enumerate}
Below, we show examples of these matrices for $m=4$:
{\footnotesize \begin{align*}
  \UN_4 = \footnotesize
        \setlength{\arraycolsep}{3pt}
	\begin{bmatrix} 
	\nicefrac{1}{4} & \nicefrac{1}{4} & \nicefrac{1}{4} &\nicefrac{1}{4} \\
	\nicefrac{1}{4} & \nicefrac{1}{4} & \nicefrac{1}{4} &\nicefrac{1}{4} \\
	\nicefrac{1}{4} & \nicefrac{1}{4} & \nicefrac{1}{4} &\nicefrac{1}{4} \\
	\nicefrac{1}{4} & \nicefrac{1}{4} & \nicefrac{1}{4} &\nicefrac{1}{4} \\
	\end{bmatrix},
  \ID_4 = \footnotesize
        \setlength{\arraycolsep}{3pt}
	\begin{bmatrix} 
	1 & 0 & 0 & 0 \\
	0 & 1 & 0 & 0 \\
	0 & 0 & 1 & 0 \\
	0 & 0 & 0 & 1 \\
	\end{bmatrix}, 
  \ST_4 = \footnotesize
        \setlength{\arraycolsep}{2.5pt}
	\begin{bmatrix}
	\nicefrac{1}{2} & \nicefrac{1}{2} & 0 & 0 \\
	\nicefrac{1}{2} & \nicefrac{1}{2} & 0 & 0 \\
	0 & 0 & \nicefrac{1}{2} & \nicefrac{1}{2} \\
	0 & 0 & \nicefrac{1}{2} & \nicefrac{1}{2} \\
	\end{bmatrix}, 
  \AN_4 = \footnotesize
        \setlength{\arraycolsep}{3pt}
  	\begin{bmatrix} 
	\nicefrac{1}{2} & 0 & 0 & \nicefrac{1}{2} \\
	0 & \nicefrac{1}{2} & \nicefrac{1}{2}  & 0 \\
	0 & \nicefrac{1}{2} & \nicefrac{1}{2}  & 0 \\
	\nicefrac{1}{2} & 0 & 0 & \nicefrac{1}{2} \\
	\end{bmatrix}.
\end{align*}}%
We omit the subscript in the names of these matrices if its value is clear from the context or irrelevant.

\myparagraph{EMD.}  Let $x = (x_1, \ldots, x_n)$ and
$y = (y_1, \ldots, y_n)$ be two vectors with nonnegative real entries
that sum up to~$1$.  Their {\em Earth mover's distance}, denoted
$\EMD(x,y)$, is the cost of transforming $x$ into $y$ using operations
of the form: Given indices $i, j \in [n]$ and a positive value
$\delta$ such that $x_i \geq \delta$, at the cost of
$\delta \cdot |i - j|$, replace $x_i$ with $x_i-\delta$ and $x_j$ with
$x_j + \delta$ (this corresponds to moving $\delta$ amount of
``earth'' from position $i$ to position~$j$). $\EMD(x,y)$ can be
computed in polynomial time by a standard greedy
algorithm. 

\myparagraph{Positionwise Distance \citep{szu-fal-sko-sli-tal:c:map}.}
Let $A = (a_1, \ldots, a_m)$ and $B = (b_1, \ldots, b_m)$ be two
$m \times m$ frequency matrices.  Their {\em raw positionwise distance} is $\rawPOS(A,B) = \min_{\sigma \in S_m}\sum_{i=1}^m \EMD(a_i, b_{\sigma(i)})$, where $S_m$ denotes the set of all permutations over
$[m]$.
We will normalize these distances by $\frac{1}{3}(m^2-1)$, which \citet{boe-bre-fal-nie-szu:t:compass,DBLP:journals/corr/abs-2205-00492} proved to be the maximum distance between two $m\times m$ frequency matrices and the distance between $\ID_m$ and $\UN_m$:  
$\normPOS(A,B) = \frac{\rawPOS(A,B)}{\frac{1}{3}(m^2-1)}$.  For two
elections $E$ and $F$ with equal-sized candidate sets, their
positionwise distance, raw or normalized, is defined as the 
positionwise distance between their frequency matrices. 

\myparagraph{Paths Between the Compass Matrices.}  Let~$X$ and~$Y$ be
two compass matrices.  \citet{boe-bre-fal-nie-szu:t:compass} showed
that if we take their affine combination $Z = \alpha X + (1-\alpha)Y$
($0 \leq \alpha \leq 1$) then
$\normPOS(X,Z) = (1-\alpha)\normPOS(X,Y)$ and
$\normPOS(Z,Y) = \alpha\normPOS(X,Y)$.  Such affine combinations form
direct paths between the compass matrices; they are also
possible between any two frequency matrices of a given size, not just the compass ones, but may
require shuffling the matrices' columns~\citep{boe-bre-fal-nie-szu:t:compass}.

\myparagraph{Structured Domains.}
We consider two classes of structured elections,
single-peaked elections~\citep{bla:b:polsci:committees-elections}, and
group-separable elections~\citep{ina:j:group-separable}.
For a 
discussion of these domains and the motivation behind them, see the original papers and the overviews by
\citet{elk-lac-pet:b:structured-preferences,elk-lac-pet:t:restricted-domains}.

Intuitively, an election is single-peaked if we can order the
candidates 
so that, as each voter considers the candidates in this order
(referred to as the \emph{societal axis}), 
his or her appreciation first increases and then decreases. The axis
may, e.g., correspond to the left-right political spectrum.

\begin{definition}
  Let $v$ be a vote over $C$ and let $\lhd$ be the societal axis over
  $C$. We say that $v$ is {\em single-peaked with respect to $\lhd$}
  if for every $t \in [|C|]$ its $t$ top-ranked candidates form an
  interval within $\lhd$.   
  An election is {\em single-peaked with
    respect to~$\lhd$} if all its votes are. An election is {\em
    single-peaked (SP)} if it is single-peaked with respect to some
  axis.
\end{definition}

Note that the election from Example~\ref{ex:election-matrix} is single-peaked with respect to the axis $a \lhd b \lhd c \lhd d \lhd e$.

We also consider {\em group-separable elections}, introduced by
\citet{ina:j:group-separable}. 
For our purposes, it will be convenient to use the
 tree-based definition of \citet{kar:j:group-separable}.  Let
$C = \{c_1, \ldots, c_m\}$ be a set of candidates, and consider a
rooted, ordered tree $\calT$ whose leaves are elements of $C$. The
\emph{frontier} of such a tree is the preference order that ranks the
candidates in the order in which they appear in the tree from left to
right.  A preference order is \emph{consistent} with a given tree if
it can be obtained as its frontier by reversing the order in which the
children of some nodes appear.

\begin{definition}\label{def:gs}
  An election $E = (C,V)$ is \emph{group-separable} if there is a
  rooted, ordered tree $\calT$ whose leaves are members of $C$, such
  that each vote in $V$ is consistent with~$\calT$.
\end{definition}

\noindent The trees from Definition~\ref{def:gs} form a subclass of
\emph{clone decomposition trees}, which 
are examples of PQ-trees
\citep{elk-fal-sli:c:decloning,boo-lue:j:consecutive-ones-property}.

\begin{example}
  Consider candidate set $C = \{a, b,c,d\}$, trees $\calT_1$,
  $\calT_2$, and $\calT_3$ from Figure~\ref{fig:trees}, and votes
  $v_1 \colon a \pref b \pref c \pref d$,
  $v_2 \colon c \pref d \pref b \pref a$, and
  $v_3 \colon b \pref d \pref c \pref a$.  Vote $v_1$ is consistent
  with each of the trees,
  $v_2$ is consistent with $\calT_2$ (reverse the children of $y_1$
  and $y_2$), and $v_3$ is consistent with $\calT_3$ (reverse the
  children of $x_1$ and $x_3$).
\end{example}

\section{Frequency Matrices for Vote Distributions}\label{sec:freq_dists}
We show how to compute 
frequency matrices
for several well-known distributions over votes.

\subsection{Setup and Interpretation}\label{sec:setup}

A {\em vote distribution} for a candidate set~$C$ is a function
$\calD$ that assigns a probability to each preference order over~$C$.
Formally, we require that for each $v \in \calL(C)$ it holds that
$\calD(v) \geq 0$ and $\sum_{v \in \calL(C)}\calD(v) = 1$. We say that
a vote $v$ is in the {\em support} of $\calD$ if $\calD(v)>0$. Given
such a distribution, we can form an election by repeatedly drawing
votes according to the specified probabilities.
For example, we can sample each element of $\calL(C)$ with equal probability; this distribution, which is known as
\emph{impartial culture (IC)}, is denoted
by $\calD_\IC$ (we omit the
candidate set from our notation as it will always be clear from the
context).
The {\em frequency matrix of a vote distribution $\calD$} over a candidate set $C$ is
$\#\freq(\calD) = \sum_{v \in \calL(C)} \calD(v) \cdot \#\freq(v)$.
For example, we have $\#\freq(\calD_\IC) = \UN$.  One interpretation
of $\#\freq(\calD)$ is that the entry for a candidate~$c_j$ and a
position~$i$ is the probability that a vote $v$ sampled from
 $\calD$ has~$c_j$ in position~$i$ (which we denote as
$\prob[\pos_v(c_j) = i]$).  Another interpretation is that if we sample
a large number of votes then the resulting election's frequency matrix
would be close to $\#\freq(\calD)$ with high probability. More
formally, if we let $\calM_n$ be a random variable equal to the
frequency matrix of an $n$-voter election generated according
to~$\calD$, then it holds that
$\lim_{n\rightarrow \infty} \mathbb{E}(\calM_n) = \#\freq(\calD)$.

\subsection{Group-Separable Elections}
\label{sec:gs-mat}

We first consider
sampling group-separable votes.  Given a rooted
tree $\calT$ whose leaves are labeled by elements of $C=\{c_1, \dots, c_m\}$,
let $\calD_\groupsep^\calT$ be the distribution assigning equal
probability to all votes consistent with $\calT$, 
and zero probability to all other votes; 
one can think of $\calD_\groupsep^\calT$ as impartial culture
restricted to the group-separable subdomain defined by $\calT$. To sample from $\calD_\groupsep^\calT$, we can toss a fair coin for each internal node of $\calT$, reversing the order of its children if this coin comes up heads, and output
the frontier of the resulting tree.  We focus on 
the following types of trees:

\begin{enumerate}
\item $\Flat(c_1, \ldots, c_m)$ is a tree with a single internal node,
  whose children, from left to right, are $c_1, c_2, \ldots,
  c_m$. There are only two preference orders consistent with this
  tree, $c_1 \pref \cdots \pref c_m$ and its reverse.
  
\item $\Bal(c_1, \ldots, c_m)$ is a perfectly balanced binary tree
  with frontier $c_1, \ldots, c_m$ (hence we assume the number $m$ of
  candidates to be a power of two).

\item $\CP(c_1, \ldots, c_m)$ is a binary caterpillar tree: it has
  internal nodes $x_1, \ldots, x_{m-1}$; for each $j \in [m-2]$, $x_j$
  has $c_j$ as the left child and $x_{j+1}$ as the right one, whereas
  $x_{m-1}$ has both $c_{m-1}$ and $c_m$ as children.
\end{enumerate}
The first tree in Figure~\ref{fig:trees} is flat, the second one is
balanced, and the third one is a caterpillar tree.
If $\calT$ is a caterpillar tree, then we refer to
$\calD_\groupsep^\calT$ as the {\em GS/caterpillar distribution}. We use a similar terminology for the other trees.

\begin{theorem}\label{thm:gs-tress-matrix}
  Let $F$ be the frequency matrix of
  distribution $\calD_\groupsep^\calT$. If $\calT$ is flat then
  $F = \AN$, and if it is balanced then $F = \UN$. If $\calT$ is a
  caterpillar tree $\CP(c_1, \ldots, c_m)$, then for each candidate
  $c_j$ the probability 
  that $c_j$ appears in a position $i \in [m]$ in a random vote $v$
  sampled from $\calD_\groupsep^\calT$ is: 
  \[ \textstyle \frac{1}{2^j}{ j-1 \choose i-1}\cdot {\mathbbm 1}_{i \leq j}
 + \frac{1}{2^j}{ j-1 \choose (i-1) - (m-j)}\cdot {\mathbbm 1}_{i > m-j}.
 \]
\end{theorem}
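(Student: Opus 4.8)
The plan is to handle the three tree shapes separately, since the flat and balanced cases collapse to short symmetry arguments while the caterpillar case carries the real content. For the flat tree the only source of randomness is the single coin at the root, so with probability $\tfrac12$ the frontier is $c_1 \pref \cdots \pref c_m$ and with probability $\tfrac12$ it is the reverse; hence $c_j$ occupies position $j$ or position $m-j+1$, each with probability $\tfrac12$, and reading this off column by column is exactly the antagonism matrix $\AN$.

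For the balanced tree with $m = 2^h$ leaves, I would encode each position (counted from $0$) by its length-$h$ binary expansion, where the $k$-th bit records whether, descending from the root, one turns into the right half at level $k$. Fixing a candidate $c$, I look only at the $h$ coins sitting at the ancestors of $c$, one per level. The key claim is that flipping the coin at $c$'s level-$(k-1)$ ancestor swaps the two halves of that ancestor's block of $2^{h-k+1}$ positions, moving $c$ by exactly $2^{h-k}$ and therefore flipping precisely the $k$-th bit of $c$'s position while leaving the others untouched. Since the $h$ ancestor-coins are independent and fair, each bit of $c$'s position is an independent fair bit, so the position is uniform on $\{0,\dots,m-1\}$; as this holds for every candidate, $F = \UN$.

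The caterpillar case is where I expect the real work. The idea is to compute, for a fixed candidate $c_j$, the distribution of the number of candidates ranked above it, since its position is that number plus one. Write $X_1,\dots,X_{m-1}$ for the independent fair coins at $x_1,\dots,x_{m-1}$, with $X_k=1$ meaning $x_k$ is reversed. First take $j \le m-1$, so $c_j$ is the left child of $x_j$. For each $k<j$, the candidate $c_k$ is the left child of $x_k$ while the whole subtree containing $c_j$ is the right child, so $c_k \pref c_j$ holds exactly when $x_k$ is not reversed; these $j-1$ events are independent, and their count $L$ is $\mathrm{Binomial}(j-1,\tfrac12)$. By contrast, the remaining $m-j$ candidates $c_{j+1},\dots,c_m$ all lie in the single subtree rooted at $x_{j+1}$, so they all precede $c_j$ if $x_j$ is reversed and all follow it otherwise, governed by the single coin $X_j$. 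Conditioning on $X_j$ then splits into two cases: when $x_j$ is not reversed the position is $L+1$, ranging over $i \in [1,j]$ and occurring with probability $\tfrac12 \cdot \binom{j-1}{i-1}2^{-(j-1)} = \tfrac{1}{2^j}\binom{j-1}{i-1}$; when $x_j$ is reversed the position is $L+(m-j)+1$, ranging over $i > m-j$ and occurring with probability $\tfrac{1}{2^j}\binom{j-1}{(i-1)-(m-j)}$. Summing the two contributions reproduces the claimed formula.

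The last thing I would check is the boundary candidate $c_m$, which is the right child of $x_{m-1}$ rather than a left child and so is not covered by the argument above. Here for each $k \le m-2$ the event $c_k \pref c_m$ again equals ``$x_k$ not reversed,'' while $c_{m-1} \pref c_m$ equals ``$x_{m-1}$ not reversed''; the number of candidates above $c_m$ is thus a sum of $m-1$ independent fair bits, i.e.\ $\mathrm{Binomial}(m-1,\tfrac12)$, giving position probability $\tfrac{1}{2^{m-1}}\binom{m-1}{i-1}$ for every $i \in [m]$. Setting $j=m$ in the theorem's formula makes both indicators true with identical binomial coefficients, so the two terms coincide and add to exactly this value, confirming the formula on the boundary. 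The main obstacle is therefore not any single hard estimate but correctly disentangling the dependence structure: recognizing that the ``later'' block of candidates moves as a unit under the single coin $X_j$ while the ``earlier'' candidates are switched independently, and separately accounting for the exceptional right-child leaf $c_m$.
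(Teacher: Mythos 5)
Your proof is correct and follows essentially the same route as the paper's: for the caterpillar you condition on whether $x_j$ is reversed and then count the non-reversed coins among $x_1,\dots,x_{j-1}$, which is exactly the paper's argument phrased via a $\mathrm{Binomial}(j-1,\nicefrac{1}{2})$ variable. You additionally spell out the flat and balanced cases (which the paper dismisses as immediate) and explicitly verify the boundary candidate $c_m$, where the paper's reference to ``$x_j$'' is not literally defined; these are welcome clarifications but not a different method.
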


\begin{proof}
  The cases of flat and balanced trees are immediate, so we focus on
  caterpillar trees. Let $\calT = \CP(c_1, \ldots, c_m)$
  with internal nodes $x_1, \dots, x_{m-1}$, and
  consider a candidate~$c_j$ and a position $i \in [m]$. 
  Let $v$ be a random variable equal to a vote sampled from
  $\calD_\groupsep^\calT$.  
  We say that a node
  $x_\ell$, $\ell\in [m-1]$, is {\em reversed} if the order of its children is reversed.
  Note that for $\ell< r$ it holds that~$c_r$ precedes~$c_\ell$ in the frontier if and only if $x_\ell$ is reversed.
  Suppose that $x_j$ is not reversed. Then $v$ ranks $c_j$ above each of $c_{j+1}, \ldots, c_m$. This means that for $c_j$ to be ranked
  exactly in position $i$, it must be that $j \geq i$ and exactly
  $i-1$ nodes among $x_1, \ldots, x_{j-1}$ are not reversed. If $j \geq i$, the probability that $x_j$ and $i-1$ nodes among $x_1, \ldots, x_{j-1}$ are not reversed is $\frac{1}{2^j}\cdot { j-1 \choose i-1}$. 
  On the other hand, if $x_j$
  is reversed, then $v$ ranks candidates $c_{j+1}, \ldots, c_m$ above
  $c_j$. As there are $m-j$ of them, for $c_j$ to be ranked exactly
  in position $i$ it must hold that $i > m-j$ and exactly
  $(i-1) - (m-j)$ nodes among $x_1, \ldots, x_{j-1}$ are not
  reversed. This happens with probability
  $\frac{1}{2^j}\cdot { j-1 \choose (i-1) - (m-j)}$.
\end{proof}
  
Regarding distributions $\calD_\groupsep^\calT$ not handled in
Theorem~\ref{thm:gs-tress-matrix}, we still can compute their
frequence matrices efficiently.

\begin{restatable}{theorem}{gsdp}\label{thm:gs-dp}
  There is an algorithm that given a tree $\calT$ 
  computes $\#\freq(D_\groupsep^\calT)$ using polynomially many arithmetic
  operations with respect to the number of nodes in~$\calT$.
\end{restatable}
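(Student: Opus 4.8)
The plan is to reduce the computation of $\#\freq(\calD_\groupsep^\calT)$ to computing, for each candidate $c$ separately, the distribution of its position $\pos_v(c)$ in a random frontier $v$, and to obtain this distribution by a simple convolution (a dynamic program) along the root-to-leaf path of $c$. The column of $\#\freq(\calD_\groupsep^\calT)$ indexed by $c$ is exactly the vector $(\prob[\pos_v(c)=1],\dots,\prob[\pos_v(c)=m])$, so it suffices to compute these $m$ probabilities for each of the $m$ candidates.

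First I would establish the key structural observation: \emph{the position of $c$ in the frontier depends only on the coin tosses at the ancestors of $c$}, and the individual contributions are independent. Indeed, $\pos_v(c) = 1 + (\text{number of leaves appearing to the left of } c)$, and a leaf $\ell \neq c$ lies to the left of $c$ exactly when, at the least common ancestor $x$ of $c$ and $\ell$, the child subtree containing $\ell$ ends up to the left of the child subtree containing $c$. Whether this happens is governed solely by the reversal decision at $x$, which is an ancestor of $c$; reversals strictly inside sibling subtrees, or inside $c$'s own subtree, merely permute leaves within a contiguous block and do not change how many leaves fall to the left of $c$. Thus, writing $x_1, \dots, x_d$ for the internal nodes on the path from the root to $c$, and letting $c$ descend into some child of $x_t$, the number of leaves to the left of $c$ decomposes as $\sum_{t=1}^{d} Z_t$, where $Z_t$ takes the value $L_t$ (the total number of leaves in the left siblings of $c$'s subtree at $x_t$) when $x_t$ is not reversed and $R_t$ (the total number of leaves in the right siblings) when it is reversed, each with probability $\tfrac12$ and independently across $t$.

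Given this decomposition, I would compute the distribution of $\sum_t Z_t$ over $\{0, 1, \dots, m-1\}$ by convolving the $d$ two-point distributions one at a time. Concretely, maintain a vector $p$ of length $m$ initialized to the point mass at $0$, and for each ancestor $x_t$ replace $p$ by its convolution with the distribution placing mass $\tfrac12$ on $L_t$ and $\tfrac12$ on $R_t$. Since each factor has only two atoms and the support stays within $\{0,\dots,m-1\}$, each update costs $O(m)$ arithmetic operations, and the path has $O(m)$ ancestors, so the distribution for a single candidate costs $O(m^2)$ operations; the quantities $L_t, R_t$ are read off from subtree leaf counts, which are precomputed for the whole tree in one linear pass. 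Setting $\prob[\pos_v(c) = i] = p_{i-1}$ fills the column of $\#\freq(\calD_\groupsep^\calT)$ for $c$, and repeating over all candidates yields the full matrix in polynomially many operations. This also recovers Theorem~\ref{thm:gs-tress-matrix} as the caterpillar special case, where each $Z_t$ is two-point supported on $\{0,1\}$ (for $t<j$) or on $\{0,\,m-j\}$.

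The main obstacle is proving the structural decomposition cleanly — specifically, the independence and additivity of the $Z_t$ and the claim that only the ancestors' coins matter. The point to pin down carefully is that each leaf's left/right placement relative to $c$ is decided at a single well-defined node, its least common ancestor with $c$, whose coin is one of the $x_t$, so that grouping leaves by which $x_t$ decides them turns the total count into the sum $\sum_t Z_t$ of independent two-point variables. Once this is in place, the convolution step and the running-time bound are routine.
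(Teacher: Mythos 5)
Your proposal is correct and is essentially the paper's own argument: the paper defines $f(x,t)$ as the probability that node $x$ is preceded by exactly $t$ leaves in the frontier and uses the recurrence $f(x,t)=\frac12 f(p,t-\ell)+\frac12 f(p,t-r)$, which is exactly your convolution of independent two-point variables $Z_t\in\{L_t,R_t\}$ along the root-to-leaf path, justified by the same observation that only the coins at $c$'s ancestors affect how many leaves land to its left. The only difference is bookkeeping — the paper runs one top-down dynamic program over all nodes (reusing $f(p,\cdot)$ across siblings, $O(m^2)$ total) whereas you convolve separately per candidate ($O(m^3)$ total) — which is immaterial for the stated polynomial bound.
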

\begin{proof}
  Let $\calT$ be the input tree, and let $C = \{c_1, \ldots, c_m\}$ be
  its candidate set.  We will give an algorithm for computing the
  probability that a given candidate $c_j$ appears in position
  $i \in [m]$ in a vote sampled from
  $\calD_\groupsep^\calT$.

  Let $x$ be some node of $\calT$ (either internal or a leaf). Let $\calT_x$ be the tree obtained from $\calT$ by deleting all descendants of $x$, so that $x$ becomes a leaf, 
  and for each subset $S$ of internal nodes of $\calT_x$, let $\calT_x^S$ be the ordered tree obtained by starting with $\calT_x$ and reversing the nodes in the set $S$. For each $t \in \{0\} \cup [m-1]$ we define $f(x,t)$ to be the probability that if we reversed each internal node of $\calT_x$ with probability $\nicefrac{1}{2}$ then~$x$ would be preceded by exactly $t$ candidates in the frontier of the resulting ordered tree. We compute $f(x,t)$ using dynamic programming.

  Let $\mathit{root}$ be the root of $\calT$. Then
  $f(\mathit{root},0) = 1$, and $f(\mathit{root}, t)=0$ for $t\in [1, m-1]$.
  Next, let $x$ be some node of $\calT$
  other than the root, let $p$ be the parent of $x$, and let $\ell$
  and $r$ be the number of leaves that are descendants of 
  $x$'s left siblings and $x$'s right siblings in $\calT$,
  respectively. We claim that for each $t \in \{0\} \cup [m-1]$ we have:
  \[
    \textstyle f(x,t) = \frac{1}{2} f(p,t-\ell) + \frac{1}{2}f(p,t-r).
  \]
  To see why this formula is correct, observe that 
  if $p\not\in S$ then $x$ appears in position $t$ in the frontier of $\calT_x^S$ if and only if $p$ appears in position $t-\ell$ in the 
  frontier of $\calT^S_p$: indeed, in the frontier of $\calT_x^S$ the node $x$ appears after all predecessors of $p$ in the frontier of $\calT_p^S$
  as well as after the $\ell$ leaves that are the descendants of $x$'s left siblings in $\calT$. Similarly, if $p\in S$ then $x$ appears in position $t$ in the frontier of $\calT_x^S$ if and only if $p$ appears in position $t-r$ in the 
  frontier of $\calT_p^{S\setminus\{p\}}$: indeed, in the frontier of $\calT_x^S$ the node $x$ appears after all predecessors of $p$ in the frontier of $\calT_p^{S\setminus\{p\}}$
  as well as after the $r$ leaves that are the descendants of $x$'s right siblings in $\calT$. Since $p$ is reversed with probability $\frac12$, the recurrence follows.
  
  The above formula and standard dynamic programming allow us to compute
  all the values of $f$ using $O(m^2)$ arithmetic operations (note
  that there are at most $O(m)$ internal nodes). To complete the proof, observe that the probability that candidate $c_j$ ends up in position $i$ is $f(c_j,i-1)$.
\end{proof}

\subsection{From Caterpillars to Single-Peaked Preferences.}
\label{sec:gs-vs-sp}
There is a relationship between
GS/caterpillar votes and
single-peaked ones, which will
be very useful when computing one of the frequency matrices
in the next section.
To prove it, we use the following lemma, which is implicit in the work of \citet{FKO20}.

\begin{lemma}\label{lem:gs}
Let $\calT=\CP(c_1, \dots, c_m)$.
A ranking $v$ over $\{c_1, \dots, c_m\}$ belongs to the support of $\calD_\groupsep^\calT$
if and only if there exists a subset $C'\subseteq C\setminus\{c_m\}$
such that in $v$:
\begin{enumerate}
\item
  (1)~all alternatives in $C'$ are ranked above $c_m$ and all
  alternatives in $(C\setminus\{c_m\})\setminus C'$ are ranked below
  $c_m$;
\item
  (2)~for all $c_i, c_j\in C'$ with $i<j$ the alternative $c_i$ is
  ranked above $c_j$;
\item
  (3) for all $c_i, c_j\not\in C'\cup \{c_m\}$ with $i<j$ the
  alternative $c_i$ is ranked below $c_j$.
\end{enumerate}
\end{lemma}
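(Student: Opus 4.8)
The plan is to exploit a single structural fact about the caterpillar $\calT$ that already surfaces in the proof of \Cref{thm:gs-tress-matrix}: for any two candidates $c_\ell, c_r$ with $\ell < r$, their lowest common ancestor is the internal node $x_\ell$, whose left subtree is the single leaf $c_\ell$ and whose right subtree contains all of $c_{\ell+1}, \dots, c_m$. Hence reversals at ancestors of $x_\ell$ move this whole subtree as one block, reversals strictly inside either subtree cannot move $c_\ell$ and $c_r$ past one another, and only the reversal at $x_\ell$ itself swaps them. Writing $S \subseteq \{x_1, \dots, x_{m-1}\}$ for the set of reversed nodes, this yields the key equivalence: in the frontier of $\calT^S$, candidate $c_\ell$ precedes $c_r$ if and only if $x_\ell \notin S$. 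Since two votes are equal exactly when they agree on every pairwise comparison, this reduces the whole lemma to bookkeeping about a single bit per internal node. I would then set up the correspondence $x_\ell \in S \iff c_\ell \notin C'$ for $\ell \in [m-1]$, i.e.\ $C' = \{c_\ell : \ell < m,\ x_\ell \notin S\}$.

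For the forward direction, suppose $v$ lies in the support, so $v$ is the frontier of $\calT^S$ for some $S$, and let $C'$ be defined from $S$ as above. Condition~(1) is immediate from the key equivalence applied with $r = m$: one has $c_\ell \succ c_m$ iff $x_\ell \notin S$ iff $c_\ell \in C'$. Conditions~(2) and~(3) follow the same way: for $c_i, c_j \in C'$ with $i < j$ we have $x_i \notin S$, so $c_i \succ c_j$; and for $c_i, c_j \notin C' \cup \{c_m\}$ with $i < j$ we have $x_i \in S$, so $c_i \prec c_j$.

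For the reverse direction, assume $v$ is a vote and $C'$ a set satisfying~(1)--(3); I would define $S$ by $x_\ell \in S \iff c_\ell \notin C'$ and show $\mathrm{frontier}(\calT^S) = v$ by checking that the two orders agree on every pair $c_\ell, c_r$ with $\ell < r$. By the key equivalence the frontier ranks $c_\ell \succ c_r$ precisely when $c_\ell \in C'$, so I split on the $C'$-membership of the two candidates and match each case against~(1)--(3): the cases where both candidates lie in $C'$, or both lie outside $C' \cup \{c_m\}$, or one of them is $c_m$, are handled directly by~(2), (3), and~(1) respectively. The one step needing care---and the main (if mild) obstacle---is the \emph{mixed} pair, where exactly one of $c_\ell, c_r$ belongs to $C'$: here neither~(2) nor~(3) applies directly, and I would instead route the comparison through $c_m$, using condition~(1) to place the $C'$-member above $c_m$ and the non-member below it, then invoke transitivity of $v$. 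Once all cases check out, the frontier of $\calT^S$ agrees with $v$ on every pair and hence equals $v$, so $v$ is consistent with $\calT$ and lies in the support.
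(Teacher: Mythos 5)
Your proof is correct. The paper itself gives no proof of this lemma---it is cited as implicit in the work of \citet{FKO20}---so there is no argument of the authors' to compare against; you are supplying a self-contained proof where the paper defers to a reference. Your key equivalence (for $\ell<r$, candidate $c_\ell$ precedes $c_r$ in the frontier of $\calT^S$ if and only if $x_\ell\notin S$) is exactly the observation the paper records, also without proof, inside the proof of \Cref{thm:gs-tress-matrix}, and your justification of it via the lowest-common-ancestor structure of the caterpillar is sound: ancestor reversals move the subtree rooted at $x_\ell$ as a block, reversals below $x_\ell$ permute only within one side, and only the reversal at $x_\ell$ itself exchanges $c_\ell$ with the block $\{c_{\ell+1},\dots,c_m\}$. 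Given that equivalence, the correspondence $C'=\{c_\ell: \ell<m,\ x_\ell\notin S\}$ makes the forward direction immediate, and your reverse direction correctly identifies the only nontrivial case---the mixed pair with exactly one candidate in $C'$---and resolves it by routing through $c_m$ via condition~(1) and transitivity. Since two total orders agreeing on all pairs coincide, the argument is complete.
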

That is, in $v$ the alternatives in $C'$ appear in the increasing order of indices, followed by $c_m$, followed by the remaining alternatives in the decreasing order of indices,
i.e., the sequence of indices in $v$ is ``single-peaked''.
Using this observation, we
establish a bijection between the votes in the support of
$\calD_\groupsep^\calT$ and single-peaked votes.

\begin{restatable}{theorem}{gstosp}\label{thm:gs-to-sp} 
Given a ranking $v$ over $C=\{c_1, \dots, c_m\}$, let $\widehat{v}$
be another ranking over $C$ such that, for each $j\in [m]$, if $c_j$ is ranked in position $i$ in $v$ then 
$c_i$ is ranked in position $m-j+1$ in $\widehat{v}$.
Suppose that $v$ is in the support of $\calD_\groupsep^\calT$, 
where $\calT=\CP(c_1, \dots, c_m)$. Then
$\widehat{v}$ is single-peaked with respect to $c_1\lhd\dots\lhd c_m$. 
\end{restatable}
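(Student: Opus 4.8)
The plan is to reduce single-peakedness of $\widehat{v}$ to a unimodality property of the sequence of candidate indices read off from $v$, which is exactly what \Cref{lem:gs} supplies. First I would rewrite the definition of $\widehat v$ purely in terms of positions. For $i\in[m]$, let $\pi(i)$ denote the index of the candidate occupying position $i$ in $v$, so that $c_{\pi(i)}$ sits in position $i$. Unwinding the defining rule of $\widehat v$---if $c_j$ is in position $i$ in $v$ then $c_i$ is in position $m-j+1$ in $\widehat v$---by substituting $j=\pi(k)$ (the unique index with $\pos_v(c_{\pi(k)})=k$, so that the role of $i$ is played by $k$) would yield the clean formula
\[
  \pos_{\widehat{v}}(c_k) = m - \pi(k) + 1
  \quad\text{for all } k\in[m].
\]
Thus $c_k$ is ranked high in $\widehat v$ precisely when $\pi(k)$ is large.

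Next I would invoke \Cref{lem:gs}. Writing $v$ in the form it guarantees---an increasing run of the indices in $C'$, then $c_m$, then a decreasing run of the remaining indices---shows that the sequence $\pi(1),\pi(2),\dots,\pi(m)$ is strictly unimodal: it strictly increases up to the position of $c_m$, where it attains its maximum value $m$, and strictly decreases afterwards. (The degenerate cases $C'=\emptyset$ or $C'=C\setminus\{c_m\}$ merely make $\pi$ monotone, which is still unimodal.)

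Finally, I would verify the interval condition of the definition directly. For each $t\in[m]$, the position formula gives that the top $t$ candidates of $\widehat v$ are $\{c_k : \pos_{\widehat{v}}(c_k)\le t\}=\{c_k : \pi(k)\ge m-t+1\}$. Since $\pi$ is unimodal, every super-level set $\{k : \pi(k)\ge T\}$ is a set of consecutive integers: it is the union of a suffix of the increasing run and a prefix of the decreasing run, both containing the peak index, hence an interval $[\alpha,\beta]$. Therefore the top $t$ candidates form an interval of the axis $c_1\lhd\dots\lhd c_m$, and as this holds for every $t$, the vote $\widehat v$ is single-peaked with respect to that axis.

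I expect the only genuine subtlety to be the index bookkeeping when passing from the defining rule to the position formula---in particular getting the reversal $m-\pi(k)+1$ and the inequality direction ($\pi(k)\ge m-t+1$) correct---whereas the core observation that unimodal sequences have interval super-level sets is immediate.
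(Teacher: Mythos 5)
Your proof is correct and follows essentially the same route as the paper's: both hinge on the position formula $\pos_{\widehat v}(c_i)=m-\pi(i)+1$ together with Lemma~\ref{lem:gs}, which the paper itself paraphrases as saying that the index sequence of $v$ is unimodal. The only (cosmetic) difference is that you verify the interval definition of single-peakedness directly via super-level sets of the unimodal sequence, whereas the paper checks the equivalent pairwise condition that candidates on each side of the peak $c_z$ are ranked in order of their distance from it.
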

\begin{proof}
Suppose that $c_m$ is ranked in position $z$ in $v$. Then $c_z$ is ranked first in $\widehat{v}$. 

Consider two candidates $c_x, c_y$ with $x<y<z$. We will prove that in $\widehat{v}$ candidate $c_y$ is ranked above $c_x$, i.e.,  $\pos_{\widehat{v}}(c_x) > \pos_{\widehat{v}}(c_y)$. 
Let $k = \pos_{\widehat{v}}(c_x)$, $\ell= \pos_{\widehat{v}}(c_y)$.
Then, in $v$, alternative $c_{m-k+1}$ is ranked in position $x$ and alternative $c_{m-\ell+1}$ is ranked in position $y$. As we have $x<y<z$ and $v$
is sampled from $\calD_\groupsep^\calT$, 
by Lemma~\ref{lem:gs} we have $m-k+1 < m-\ell+1$, and hence $k>\ell$. 

Similarly, if we have two alternatives
$c_{x}, c_{y}$ with $z<y<x$, we can show that
$\pos_{\widehat{v}}(c_x) > \pos_{\widehat{v}}(c_y)$.
Thus, $\widehat{v}$ is single-peaked 
with respect to $c_1\lhd\dots\lhd c_m$, as claimed.
\end{proof}

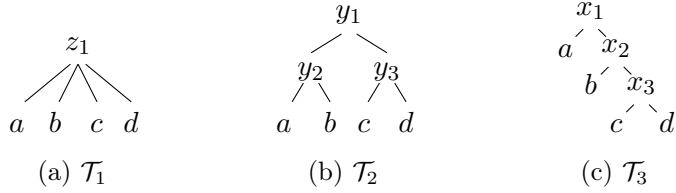
\begin{figure}
  \newcommand{\myscale}{1}
  \centering
  \begin{subfigure}[b]{0.2\columnwidth}
    \centering
    \begin{tikzpicture}[scale=\myscale]
      \node[anchor=south] at (0,0) {$z_1$};
      \draw (-0.1,0) -- (-0.7,-0.5);
      \draw (0,0) -- (-0.25,-0.5);
      \draw (0,0) -- ( 0.25,-0.5);
      \draw (0.1,0) -- (0.7,-0.5);
      \node[anchor=north] at (-0.7,-0.5) {$a$\phantom{d}};
      \node[anchor=north] at (-0.3,-0.5) {$b$};
      \node[anchor=north] at ( 0.35,-0.5) {$c$\phantom{d}};
      \node[anchor=north] at (0.7,-0.5) {$d$};
    \end{tikzpicture}
    \caption{$\calT_1$}
  \end{subfigure}\quad
  \begin{subfigure}[b]{0.2\columnwidth}
    \centering
    \begin{tikzpicture}[scale=\myscale]
      \node[anchor=south] at (0,0) {$y_1$};
      \draw (-0.1,0) -- (-0.5,-0.25);
      \draw ( 0.1,0) -- ( 0.5,-0.25);
      \node[anchor=north] at (-0.5,-0.25) {$y_2$};
      \node[anchor=north] at (0.5,-0.25) {$y_3$};
      \draw (-0.6,-0.65) -- (-0.75,-0.9);
      \node[anchor=north] at (-0.75,-0.9) {$a$\phantom{d}};
      \draw (-0.4,-0.65) -- (-0.25,-0.9);
      \node[anchor=north] at (-0.25,-0.9) {$b$};
      \draw (0.6,-0.65) -- (0.75,-0.9);
      \node[anchor=north] at (0.75,-0.9) {$d$};
      \draw (0.4,-0.65) -- (0.25,-0.9);
      \node[anchor=north] at (0.3,-0.9) {$c$\phantom{d}};
    \end{tikzpicture}
    \caption{$\calT_2$}
  \end{subfigure}\quad
  \begin{subfigure}[b]{0.2\columnwidth}
  \centering
    \begin{tikzpicture}[scale=\myscale]
      \foreach \i/\n  in {1/a,2/b,3/c} {
        \node[anchor=south] at (0+\i/3-1,0-\i/2-1) {$x_\i$};
        \draw (0+\i/3-1+0.1,0-\i/2-1) -- (0+\i/3+0.1-1+0.1,0-\i/2-1-0.1);
        \draw (0+\i/3-1-0.1,0-\i/2-1) -- (0+\i/3-0.1-1-0.1,0-\i/2-1-0.1);
        \node[anchor=south] at (0+\i/3-1/3-1,0-\i/2-1/2-1) {$\n$};
      }
      \renewcommand{\i}{3}
      \node[anchor=south] at (0+\i/3+1/3-1,0-\i/2-1/2-1) {$d$};
    \end{tikzpicture}
    \caption{$\calT_3$}
  \end{subfigure}
  \caption{\label{fig:trees}Three examples of clone decomposition trees.}
\end{figure}

There are exactly $2^{m-1}$ votes in the support of $\calD_\groupsep^\calT$ (this follows by simple counting)
and there are $2^{m-1}$ votes that are single-peaked with respect to $c_1\lhd\dots\lhd c_m$. 
As $u\neq v$ implies $\widehat{u}\neq \widehat{v}$, it follows that the mapping $v\mapsto\widehat{v}$ is a bijection between all votes in the support of $\calD_\groupsep^\calT$ and all votes that are 
single-peaked with respect to $c_1\lhd\dots\lhd c_m$.

\subsection{Single-Peaked 
Elections}
\label{sec:sp-mat}
We consider two models of generating single-peaked elections, one due
to \citet{wal:t:generate-sp} and one due to
\citet{con:j:eliciting-singlepeaked}. Let us fix a candidate set
$C = \{c_1, \ldots, c_m\}$ and a societal axis
$c_1 \lhd \cdots \lhd c_m$.  Under the Walsh distribution, denoted
$\calD_{\SP}^{\Wal}$, each vote that is single-peaked with respect to $\lhd$
has equal probability (namely, $\frac{1}{2^{m-1}}$), 
and all other votes have probability zero.
By Theorems~\ref{thm:gs-tress-matrix} and~\ref{thm:gs-to-sp}, 
we immediately obtain the frequency matrix
for the Walsh distribution (in short, it is the transposed
matrix of the GS/caterpillar distribution).
\begin{corollary}
  Consider a candidate set $C = \{c_1, \ldots, c_m\}$ and an axis
  $c_1 \lhd \cdots \lhd c_m$. The probability that candidate $c_j$ appears in position $i$ in a vote sampled from $\calD_{\SP}^{\Wal}$ is: $\frac{1}{2^{m-i+1}}{ m-i \choose j-1}\cdot {\mathbbm 1}_{j \le m-i+1}
    + \textstyle\frac{1}{2^{m-i+1}}{ m-i \choose j-i}\cdot {\mathbbm 1}_{j > i-1}$.
\end{corollary}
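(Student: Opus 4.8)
The plan is to transport the explicit caterpillar formula of Theorem~\ref{thm:gs-tress-matrix} across the bijection $v\mapsto\widehat v$ supplied by Theorem~\ref{thm:gs-to-sp}. Write $\calT=\CP(c_1,\dots,c_m)$. Both $\calD_\groupsep^\calT$ and $\calD_{\SP}^{\Wal}$ are \emph{uniform} distributions over supports of size $2^{m-1}$: the former over the votes consistent with $\calT$, the latter over the votes single-peaked with respect to $c_1\lhd\cdots\lhd c_m$. Since $v\mapsto\widehat v$ is a bijection between these two supports (as recorded just before this corollary), pushing the uniform distribution on the caterpillar support forward along this map yields exactly the uniform distribution on the single-peaked votes. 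Hence sampling $\widehat v$ from $\calD_{\SP}^{\Wal}$ is the same as sampling $v$ from $\calD_\groupsep^\calT$ and forming $\widehat v$.

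First I would turn the defining relation of $\widehat v$ into a statement about positions. By Theorem~\ref{thm:gs-to-sp}, if $\pos_v(c_a)=b$ then $\pos_{\widehat v}(c_b)=m-a+1$, and this correspondence is an equivalence for each fixed $v$. To detect the event $\pos_{\widehat v}(c_j)=i$ I set $c_b=c_j$ (so $b=j$) and $m-a+1=i$ (so $a=m-i+1$); this identifies the event with $\pos_v(c_{m-i+1})=j$. Because the map is measure-preserving in the sense above, this gives
\[
\prob[\pos_{\widehat v}(c_j)=i]=\prob[\pos_{v}(c_{m-i+1})=j],
\]
where $v\sim\calD_\groupsep^\calT$; in matrix terms, the Walsh matrix is obtained from the caterpillar matrix by the index transform $(i,j)\mapsto(j,\,m-i+1)$, which realizes the ``transposed matrix'' relationship noted in the text preceding the corollary.

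It then remains to substitute this into the caterpillar formula of Theorem~\ref{thm:gs-tress-matrix}, evaluated at position $j$ and candidate index $m-i+1$. Writing $j'=m-i+1$ for the candidate index and using $j$ as the position, the prefactor becomes $\tfrac{1}{2^{j'}}=\tfrac{1}{2^{m-i+1}}$; the first binomial becomes $\binom{j'-1}{j-1}=\binom{m-i}{j-1}$ with indicator $j\le j'=m-i+1$, and the second becomes $\binom{j'-1}{(j-1)-(m-j')}=\binom{m-i}{j-i}$ with indicator $j> m-j'=i-1$, since $m-j'=i-1$. This is exactly the claimed expression.

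The only thing to get right is the index bookkeeping: one must check that the two indicator conditions $i\le j$ and $i>m-j$ appearing in Theorem~\ref{thm:gs-tress-matrix} translate to $j\le m-i+1$ and $j>i-1$, and that $(i'-1)-(m-j')$ simplifies to $j-i$ under the substitution $i'=j$, $j'=m-i+1$. I do not anticipate any genuine difficulty beyond this substitution, since all of the probabilistic content is already carried by Theorems~\ref{thm:gs-tress-matrix} and~\ref{thm:gs-to-sp}; the main obstacle is simply verifying the reflection-and-swap of indices does not introduce an off-by-one error.
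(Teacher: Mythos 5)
Your proposal is correct and matches the paper's intended argument: the paper derives this corollary "immediately" from Theorems~\ref{thm:gs-tress-matrix} and~\ref{thm:gs-to-sp} via exactly the bijection-and-index-substitution you describe, and your bookkeeping (the event $\pos_{\widehat v}(c_j)=i$ corresponding to $\pos_v(c_{m-i+1})=j$, then substituting $j'=m-i+1$, $i'=j$ into the caterpillar formula) checks out.
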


To sample a vote from the Conitzer distribution, $\calD_\SP^\Con$ (also known as the {\em random peak distribution}),
we pick some candidate
$c_j$ uniformly at random and rank him or her on top. Then we perform
$m-1$ iterations, where in each we choose (uniformly at random)
a candidate directly to the right or the left of the
already selected ones, and place him or her in the highest available
position in the vote.

\begin{restatable}{theorem}{conmain}
Let $c_1 \lhd \cdots \lhd c_m$ be the societal axis, where
  $m$ is an even number, and let $v$ be a random vote sampled from
  $\calD^\Con_\SP$ for this axis. For $j \in [\frac{m}{2}]$ and
  $i \in [m]$ we have:
  \[
    \prob[ \pos_v(c_j) = i ] = \begin{cases}
      \nicefrac{2}{2m}      & \text{if  $i < j$,} \\
      \nicefrac{(j+1)}{2m}   & \text{if $i = j$,} \\
      \nicefrac{1}{2m}     & \text{if $j < i < m - j + 1$,} \\
      \nicefrac{(m-j+1)}{2m} & \text{if $i = m-j+1$,} \\
      0 & \text{if $i+j > m$.}
    \end{cases}
  \]
  Further, for each candidate $c_j \in C$ and each position
  $i \in [m]$ we have
  $\prob[ \pos_v(c_j) = i] = \prob[ \pos_v(c_{m-j+1}) = i]$.
\end{restatable}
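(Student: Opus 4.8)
The plan is to track the set of already-placed candidates, which by single-peakedness is always a contiguous block of the axis that grows by one endpoint at each step. First I would record the elementary but crucial observation that $\pos_v(c_j)=i$ holds if and only if the first $i$ candidates placed form a length-$i$ interval of the axis having $c_j$ as one of its two endpoints, with $c_j$ placed last; equivalently, $c_j$ is appended either as the new left end of $[c_j,\dots,c_{j+i-1}]$ or as the new right end of $[c_{j-i+1},\dots,c_j]$ (for $i=1$ these coincide and just say that $c_j$ is the peak). This reduces the statement to computing, for each interval $I$, the probability that $I$ is exactly the set of the first $|I|$ placed candidates.

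Accordingly, I would define $P(a,b)$ to be the probability that after $b-a+1$ placements the placed set equals $\{c_a,\dots,c_b\}$, and set up the one-step recursion $P(a,b)=P(a+1,b)\cdot p_L + P(a,b-1)\cdot p_R$, where $p_L,p_R$ are the probabilities of the final extending move. The only subtlety is that a move is taken with probability $\tfrac12$ when both directions are available and with probability $1$ when one side is exhausted: concretely $p_L=\tfrac12$ unless $b=m$ (then forced) and $p_R=\tfrac12$ unless $a=1$ (then forced), with the base case $P(k,k)=\tfrac1m$ coming from the uniform choice of peak. Solving this recursion by induction on the interval length, I expect $P(a,b)=\tfrac1m$ whenever the interval is interior ($a>1$ and $b<m$), and $P(1,b)=\tfrac{b+1}{2m}$ for a left-anchored interval (with the mirror value $P(a,m)=\tfrac{m-a+2}{2m}$ on the right), since each interior extension contributes a factor $\tfrac12$ that is exactly compensated by the two ways of building the interval, whereas an anchored interval loses that factor on its forced side.

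I would then assemble $\prob[\pos_v(c_j)=i]$ for $i\ge 2$ as the sum of a ``right-end'' term and a ``left-end'' term: the right-end term equals $P(j-i+1,j-1)$ times the probability of the final rightward step, and is present only when $i\le j$; the left-end term equals $P(j+1,j+i-1)$ times the probability of the final leftward step, and is present only when $i\le m-j+1$. For $j\le\tfrac m2$ the right-end inner interval $[c_{j-i+1},c_{j-1}]$ never reaches the right boundary and the left-end inner interval $[c_{j+1},c_{j+i-1}]$ never reaches the left boundary, so the only anchored cases that arise are the left-anchored $[c_1,c_{j-1}]$ at $i=j$ (giving $\tfrac{j}{2m}$ with a forced final step) and the right-anchored $[c_{j+1},c_m]$ at $i=m-j+1$ (giving $\tfrac{m-j+1}{2m}$ with a forced final step); every other inner interval is interior and contributes $\tfrac1m\cdot\tfrac12=\tfrac1{2m}$. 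Combining the two terms case by case on the position of $i$ relative to $j$ and to $m-j+1$ then yields the five-way formula (adding the peak value $\tfrac1m=\tfrac{2}{2m}$ at $i=1$), and a final check that the five cases sum to $1$ guards against a bookkeeping slip.

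Finally, the symmetry $\prob[\pos_v(c_j)=i]=\prob[\pos_v(c_{m-j+1})=i]$ follows without any computation: reversing the axis via $c_\ell\mapsto c_{m+1-\ell}$ is an involution under which the sampling process is invariant, since the peak is chosen uniformly and the two extension directions are interchangeable, so the law of $v$ under $\calD^\Con_\SP$ is unchanged by this relabeling; this also lets the explicit formula for $j\le\tfrac m2$ cover all $j$. The main obstacle I anticipate is not conceptual but the careful boundary bookkeeping in the middle two steps, namely deciding for each interval whether the relevant step is free or forced; keeping the restriction $j\le\tfrac m2$ is exactly what prevents both boundaries from interfering simultaneously and keeps the closed forms for $P$ clean.
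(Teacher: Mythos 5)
Your proposal is correct and follows essentially the same route as the paper: your interval probability $P(a,b)$ is exactly the paper's auxiliary function $f(a,b)$ (Proposition~A.1), with the same recursion, the same free-versus-forced boundary bookkeeping, the same closed forms ($\nicefrac{1}{m}$ interior, $\nicefrac{(b+1)}{2m}$ left-anchored, mirror on the right), and the same left-end/right-end case assembly for $\prob[\pos_v(c_j)=i]$. The symmetry claim is likewise handled identically, via invariance of the sampling process under reversal of the axis.
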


\subsection{Mallows Model}\label{sec:Mallows}
Finally, we consider the classic Mallows distribution. It has two
parameters, a central vote $v^*$ over~$m$ candidates and a dispersion
parameter $\phi \in [0,1]$. The probability of sampling a vote $v$
from this distribution (denoted $\calD_\Mallows^{v^*,\phi}$) is:
$\calD_\Mallows^{v^*,\phi}(v)=\frac{1}{Z}\phi^{\kappa(v,v^*)},$
where
$Z=1\cdot (1+\phi)\cdot (1+\phi+\phi^2)\cdot \dots
\cdot(1+\dots+\phi^{m-1})$ is a normalizing constant and
$\kappa(v,v^*)$ is the swap distance between $v$ and $v^*$ (i.e., the
number of swaps of adjacent candidates needed to transform $v$
into $v^*$).
In our experiments, we 
consider a new
parameterization,
introduced by
\citet{boe-bre-fal-nie-szu:t:compass}.  It
uses a \emph{normalized dispersion parameter} $\normphi$, which is
converted to a value of $\phi$ so that the expected swap distance
between the central vote $v^*$ and a sampled vote $v$ is
$\frac{\normphi}{2}$ times the maximum swap distance between two votes
(so, $\normphi=1$ is equivalent to $\IC$ and for $\normphi=0.5$
we get elections that lie close to the middle of the $\UN$--$\ID$
path).  

Our goal is now to compute the frequency matrix of $\calD_\Mallows^{v^*,\phi}$. That is, given the
candidate ranked in position $j$ in the central vote, 
we want
to compute the probability that he or she appears in a given position
$i \in [m]$ in the sampled vote.
Given a positive integer $m$, consider the candidate set
$C(m) = \{c_1, \ldots, c_m\}$ and the central vote
$v^*_m\colon c_1 \pref \cdots \pref c_m$.  Fix a candidate $c_j \in C(m)$, and a position $i \in [m]$. For every integer $k$ between $0$ and $\nicefrac{m(m-1)}{2}$, let
$S(m,k)$ be the number of votes in $\calL(C(m))$ that are at swap
distance $k$ from $v^*_m$, and define $T(m,k,j,i)$ to be the number
of such votes that have $c_j$ in position $i$.  One can compute
$S(m,k)$ in time polynomial in $m$ \citep{oeis}; using $S(m,k)$,
we 
show that the same holds for $T(m,k,j,i)$. 
\begin{restatable}{lemma}{lmallows}\label{lem:mallows}
There is an algorithm that computes $T(m,k,j,i)$ in polynomial time
  with respect to $m$.
\end{restatable}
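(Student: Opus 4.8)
\emph{Plan.} The first move is to translate the statement into a pure permutation-counting problem and then to exploit the multiplicativity of the inversion generating function. Since the central vote is $v^*_m\colon c_1\pref\cdots\pref c_m$, the swap distance of a vote from $v^*_m$ is exactly its number of inversions; writing a vote as a permutation $\sigma\in S_m$ (candidate $c_{\sigma(i)}$ sits in position $i$), the condition ``$c_j$ is in position $i$'' becomes $\sigma(i)=j$. Thus $T(m,k,j,i)$ is the number of $\sigma\in S_m$ with exactly $k$ inversions and $\sigma(i)=j$. I would work with the generating polynomial $T_{j,i}(q)=\sum_k T(m,k,j,i)\,q^k$: proving the lemma amounts to showing that all its coefficients can be produced with polynomially many arithmetic operations, and reading off the coefficient of $q^k$.

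The key idea is a three-way classification together with an inversion decomposition. Call the remaining values \emph{small} (the $j-1$ values below $j$) and \emph{large} (the $m-j$ values above $j$), and split the positions other than $i$ into the $i-1$ to the left of position $i$ and the $m-i$ to its right. The inversions of $\sigma$ split into four groups: (a) inversions among small values, (b) inversions among large values, (c) small--large inversions, and (d) inversions involving the value $j$. Groups (a) and (b) depend only on the internal orderings of the two classes, so summed over all such orderings they contribute the Mahonian factors $[j-1]_q!$ and $[m-j]_q!$, where $[t]_q!=\prod_{s=1}^{t}(1+q+\cdots+q^{s-1})$, whose coefficients are exactly the numbers $S(t,\cdot)$ already available to us. Groups (c) and (d) depend only on how the small/large labels are distributed over the positions. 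Concretely, (d) equals (\#large on the left) $+$ (\#small on the right), while (c) is the sum of within-left, within-right, and across-block small--large inversions. Summing $q^{(c)+(d)}$ over all label patterns with a prescribed left/right split $s_L$ (= number of small values placed to the left), the within-block contributions become Gaussian binomials and the across-block contribution is a single monomial, giving
\[
  T_{j,i}(q)=[j-1]_q!\,[m-j]_q!\sum_{s_L} q^{(i-1-s_L)+(j-1-s_L)+(i-1-s_L)(j-1-s_L)}\binom{i-1}{s_L}_q\binom{m-i}{\,j-1-s_L}_q,
\]
the sum ranging over the $O(m)$ admissible $s_L$ (those keeping all of $s_L$, $i-1-s_L$, $j-1-s_L$, $m-j-(i-1-s_L)$ nonnegative).

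To conclude, I would invoke the standard recurrence $\binom{N}{a}_q=\binom{N-1}{a-1}_q+q^{a}\binom{N-1}{a}_q$, so that every $[t]_q!$ and every $\binom{N}{a}_q$ is a polynomial of degree $O(m^2)$ computable with polynomially many additions and multiplications; the right-hand side above is then a product and $O(m)$-term sum of such polynomials, and extracting the coefficient of $q^k$ yields $T(m,k,j,i)$ in polynomial time. The main obstacle is purely the bookkeeping of the cross-class inversions, i.e.\ getting groups (c) and (d) right---in particular the across-block term $q^{(i-1-s_L)(j-1-s_L)}$ and the range constraints on $s_L$; once the decomposition is pinned down, the rest is routine, and I would sanity-check it via the identity $\sum_i T(m,k,j,i)=S(m,k)$ and small cases. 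If one prefers to sidestep the closed form entirely, the same decomposition can instead be run as a straightforward dynamic program that inserts candidates $c_1,\dots,c_m$ one at a time while tracking the running inversion count and the running position of $c_j$, which already gives a polynomial ($O(m^5)$) algorithm and suffices for the lemma.
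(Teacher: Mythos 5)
Your argument is correct, but it takes a genuinely different route from the paper. The paper never writes a closed form: it observes that inserting $c_m$ into a ranking of $\{c_1,\dots,c_{m-1}\}$ at the position right after the $(i-1)$-st candidate adds exactly $m-i$ swaps, which yields the base case $T(m,k,m,i)=S(m-1,k-(m-i))$ and, for $j<m$, the recurrence
\begin{align*}
T(m,k,j,i) &= \sum_{\ell=i+1}^{m} T(m-1,k-(m-\ell),j,i) + \sum_{\ell=1}^{i-1} T(m-1,k-(m-\ell),j,i-1),
\end{align*}
distinguishing whether $c_m$ lands after or before $c_j$; this is essentially the ``insert candidates one at a time while tracking the inversion count and the position of $c_j$'' dynamic program you relegate to your last sentence. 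Your main proposal instead derives an explicit factorization of the generating polynomial $T_{j,i}(q)$ into Mahonian factors, Gaussian binomials, and a cross-block monomial, via the four-way inversion decomposition (small--small, large--large, small--large, and pairs involving $j$). I checked your formula on small cases (e.g.\ $m=3$, $j=i=2$ gives $1+q^3$, matching the two permutations $(1,2,3)$ and $(3,2,1)$) and the bookkeeping of groups (c) and (d) is right: $(d)=(i-1-s_L)+(j-1-s_L)$ and the across-block term is $q^{(i-1-s_L)(j-1-s_L)}$. What your approach buys is structural insight --- the polynomial factorizes, the dependence on $(j,i)$ is isolated in a short sum over $s_L$, and one could read off asymptotics or symmetries such as $\sum_i T(m,k,j,i)=S(m,k)$ directly; what the paper's recurrence buys is a two-line correctness argument with no combinatorial bookkeeping to get wrong. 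Both comfortably establish the polynomial bound, since all polynomials involved have degree $O(m^2)$.
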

\begin{proof}
  Our algorithm is based on dynamic programming. Fix
  some $m > 0$, $k \in [\nicefrac{m(m-1)}{2}]$, and $j, i \in [m]$.
  We claim that:
  \[
    T(m,k,m,i) = S(m-1,k-(m-i)).
  \]
  Indeed, let $v$ be a vote over $C(m)$ that ranks $c_m$ in position $i$, and let $v'$ be its restriction to $\{c_1, \dots, c_{m-1}\}$. Then 
  $v$ can be obtained from $v'$ by inserting $c_m$ right behind the candidate in position $i-1$. If $v'$ is at swap distance $k'$ from
  $v^*_{m-1}$, then the resulting vote is at swap distance $k'+ (m-i)$ from $v^*_m$, since $c_m$ contributes $m-i$ additional swaps. Thus, $T(m, k, m, i)$ is equal to the number of votes
  in $\calL(C(m-1))$ at swap distance $k-(m-i)$ from $v^*_{m-1}$.

  Next, we claim that for each $j < m$, we have:
  \begin{align*}
    \textstyle
    T(m,k,j,i) &= 
    \textstyle \sum_{\ell=i+1}^{m} T[m-1,k-(m-\ell),j,i]  \\
    &+ \textstyle \sum_{\ell=1\phantom{+1}}^{i-1} T[m-1,k-(m-\ell),j,i-1].
  \end{align*}
  To see why this holds, again consider inserting $c_m$ at some
  position in a vote $v'$ over $\{c_1, \dots c_{m-1}\}$. Candidate $c_j$ will end up in position $i$ in the resulting vote if (1)
  $c_j$ was in position $i$ in $v'$ and $c_m$ was inserted after $c_j$, or if (2) $c_j$ was in
  position $i-1$ in $v'$ and $c_m$ was inserted ahead of $c_j$.
  Considering all positions in which $c_m$ can be inserted, 
  we obtain the above equality.

  Using the above equalities and the fact that $T(1,0,1,1) = 1$ (as there
  is just a single vote over $C(1)$), we can compute $T(m,k,j,i)$ by dynamic programming; our algorithm runs in polynomial time with respect to $m$.
\end{proof}

We can now express the probability of
sampling a vote $v$, where the candidate ranked in position $j$
in the central vote $v^*$ ends up in position $i$ under $\calD_\Mallows^{v^*, \phi}$, as:
\begin{equation}\label{eq:fm}
     f_m(\phi,j,i) = \textstyle \frac{1}{Z}\sum_{k=0}^{\nicefrac{m(m-1)}{2}} T(m,k,j,i) \phi^k.
\end{equation}
The correctness follows from the definitions of $T$
and $\calD_\Mallows^{v^*, \phi}$.  By Lemma~\ref{lem:mallows}, we have the following.
\begin{theorem}
There exists an algorithm that, given a number $m$ of candidates,
a vote $v^*$, and a parameter $\phi$,  
computes the frequency matrix of $\calD_\Mallows^{v^*, \phi}$
using polynomially many operations in~$m$.
\end{theorem}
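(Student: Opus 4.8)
The plan is to reduce the theorem to \Cref{lem:mallows} together with the closed form in \eqref{eq:fm}. First I would argue that it suffices to handle the canonical central vote $v^*_m\colon c_1 \pref \cdots \pref c_m$. The point is that the swap distance $\kappa(\cdot,\cdot)$ is invariant under a simultaneous relabeling of the candidates, so after relabeling candidates so that $v^*$ becomes $v^*_m$, the distribution $\calD_\Mallows^{v^*,\phi}$ coincides with $\calD_\Mallows^{v^*_m,\phi}$. Hence the probability that the candidate occupying position $j$ in $v^*$ lands in position $i$ of a sampled vote equals exactly $f_m(\phi,j,i)$, the quantity defined for the canonical central vote. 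The frequency matrix of $\calD_\Mallows^{v^*,\phi}$ is then obtained by evaluating $f_m(\phi,j,i)$ for all $i,j\in[m]$ and placing the column $(f_m(\phi,j,1),\dots,f_m(\phi,j,m))$ into the position of the candidate ranked $j$-th in $v^*$.

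Second, I would assemble each matrix entry from the precomputed combinatorial quantities. By \Cref{lem:mallows}, every value $T(m,k,j,i)$ can be produced in polynomial time; concretely, the dynamic program fills a table indexed by $(m',k,j,i)$ so that all the needed values $T(m,k,j,i)$ for $k\in\{0,\dots,\nicefrac{m(m-1)}{2}\}$ and $i,j\in[m]$ become available at once. I would also precompute the powers $\phi^0,\phi^1,\dots,\phi^{\nicefrac{m(m-1)}{2}}$ by successive multiplications, and the normalizing constant $Z=\prod_{t=1}^{m}(1+\phi+\cdots+\phi^{t-1})$, both using $O(m^2)$ arithmetic operations. With these in hand, each entry is computed directly from \eqref{eq:fm} as a single sum $\frac{1}{Z}\sum_{k} T(m,k,j,i)\,\phi^k$; its correctness is already asserted to follow from the definitions of $T$ and $\calD_\Mallows^{v^*,\phi}$.

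Finally, I would bound the number of operations. There are $m^2$ entries, and each is a sum of $O(m^2)$ products once the $T$ values and the $\phi$-powers are available, so the evaluation phase costs $O(m^4)$ operations; the table of $T$ values is likewise polynomial in $m$ by \Cref{lem:mallows}. Summing these contributions gives a total polynomial in $m$, which proves the claim. I expect the \emph{only} genuine point requiring care to be the relabeling/invariance step that licenses computing everything for the canonical central vote $v^*_m$; the rest is bookkeeping, since the hard combinatorial work—counting the votes at each swap distance that place a fixed candidate in a fixed position—has already been handled in \Cref{lem:mallows}.
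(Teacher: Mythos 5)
Your proposal is correct and follows essentially the same route as the paper: the paper also derives the theorem directly from Lemma~\ref{lem:mallows} and Equation~\eqref{eq:fm}, computing each entry as $\frac{1}{Z}\sum_k T(m,k,j,i)\phi^k$ for the canonical central vote and reading off the matrix by the position of each candidate in $v^*$. The relabeling-invariance step you flag as the one point needing care is left implicit in the paper, and your operation count is just a more explicit version of what the paper asserts.
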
 
Note that \Cref{eq:fm} only depends on $\phi$, $j$ and $i$ (and,
naturally, on $m$). Using this fact, we can also compute frequency
matrices for several variants of the Mallows distribution.

\begin{remark}\label{rem:mallows-mixture}
  Given a vote $v$, two dispersion parameters $\phi$ and $\psi$,
  and a probability $p \in [0,1]$, we define the 
  distribution $p$-$\calD_\Mallows^{v, \phi, \psi}$
  as $p\cdot\calD_\Mallows^{v,\phi}+
  (1-p)\cdot\calD_\Mallows^{\mathrm{rev}(v),\psi}$,
  i.e.,  with probability $p$ we sample a vote
  from $\calD_\Mallows^{v,\phi}$ and with probability $1-p$ we
  sample a vote from $\calD_\Mallows^{\mathrm{rev}(v),\psi}$.
  The probability
  that candidate $c_j$ appears in position $i$ in the resulting vote
  is
    $p\cdot f_m(\phi,j,i) + (1-p) \cdot f_m(\psi,m-j+1,i)$.
\end{remark}

\begin{remark}\label{rem:mallowsification}
  Consider a candidate set $C =\{c_1, \ldots, c_m\}$.
  Given a vote distribution $\calD$ over $\calL(C)$ and a parameter~$\phi$, 
  define a new distribution $\calD'$ as follows: Draw a vote $\hat{v}$
  according to $\calD$ and then output a vote $v$ sampled from
  $\calD_\Mallows^{\hat{v},\phi}$; indeed, such models are quite
  natural, see, e.g., the work of
  \citet{ken-kim:c:probabilistic-winner}. For each $t \in [m]$, let
  $g(j,t)$ be the probability that $c_j$ appears in position $t$ in a
  vote sampled from~$\calD$. The probability that $c_j$ appears in
  position $i \in [m]$ in a vote sampled from $\calD'$ is
  $\sum_{t=1}^{m} g(j,t)\cdot f(\phi,t,i)$. In terms of matrix
  multiplication, this means that
    $\#\freq(\calD') = \#\freq(\calD_\Mallows^{v^*,\phi}) \cdot \#\freq(\calD)$,
  where $v^*$ is $c_1 \pref \cdots \pref c_m$. We write $\phi$-Conitzer ($\phi$-Walsh) to refer to this model where we use the Conitzer (Walsh) distribution as the underlying one
  and normalized dispersion parameter $\phi$.
\end{remark}

\section{Skeleton Map}\label{sec:experiments}\label{sec:robustness}

Our goal in this section is to form what we call a
\emph{skeleton map of vote distributions} (skeleton map, for short),
evaluate its quality and robustness, and compare it to the map of
\citet{boe-bre-fal-nie-szu:t:compass}.
Throughout this
section, whenever we speak of a distance between elections or
matrices, we mean the positionwise distance (occasionally we will also
refer to the Euclidean distances on our maps, but we will always make
this explicit).
Let $\Phi = \{0, 0.05, 0.1, \ldots, 1\}$ be a set of normalized
dispersion parameters that we will be using for Mallows-based
distributions in this section.

We form the skeleton map following the general approach of
\citet{szu-fal-sko-sli-tal:c:map} and
\citet{boe-bre-fal-nie-szu:t:compass}.
For a given number of candidates, we consider the four
compass matrices (UN, ID, AN, ST) and paths between each matrix pair consisting of their convex combinations (gray dots),
 the frequency matrices
of the Mallows distribution with normalized dispersion parameters from $\Phi$ (blue triangles), and the frequency matrices of the Conitzer (CON), Walsh (WAL), and GS/caterpillar distribution (CAT). Moreover, we add the frequency matrices of the following 
vote distributions (we again use the dispersion parameters from $\Phi$):
(i)~The distribution $\nicefrac{1}{2}$-$\calD_\Mallows^{v, \phi, \phi}$ as
  defined in \Cref{rem:mallows-mixture} (red triangles),
(ii)~the distribution where with equal probability we mix the standard
  Mallows distribution and $\nicefrac{1}{2}$-$\calD_\Mallows^{v, \phi, \phi}$ (green triangles), and
(iii)~the $\phi$-Conitzer and $\phi$-Walsh distributions 
as defined in \Cref{rem:mallowsification} (magenta and orange crosses).
For each pair of these matrices we compute their positionwise distance. Then we find an embedding of the matrices
 into a 2D plane, so that each
matrix is a point and the Euclidean distances between these points are
as similar to the positionwise distances as
possible (we use the MDS algorithm, as implemented in the Python sklearn.manifold.MDS package). In Figure~\ref{fig:skeleton_map} we show our map for the
case of $10$ candidates (the lines between some points/matrices show
their positionwise distances; 
to maintain clarity, we only provide some of them).

\begin{figure*}[t]
	\centering
 	\begin{minipage}{.32\textwidth}
 	    \vspace{0.75cm}
		\centering
 		\includegraphics[width=1.02\textwidth]{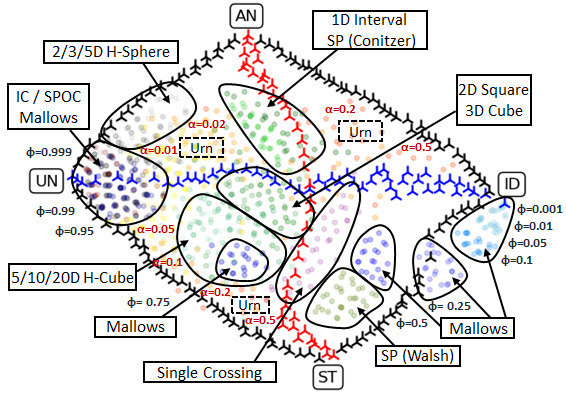}
 		\caption{Map of elections, together with the compass matrices, as presented by \citet{boe-bre-fal-nie-szu:t:compass}.}\label{fig:compass_map}
 	\end{minipage}\hfill
 	\begin{minipage}{.32\textwidth}
 	    \vspace{0.9cm}
 		\centering
 		\hideforspeed{\scalebox{0.75}{\input{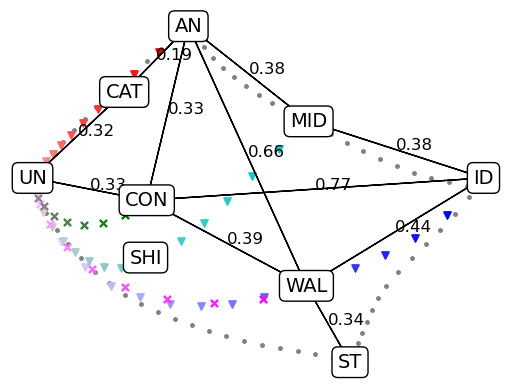}}}        \caption{The skeleton map with 10 candidates. We have $\textrm{MID} = \nicefrac{1}{2}\AN + \nicefrac{1}{2}\ID$. Each point labeled with a number is a real-world election as described in \Cref{sub:framework}.}
      \label{fig:skeleton_map}
 	\end{minipage}\hfill
 	\begin{minipage}{.32\textwidth}
 		\centering
        \includegraphics[width=0.85\textwidth]{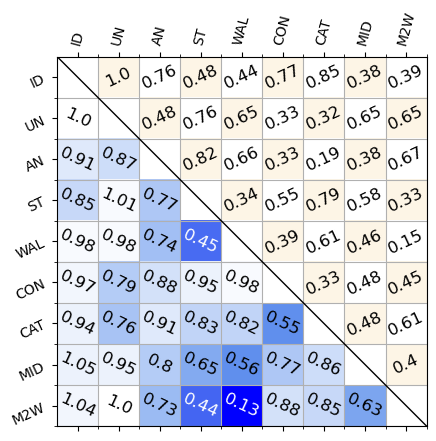}
        \caption{In the top-right part, we show the normalized positionwise distance. In the bottom-left one, we show the embedding misrepresentation ratios.}
        \label{fig:distance-distortion}
 	\end{minipage}\hfill
\end{figure*}

We now verify the credibility of the skeleton map.
As the map 
does not have many points,  
we expect its
embedding to truly reflect the positionwise distances between the
matrices. This, indeed, seems to be the case, although 
some distances are represented (much) more accurately than the others.
In Figure~\ref{fig:distance-distortion} 
we provide the following
data for a number of matrices (for $m=10$; matrix M2W is the Mallows matrix in our data set that is closest to the Walsh matrix).
  In the top-right part (the white-orange area), we give
  positionwise distances between the 
  matrices, and
  in the bottom-left part (the blue area), for each pair of
  matrices $X$ and $Y$ we report the \emph{misrepresentation ratio}
  $\frac{\mathrm{Euc}(X,Y)}{\normPOS(X,Y)}$,  
  where $\mathrm{Euc}(X,Y)$ is the Euclidean
  distance between $X$ and $Y$ in the embedding, normalized by the
  Euclidean distance between $\ID$ and $\UN$. 
The closer they are to $1$, the more accurate
is the embedding.
The misrepresentation ratios are typically between $0.8$ and
$1.15$, with many of them between $0.9$ and $1.05$. Thus, in
most cases, the map is quite accurate and offers good intuition
about the relations between the matrices.  Yet, some distances are
represented particularly badly.  As an extreme example, 
the Euclidean distance between the Walsh matrix and
the closest Mallows matrix, M2W, is off by almost a factor of $8$ (these matrices are close, but not as close as the map
suggests).  
Thus, while one always has to verify claims
suggested by the skeleton map, we view it as quite credible.
This conclusion is particularly valuable when we compare the skeleton
map and the map of \citet{boe-bre-fal-nie-szu:t:compass}, shown in
Figure~\ref{fig:compass_map}. The two maps are similar, and analogous
points (mostly) appear in analogous positions. Perhaps the biggest
difference is the location of the Conitzer matrix on the skeleton
map and Conitzer elections in the map of
\citeauthor{boe-bre-fal-nie-szu:t:compass}, but even this difference is
not huge.
We remark that the Conitzer matrix is closer to $\UN$
and $\AN$ than to $\ID$ and $\ST$, whereas for the Walsh matrix the opposite is true.
\citet{boe-bre-fal-nie-szu:t:compass} make a similar observation;
our results  allow us to make this claim formal.
In \Cref{app:robustness}, we analyze the robustness of the skeleton map with respect to varying the number of candidates. 
We find that except for pairs including the Walsh or GS/caterpillar matrices, which "travel" on the map as the number of candidates increases, the distance between each pair of matrices in the skeleton map stays
nearly constant.

\section{Learning Vote Distributions}\label{sub:framework}
We demonstrate how the positionwise distance and frequency matrices can be used to fit vote distributions to given real-world elections. 
Specifically, we consider the Mallows model ($\calD_\Mallows^{v, \phi}$) and the $\phi$-Conitzer and $\phi$-Walsh models. 
Naturally, we could
use more distributions, but we focus on showcasing the technique and the general unified approach.
Concerning our results, among others,  we verify that for Mallows model our approach is strongly correlated with existing maximum-likelihood approaches. 
Moreover, unlike in previous works, we compare the capabilities of different distributions to fit the given elections.
We remark that if we do not have an algorithm for computing a frequency
matrix of a given vote distribution, we can obtain an approximate matrix
by sampling sufficiently many votes from this distribution. 
In principle, it is also possible to deal with distributions over elections
that do not correspond to vote distributions and hence are not captured by expected frequency matrices (as
is the case, e.g., for the Euclidean models where candidates do not
have fixed positions; see the work of \citet{szu-fal-sko-sli-tal:c:map} for
examples of such models in the context of the map of elections):
If we want to compute the distance of such a distribution, 
we sample sufficiently many elections and compute their average distance from the input one. 
However, 
it remains unclear
how robust this approach is.

\myparagraph{Approach.}
To fit our vote distributions to a given election, we compute the election's distance to the frequency matrices of 
$\calD_\Mallows^{v, \phi}$, $\phi$-Conitzer, and
$\phi$-Walsh, for  $\phi\in \{0,0.001,\dots,1\}$.
We select the
distribution corresponding to the closest matrix. 

\begin{table}
\centering
 \caption{Closest distributions for seven illustrative real-world elections. For each, we include the parameterization that produces the closest frequency matrix, and give the normalized positionwise distance of the elections from this matrix.  }
    \begin{tabular}{cccccccccccc}
    \toprule
    id & source &
    \multicolumn{2}{c}{$\calD_\Mallows^{v, \phi}$} & \phantom{a} &
    \multicolumn{2}{c}{$\phi$-Walsh} & \phantom{a} & \multicolumn{2}{c}{$\phi$-Conitzer}\\ \cmidrule{3-4}
    \cmidrule{6-7} \cmidrule{9-10}
    & &$\phi$ & dist & & $\phi$ &dist & & $\phi$ &dist \\ \midrule
    1 & f. skate & 0.44 & 0.32 & & 0.14 & 0.38 & & 0.11 & 0.44 \\
    2 & f. skate & 0.23 & 0.24 & & 0 & 0.36 & & 0.02 & 0.56\\
    3 & aspen &  0.68 & 0.18 && 0.3 & 0.16 & & 0.23& 0.27  \\
    4 & f. skate & 0.05 & 0.09 && 0 & 0.37 & & 0 & 0.69  \\
    5 & s. skate &  0.46 & 0.11 && 0.15 & 0.18 & & 0.16 & 0.35  \\
    6 & irish &  0.75 & 0.1 && 0.42 & 0.12 & & 0.36 & 0.16  \\
    7 & cities &  0.93 & 0.06 && 0.69 & 0.06 & & 0.63 & 0.06  \\
    \bottomrule
\end{tabular}
    \label{tab:high}
\end{table}

\myparagraph{Data.}
We consider elections from the real-world datasets used by \citet{boe-bre-fal-nie-szu:t:compass}. They generated $15$ elections with $10$ candidates and $100$ voters (with strict preferences) from each of the eleven different real-world election datasets (so, altogether, they generated 165 elections, most of them from Preflib \citep{mat-wal:c:Preflib}). They used four datasets of political elections (from North Dublin (Irish), various non-profit and professional organizations (ERS), and city council elections from Glasgow and Aspen),
four datasets of sport-based elections (from Tour de France (TDF), Giro d’Italia
(GDI), speed skating, and figure skating) and three datasets with survey-based elections (from preferences over T-shirt designs, sushi, and cities).
We present the results of our analysis for seven illustrative and particularly interesting elections in \Cref{tab:high} and also include them in our skeleton map from \Cref{fig:skeleton_map}.

\myparagraph{Basic Test.}
There is a standard maximum-likelihood estimator
(MLE; based on Kemeny voting~\citep{MM09})
that given an election provides the most
likely dispersion parameter of the Mallows distribution that might have generated this election. 
To test our approach, we compared the parameters provided by
our approach and by the MLE for our $165$ elections and found them
to be highly
correlated (with Pearson correlation coefficient around $0.97$).
In particular, the average absolute difference between the dispersion parameter calculated by our approach and the MLE is only $0.02$.
See \Cref{se:validation} for details.

\myparagraph{Fitting Real-World Elections.}
Next, we consider the capabilities of $\calD_\Mallows^{v, \phi}$, $\phi$-Conitzer, and
$\phi$-Walsh to fit the real-world elections of \citet{boe-bre-fal-nie-szu:t:compass}. 
Overall, we find that these three vote distributions have some ability to
capture the considered elections,
but it certainly is not perfect.
Indeed, the average normalized distance of these elections to the frequency matrix of the closest distribution is $0.14$. To illustrate that some distance is to be expected here, we mention that the average distance of an election
sampled from impartial culture ($\calD_\IC$, with $10$ candidates
and $100$ voters) to the distribution's expected frequency matrix
is $0.09$ (see \Cref{app:variance} for a discussion of this and how it may serve as an estimator for the ``variance of a distribution'').
There are also some elections that are not captured by any of the considered distributions to an acceptable degree; examples of this are elections nr.\ 1 and nr.\ 2, which are at distance at least $0.32$ and $0.25$ from all our distributions, respectively.
Remarkably, while coming from the same dataset, elections nr.\ 1 and nr.\ 2 are still quite different from each other and, accordingly, the computed dispersion parameter is also quite different. 
It remains a challenge to find distributions capturing such elections. 

Comparing the power of the three considered models, nearly all of our elections are best captured by the Mallows model rather than $\phi$-Conitzer or $\phi$-Walsh.
There are only twenty elections that are closer to $\phi$-Walsh or $\phi$-Conitzer than to a Mallows model (election nr.\ 3 is the most extreme example), and, unsurprisingly, both $\phi$-Walsh and $\phi$-Conitzer perform particularly badly at capturing elections close to ID (see election nr.\ 4). 
That is, $\phi$-Conitzer and $\phi$-Walsh are not needed to ensure good coverage of the space of elections; the average normalized distance of our elections to the closest Mallows model is only $0.0007$ higher than their distance to the closest distribution (elections nr.\ 3-6 are three examples of elections which are well captured by the Mallows model
and distributed over the entire map).\footnote{For each election, we also computed the closest frequency matrix of two mixtures of Mallows models with reversed central votes $p$-$\calD_\Mallows^{v, \phi, \psi}$ using our approach. However, this only decreased the average minimum distance by around $0.02$, with the probability $p$ of flipping the central vote being (close to) zero for most elections.  }
Nevertheless, $\phi$-Walsh is also surprisingly powerful, as the average normalized distance of our elections to the closest $\phi$-Walsh distribution is only $0.03$ higher than their distance to the closest distribution (however, this might be also due to the fact that most of the considered real-world elections fall into the same area of the map, which $\phi$-Walsh happens to capture particularly well \citep{boe-bre-fal-nie-szu:t:compass}). $\phi$-Conitzer  performs considerably worse: there are only three elections for which it produces a (slightly) better result than $\phi$-Walsh. 

Moreover, our results also emphasize the complex nature of the space of elections:
Election nr.\ 7 is very close to $\calD_\Mallows^{v, 0.95}$, hinting that its
votes are quite chaotic. At the same time, this election is very close to $0.63$-Conitzer and $0.69$-Walsh distributions, which 
suggests at least a certain level of structure among its
votes (because votes from Conitzer and Walsh distributions are
very structured, and the Mallows filter with dispersion between
$0.63$ and $0.69$ does not destroy this structure fully).
However, as witnessed by the fact that the frequency matrix of
GS/balanced (which is highly structured) is $\UN$, such phenomena can happen.
Lastly, note that most of our datasets are quite ``homogenous'', in that the closest distributions for elections from the dataset are similar and also at a similar distance. 
However, there are also clear exceptions, for instance, elections nr.\ 1 and nr.\ 4 from the figure skating dataset. Moreover, there are two elections from the speed skating dataset where one election is captured best by $\calD_\Mallows^{v, 0.76}$ and the other by $\calD_\Mallows^{v, 0.32}$.

\section{Summary}
We have computed the frequency matrices \citep{szu-fal-sko-sli-tal:c:map,boe-bre-fal-nie-szu:t:compass} of several well-known distributions of votes.
Using them, we have drawn a ``skeleton map'', which shows how these
distributions relate to each other, and we have analyzed its properties.
Moreover, we have demonstrated how our results can be used to fit vote distributions to capture real-world elections. 

For future work, it would be interesting to compute the frequency matrices of further popular vote distributions, such as the Plackett--Luce model (we conjecture that its frequency matrix is computable in polynomial time). 
It would also be interesting to use our approach to fit more complex models, such as mixtures of Mallows models, to real-world elections. Further, it may be interesting to use expected frequency matrices to  reason about the asymptotic behavior of our models. For example, it might be possible to formally show where, in the limit, do the matrices of our models end up on the map as we increase the number of candidates.

\subsection*{Acknowledgments}
NB was supported by the DFG project MaMu (NI 369/19) and by the DFG project ComSoc-MPMS (NI 369/22). 
This project has received funding from the European 
    Research Council (ERC) under the European Union’s Horizon 2020 
    research and innovation programme (grant agreement No 101002854).

\begin{center}
  \includegraphics[width=3cm]{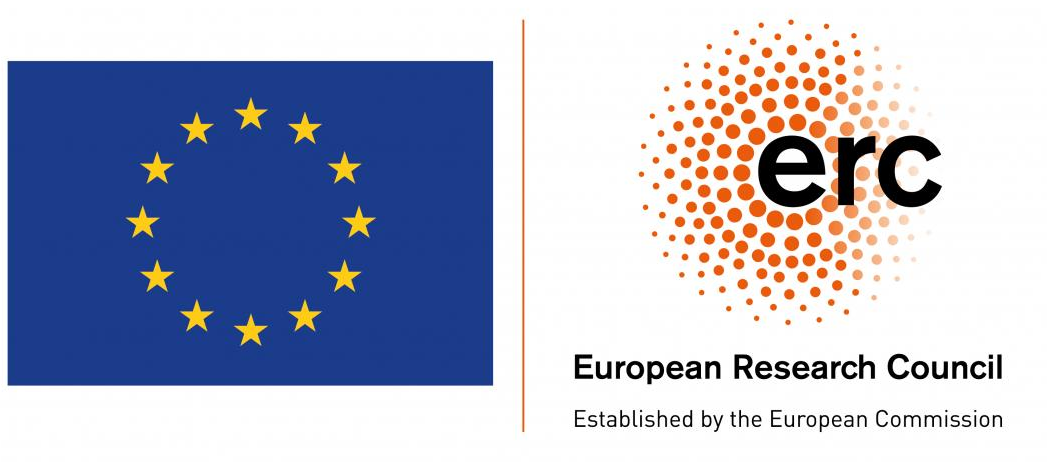}
\end{center}

\bibliographystyle{icml2022}

\newpage
\appendix

\section{Missing Material From \Cref{sec:sp-mat}}
Recall that our candidate set is $C = \{c_1, \ldots, c_m\}$ and the
societal axis is $c_1 \lhd \cdots \lhd c_m$.
We consider the Conitzer distribution.
Let $f(i,j)$ be the
probability that in a sampled vote candidates
$c_i, \ldots, c_j$ appear in the top $j-i+1$ positions. Next we find the values 
of $f(i, j)$ for all $i, j\in [m]$, and using them
we establish the frequency matrix for the Conitzer distribution.

\begin{restatable}{proposition}{contop}\label{prop:con:top}
Let $i, j$ be two integers with $1 < i \leq j < m$.  Then
  $f(\ell,\ell) = \nicefrac{1}{m}$ for all $\ell\in[m]$, 
  $f(1,m) = 1$,
  $f(i,j) = \nicefrac{1}{m}$,
  $f(1,i) = \nicefrac{(i+1)}{2m}$, and
  $f(j,m) = \nicefrac{(m-j+2)}{2m}$.
\end{restatable}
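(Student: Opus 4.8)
The plan is to exploit the defining feature of the Conitzer process: at every step we place a candidate immediately to the left or to the right of the already-placed ones. Consequently, after $t$ steps the candidates occupying the top $t$ positions always form a contiguous block $\{c_a, \dots, c_b\}$ of the axis with $b-a+1 = t$. Hence $f(i,j)$ is exactly the probability that, after $j-i+1$ steps, this block equals $[i,j] := \{c_i, \dots, c_j\}$; I would write $P([a,b])$ for this quantity, so that $f(a,b) = P([a,b])$. The initial block is the randomly chosen peak, giving the base case $P([\ell,\ell]) = \nicefrac{1}{m}$ for every $\ell \in [m]$, and the full block is always reached, giving $P([1,m]) = 1$.

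First I would set up a one-step backward recursion. Since each step enlarges the block by one adjacent candidate, a block $[a,b]$ of size at least two can be reached only from $[a+1,b]$ by a left-extension or from $[a,b-1]$ by a right-extension, and these two predecessors are mutually exclusive and exhaustive. Writing $\alpha$ for the probability of extending left out of $[a+1,b]$ and $\beta$ for the probability of extending right out of $[a,b-1]$, this yields
\[ P([a,b]) = \alpha \cdot P([a+1,b]) + \beta \cdot P([a,b-1]). \]
The crux is evaluating $\alpha$ and $\beta$ correctly, which is where the axis boundaries enter: out of a block $[c,d]$ a direction is chosen uniformly (probability $\nicefrac{1}{2}$) only when both neighbours still exist, i.e.\ when $c>1$ and $d<m$, whereas if one side is exhausted the move in the other direction is forced (probability $1$). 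Thus $\alpha = \nicefrac{1}{2}$ if $b<m$ and $\alpha = 1$ if $b=m$ (the right end is blocked, forcing the left move), while $\beta = \nicefrac{1}{2}$ if $a>1$ and $\beta = 1$ if $a=1$.

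With the recursion in hand, the four formulas follow by separating cases according to how many of the block's ends touch the axis boundary. For an interior block $1<i\le j<m$ both ends are free, so $\alpha=\beta=\nicefrac{1}{2}$, and both predecessor blocks $[i+1,j]$ and $[i,j-1]$ are again interior (or singletons); a straightforward induction on the block size, anchored at $P([\ell,\ell])=\nicefrac{1}{m}$, then gives $f(i,j)=\tfrac{1}{2}\cdot\tfrac{1}{m}+\tfrac{1}{2}\cdot\tfrac{1}{m}=\nicefrac{1}{m}$. For a left-anchored block $[1,i]$ with $i<m$ the right-extension into it is forced ($\beta=1$) while the left-extension carries $\alpha=\nicefrac{1}{2}$, and since $[2,i]$ is interior with value $\nicefrac{1}{m}$ the recursion collapses to $P([1,i]) = \nicefrac{1}{2m} + P([1,i-1])$; telescoping from $P([1,1])=\nicefrac{1}{m}$ gives $f(1,i)=\nicefrac{(i+1)}{2m}$. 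Symmetrically, for a right-anchored block $[j,m]$ with $j>1$ the left-extension is forced, $[j,m-1]$ is interior, and $P([j,m]) = P([j+1,m]) + \nicefrac{1}{2m}$ telescopes from $P([m,m])=\nicefrac{1}{m}$ to $f(j,m)=\nicefrac{(m-j+2)}{2m}$.

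The main obstacle is not the algebra but getting the transition probabilities $\alpha$ and $\beta$ exactly right: one must carefully distinguish free moves (probability $\nicefrac{1}{2}$) from forced moves at the two ends of the axis, since it is precisely the forced right-extensions at the left end (and forced left-extensions at the right end) that make the boundary values $f(1,i)$ and $f(j,m)$ exceed the uniform interior value $\nicefrac{1}{m}$. Once this bookkeeping is correct, each case reduces to a one-line induction or telescoping sum.
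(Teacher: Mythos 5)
Your proof is correct and follows essentially the same route as the paper's: the same backward recursion $f(a,b)=\alpha f(a+1,b)+\beta f(a,b-1)$ with the same case analysis of forced versus probability-$\nicefrac{1}{2}$ extensions at the axis boundaries, anchored at $f(\ell,\ell)=\nicefrac{1}{m}$. The only cosmetic difference is that the paper obtains $f(j,m)$ from $f(1,m-j+1)$ by symmetry, whereas you telescope the right-anchored case directly; both are immediate.
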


\begin{proof}
  The quantity $f(\ell, \ell)$ is simply the probability that $c_\ell$ is ranked first, so we have $f(\ell,\ell) = \nicefrac{1}{m}$ by the definition of the Conitzer distribution.
  
  The equality $f(1, m)=1$ is immediate from the definition of $f(i, j)$.

  To show that $f(i,j) = \nicefrac{1}{m}$, we give a proof by induction
  on $j-i$.  The base step holds because for each $\ell \in [m]$ we have
  $f(\ell,\ell) = \nicefrac{1}{m}$. Assume that for all integers $x$, $y$
  such that $1 < x \leq y < m$ and $y-x < j-i$ we have
  $f(x,y) = \nicefrac{1}{m}$. The only way for candidates
  $c_i, \ldots, c_j$ to be ranked in top $j-i+1$ positions under the Conitzer model is
  that, while generating the vote, we placed candidates
  $c_i, \ldots, c_{j-1}$ in top $j-i$ positions and then extended
  the vote with $c_j$ (the probability of this latter step
  is $\nicefrac{1}{2}$), or we placed candidates $c_{i+1}, \ldots, c_{j}$ in top $j-i$ positions and then extended the vote with $c_i$ (again, the probability of the latter step is $\nicefrac{1}{2}$). Thus:
  \[ \textstyle
    f(i,j) = \frac{1}{2} f(i,j-1) + \frac{1}{2}f(i+1,j) =
    \frac{1}{2m} + \frac{1}{2m} = \frac{1}{m}.
  \]
  This proves the claim for $f(i,j)$.

  Next we show that $f(1,i) = \nicefrac{(i+1)}{2m}$, using induction
  over~$i< m$. For $i=1$ we have
  $f(1,1) = \nicefrac{1}{m} = \nicefrac{2}{2m}$, so our claim
  holds. Assume that it holds up to $i-1$.
  There are only two ways in which candidates $c_1, \ldots, c_i$ can
  be ranked in the top $i$ positions: Either we first place
  $c_1, \ldots c_{i-1}$ in top $i-1$ positions
  and then extend the vote with $c_i$ (the latter step has probability~$1$, since
  there is no candidate to the left of $c_1$), or we first place candidates $c_2, \ldots, c_{i}$ in the top $i-1$ positions
  and then extend the vote with $c_1$ (the latter step has probability
  $\nicefrac{1}{2}$, since the vote could also be extended with
  $c_{i+1}$). Thus, we have:
  \[ \textstyle
    f(1,i) = f(1,i-1) + \frac{1}{2}f(2,i) = \frac{i}{2m} + \frac{1}{2m} = \frac{i+1}{2m}.
  \]
  This completes the proof for $f(1,i)$. The expression for $f(j,m)$ can be derived
  by symmetry: we have $f(j, m)=f(1, m-j+1)$.
\end{proof}

\conmain*
\begin{proof}
  Consider a candidate $c_j$, $j \in [\frac{m}{2}]$, and 
  a position $i \in [m]$. We proceed by case analysis:
  \begin{enumerate}
  \item If $i < j$, then there are two ways to generate a vote with
    $c_j$ in position $i$: Either candidates
    $c_{j+1}, \ldots, c_{i+j-1}$ are ranked in the top $i-1$ positions
    or candidates $c_{j-i+1}, \ldots, c_{j-1}$ are ranked in the top
    $i-1$ positions. In both cases, $c_j$ is ranked in the $i$-th
    position with probability $\nicefrac{1}{2}$ (indeed, we have $i+j <\nicefrac{m}{2}+\nicefrac{m}{2} = m$, so in the former case both $c_j$ and $c_{i+j}$ could have been placed in position $i$, and also $j-i\ge 1$, so in the latter case both $c_j$ and $c_{j-i}$ could have been placed in position $i$). Thus we have:
    \begin{align*}
      \textstyle
      \prob[ \pos_v&(c_j) = i ] =
      \textstyle\frac{1}{2}\cdot f(j+1,i+j-1)\\&+
      \textstyle\frac{1}{2}\cdot f(j-i+1,j-1) = \frac{1}{2m} +
      \frac{1}{2m} = \frac{1}{m}.
    \end{align*}
  \item If $i = j$, then either candidates $c_1, \ldots, c_{j-1}$ are
    ranked in the top $j-1$ positions and the vote is extended with
    $c_j$ (with probability $1$), or candidates
    $c_{j+1}, \ldots, c_{2j-1}$ are ranked in the top $j-1$ positions
    and the vote is extended with $c_j$ (with probability
    $\nicefrac{1}{2}$). Thus, we have:
    \begin{align*}
      \prob[ \pos_v(c_j) = j ] &=  f(1,j-1) +
      \textstyle\frac{1}{2}\cdot f(j+1,2j-1) \\  &= \textstyle\frac{j}{2m} +
      \frac{1}{2m} = \frac{j+1}{2m}.
    \end{align*}
  \item If $j < i < m-j+1$ then there is only one possibility
    for $c_j$ to be ranked $i$-th: It must be that candidates
    $c_{j+1}, \ldots, c_{j+i-1}$ are ranked in the top $i-1$ positions
    and the vote is extended with $c_j$ (which happens with
    probability $\frac{1}{2}$ because $j+i \le m$).\footnote{It is
      impossible for the candidates from the left side of $c_j$ to
      take the top $i-1$ positions because there are fewer than $i-1$ of
      them.}  Thus, we have:
    \[
      \textstyle
      \prob[ \pos_v(c_j) = i ] =  \frac{1}{2}\cdot f(j+1,i+j-1) = \frac{1}{2m}.
    \]
  \item If $i = m-j+1$, the analysis is similar to the case $i=j$:
    candidates $c_{j+1}, \ldots, c_m$ must be ranked in the top $m-j$ positions,
    in which case $c_j$ gets ranked in the $(m-j+1)$-st position (with
    probability~$1$). Thus, we have:
    \[
      \prob[ \pos_v(c_j) = m-j+1 ] =  f(j+1,m) = \frac{m-j+1}{2m}.
    \]
  \item If $i > m-j+1$ then $\prob[ \pos_v(c_j) = i ] = 0$, because
    both to the left of $c_j$ and to the right of $c_j$ there are fewer
    than $i-1$ candidates. 
  \end{enumerate}
  The fact that for each candidate $c_j \in C$ and each position
  $i \in [m]$ we have
  $\prob[ \pos_v(c_j) = i] = \prob[ \pos_v(c_{m-j+1}) = i]$ follows
  directly from the symmetry of the Conitzer distribution and the fact
  that $m$ is even.
\end{proof}

\section{Missing Material From \Cref{sec:Mallows}}\label{app:mallows}
\begin{proposition}[\citet{oeis}]
  There is an algorithm that computes $S(m,k)$ using at most
  polynomially many operations.
\end{proposition}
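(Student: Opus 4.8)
The plan is to compute all the values $S(m,k)$ for $k \in \{0,1,\dots,\nicefrac{m(m-1)}{2}\}$ by dynamic programming, reusing the insertion argument already employed in the proof of \Cref{lem:mallows}. The key point is that $S(m,k)$ counts the votes over $C(m)$ at swap distance $k$ from $v^*_m$, i.e., the permutations of $[m]$ with exactly $k$ inversions relative to $v^*_m$, and that these satisfy a clean recurrence when the last candidate is inserted.

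First I would establish the recurrence
\[
  S(m,k) = \sum_{j=0}^{m-1} S(m-1, k-j),
\]
with the convention that $S(m-1,k')=0$ whenever $k'<0$. To justify it, take any vote $v$ over $C(m)$ and let $v'$ be its restriction to $\{c_1,\dots,c_{m-1}\}$. If $c_m$ occupies position $i$ in $v$, then, exactly as in \Cref{lem:mallows}, recovering $v$ from $v'$ amounts to inserting $c_m$ behind the candidate in position $i-1$, which adds precisely $m-i \in \{0,\dots,m-1\}$ to the swap distance (one swap for each candidate ranked below $c_m$). Ranging over the $m$ possible positions of $c_m$, equivalently over the number $j=m-i$ of swaps it contributes, yields the displayed identity.

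Next I would turn the recurrence into an efficient algorithm, starting from the base case $S(1,0)=1$. A naive evaluation computes each of the $O(m^2)$ entries at level $m$ as a sum of $m$ previously computed values, giving $O(m^4)$ arithmetic operations overall, which is already polynomial in $m$. To make each level linear, observe that the right-hand side is a sliding window sum of $m$ consecutive values of $S(m-1,\cdot)$; maintaining prefix sums of the previous row lets us obtain each $S(m,k)$ in $O(1)$ time, for $O(m^3)$ operations in total (the maximal value of $k$ is $\nicefrac{m(m-1)}{2}$). Equivalently, one may expand the product $\prod_{i=1}^{m}(1+q+\cdots+q^{i-1})$ coefficient by coefficient, since its coefficient of $q^k$ is exactly $S(m,k)$; this also matches the Mallows normalizing constant, as $Z=\sum_k S(m,k)\phi^k$.

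I do not expect a genuine obstacle here: the combinatorial content is fully captured by the insertion recurrence, which is a simpler specialization of the argument already carried out for $T(m,k,j,i)$. The only points requiring care are the bookkeeping of the admissible range of $k$ at each level and the correct handling of out-of-range terms in the window sum, both of which are routine.
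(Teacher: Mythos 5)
Your proof is correct and takes essentially the same approach as the paper: both rest on the insertion recurrence for the number of permutations with $k$ inversions, and your prefix-sum (sliding-window) optimization of $S(m,k)=\sum_{j=0}^{m-1}S(m-1,k-j)$ is exactly the paper's recurrence $S(m',k')=S(m',k'-1)+S(m'-1,k')-S(m'-1,k'-m')$ in telescoped form, with the same $O(m^3)$ operation count.
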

\begin{proof}
  First, we note that for each $m' \in [m]$ we have $S(m',0) =
  1$. Further, for each $m' \in [m]$ and $k' \in [k]_0$ the following
  recursion holds:
  \[ S(m',k')=S(m',k'-1)+S(m'-1,k')-S(m'-1,k'-m').\] Using these two
  facts and standard dynamic programming, we can compute $S(m,k)$
  using $O(mk)$ arithmetic operations. Since $k$ is at most $O(m^2)$,
  the running time is at most~$O(m^3)$.
\end{proof}

\begin{figure}[!htb]
\centering
 	\begin{minipage}{.4\textwidth}
 	    \resizebox{\textwidth}{!}{\input{compass_dist_ConWal}}   \caption{Normalized positionwise distance between the Conitzer
    [Walsh] matrix  and the compass
    matrices in solid [dashed] lines, for varying number of
    candidates.}\label{fig:distancesCompass}
 	\end{minipage}\hfill
    \centering
    \begin{minipage}{.4\textwidth} 
        \centering
        \resizebox{\textwidth}{!}{\input{compass_dist_caterpillar}} 		\caption{Normalized positionwise distance between the frequency matrix of the GS/caterpillar distribution and the compass matrices, for varying number of candidates.}\label{fig:compass_distance_GScater}
    \end{minipage}\\
    \begin{minipage}{0.4\textwidth}
        \centering
        \resizebox{\textwidth}{!}{\input{compass_dist_Mallows}}  		\caption{For different values of $\phi$, normalized positionwise distance between the frequency matrix of the Mallows distribution and the compass matrices, for varying number of candidates.}\label{fig:compass_distance_Mallows}
    \end{minipage}\hfill
    \begin{minipage}{0.4\textwidth}
        \centering
        \resizebox{\textwidth}{!}{\input{compass_dist_normMallows}}  		\caption{For different values of $\normphi$, normalized positionwise distance between the frequency matrix of the normalized Mallows distribution and the compass matrices, for varying number of candidates.}\label{fig:compass_distance_normalizedMallows}
    \end{minipage}
\end{figure}

\section{Missing Material From \Cref{sec:robustness}}\label{app:robustness}
\subsection{Distance to Compass Matrices} \label{sub:distancesCompass}
We analyze the distances between our matrices
for different numbers of candidates.
In \Cref{fig:distancesCompass} we show these distances for the
Conitzer and Walsh matrices and the compass matrices: For Conitzer, they are nearly constant,
and for Walsh they vary significantly. Indeed, the more
candidates we have, the closer the Walsh matrix is to $\ID$ (e.g., for
$10$ candidates their distance is $0.44$, and for $300$ candidates it
is $0.09$).

\Cref{fig:compass_distance_GScater} depicts the distance between the frequency matrix for GS/caterpillar and the four compass matrices, for varying number of candidates. As for the matrix for the Walsh model, its distance to the compass matrices changes as the number of candidates increases: The matrix moves closer and closer to $\AN$.

In \Cref{fig:compass_distance_Mallows}, we display the distance between the frequency matrix for the Mallows model for different values of the dispersion parameter $\phi$ and the compass matrices (in contrast to the previous figures, we only consider up to $100$ candidates, as for more than $100$ candidates computing the matrix for the Mallows model becomes very memory-consuming). Independent of the chosen value of the dispersion parameter, the distance of the respective matrix to the four compass matrices changes significantly when we increase the number of candidates. In fact, for any fixed dispersion parameter $\phi$, the resulting matrix will always move closer and closer to $\ID$ as the number of candidates increases. 

In contrast, if we use the normalized version of the Mallows model, the matrices remain more or less at a constant distance from the compass matrices. \Cref{fig:compass_distance_normalizedMallows} shows the distance of the frequency matrix of the normalized version of the Mallows model for different values of $\normphi$, as the number of candidates increases. 

\subsection{Distances of Pairs of Vote Distributions on the Skeleton Map} \label{sec:maxDiff}
\begin{figure}
\centering
  \resizebox{0.33\textwidth}{!}{% This file was created by tikzplotlib v0.9.9.
\begin{tikzpicture}

\definecolor{color0}{rgb}{0.12156862745098,0.466666666666667,0.705882352941177}

\begin{axis}[
tick align=outside,
tick pos=left,
x grid style={white!69.0196078431373!black},
xlabel={number of candidates},
xmin=-0.800000000000001, xmax=104.8,
xtick style={color=black},
y grid style={white!69.0196078431373!black},
ylabel={maximum pairwise difference},
ymin=-0.0142568626270305, ymax=0.29939411516764,
ytick style={color=black},
ytick={-0.05,0,0.05,0.1,0.15,0.2,0.25,0.3},
yticklabels={−0.05,0.00,0.05,0.10,0.15,0.20,0.25,0.30}
]
\addplot [line width=1.5 , color0]
table {%
4 0.112986564660799
6 0.0780361679100244
8 0.0652745568728313
10 0.0547865216590647
12 0.0469007594574643
14 0.0407035695947612
16 0.0358444925738024
18 0.0319005221134383
20 0.0287234278998958
22 0.0259555009520156
24 0.0235551691045071
26 0.0215336088908751
28 0.019711155231833
30 0.0181156851810114
32 0.0166701496115702
34 0.0153563871417951
36 0.0141629655345483
38 0.0130988453679295
40 0.0121056284803952
42 0.011217798487155
44 0.0104090896914342
46 0.00964495020215217
48 0.00894178217257968
50 0.00828739010062018
52 0.00769651628588094
54 0.00714894748176179
56 0.00663375260298824
58 0.00614825077646719
60 0.00568998375520524
62 0.00525677583567807
64 0.00484665230389036
66 0.00445785195610959
68 0.00408877289675408
70 0.00373798794277827
72 0.00340415495705781
74 0.00308608692613543
76 0.00278271451276946
78 0.0024930721445901
80 0.00221622020253265
82 0.00195137065228909
84 0.00169772635391088
86 0.00145464056619574
88 0.00122143060345942
90 0.000997537018820438
92 0.000782415800247183
94 0.000575534982853648
96 0.000376462890199464
98 0.000184753524552467
100 0
};
\addlegendentry{absolute difference}
\addplot [line width=1.5 , red]
table {%
4 0.285137252540609
6 0.253341217343897
8 0.201296875581962
10 0.169653967009331
12 0.149197425967739
14 0.141302558564751
16 0.132353828977743
18 0.123250784592509
20 0.114435448180624
22 0.106112345701684
24 0.0983614957084886
26 0.0911951691094657
28 0.0845917539364811
30 0.0785154488057373
32 0.0729230698167747
34 0.0677710559109892
36 0.0630178104823855
38 0.0586242956910079
40 0.0545550962403525
42 0.0507786811963612
44 0.0472662665326878
46 0.0439930084283992
48 0.0409362334428935
50 0.0380759637849896
52 0.0353943928314437
54 0.0328762580500243
56 0.0305070030199907
58 0.028274299062774
60 0.0261668413026967
62 0.0241746242122755
64 0.0222885665656306
66 0.0205005690182728
68 0.0188032648448699
70 0.0171900907802407
72 0.01565487466456
74 0.0141921577137438
76 0.0127970223077944
78 0.0114650280159385
80 0.0101918537602943
82 0.00897387556413975
84 0.00780742757614246
86 0.00668953559195808
88 0.00561706010737579
90 0.00458742836324637
92 0.00359813858151375
94 0.00264674438599315
96 0.00173126060249158
98 0.000849636196703676
100 0
};
\addlegendentry{relative difference}
\end{axis}
\end{tikzpicture}} \caption{Results from our experiments described in \Cref{sec:maxDiff}} \label{fig:maxDiff}
\end{figure}
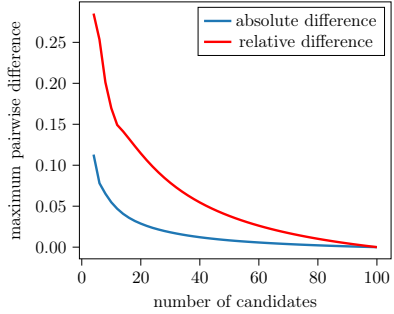

\begin{figure*}[h!]
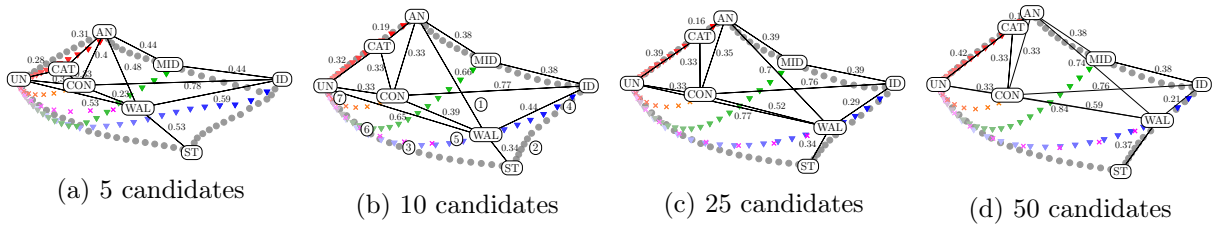

\centering
\begin{subfigure}{.24\textwidth}
  \centering
  \resizebox{\textwidth}{!}{\input{skeleton_5}}   \caption{$5$ candidates}
  \label{fig:m5}
\end{subfigure}\hfill
\begin{subfigure}{.24\textwidth}
  \centering
  \resizebox{\textwidth}{!}{\input{skeleton_10}}   \caption{$10$ candidates}
  \label{fig:m10}
\end{subfigure}\hfill
\begin{subfigure}{.24\textwidth}
  \centering
  \resizebox{\textwidth}{!}{\input{skeleton_25}}   \caption{$25$ candidates}
  \label{fig:m20}
\end{subfigure}\hfill
\begin{subfigure}{.24\textwidth}
  \centering
  \resizebox{\textwidth}{!}{\input{skeleton_50}}   \caption{$50$ candidates}
  \label{fig:m50}
\end{subfigure}
\caption{Skeleton map for different number of candidates.}
\label{fig:skeletonMapVaryingM}
\end{figure*}

As we have observed above, some vote distributions stay more or less at constant normalized positionwise distance from the four compass matrices. This raises the question whether these matrices also stay at a constant normalized positionwise distance from each other. This would imply that the data on which the skeleton map is based is independent of the number of candidates, and thus that the map is likely to look very similar for different numbers of candidates. 
To check this, we conducted the following experiment. We put together a set of vote distributions/matrices that do not structurally change when increasing the number of candidates (like the change happening for the Walsh model). First, we add the four compass matrices and the frequency matrix of the Conitzer model. Second, we add the frequency matrices of different variants of the Mallows model (similar as on the skeleton map as described in \Cref{sec:experiments}): the normalized Mallows model, the normalized Mallows model where the central vote is reversed with probability $\nicefrac{1}{2}$, the normalized Mallows model where the central vote is reversed with probability $\nicefrac{1}{4}$, and the distribution where we first sample a vote $v$ from the Conitzer distribution and then sample the final vote from the normalized Mallows model with $v$ as the central vote. For each of these variants, we include their frequency matrix for $\normphi\in \{0.2,0.4,0.6,0.8\}$.

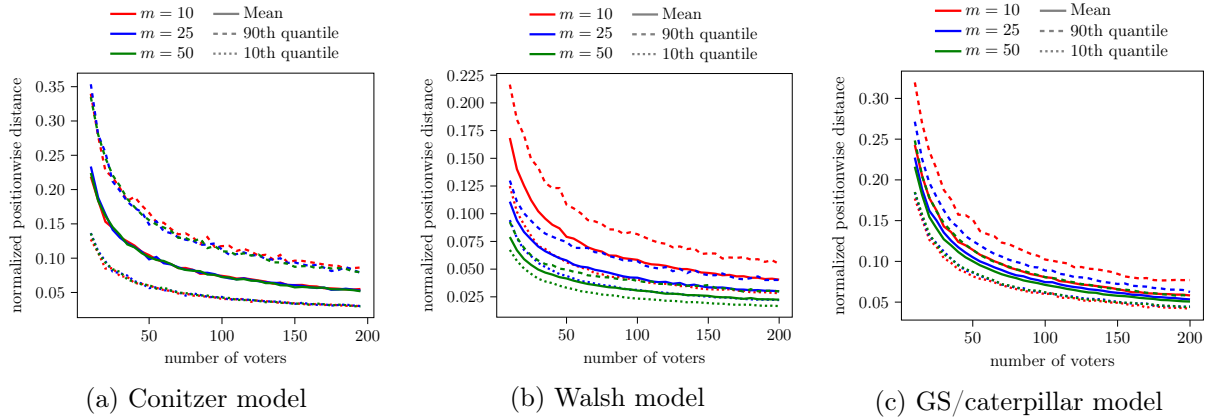
\begin{figure*}
\centering
\begin{subfigure}{.32\textwidth}
  \centering
  \resizebox{\textwidth}{!}{% This file was created by tikzplotlib v0.9.6.
\begin{tikzpicture}

\begin{axis}[
legend columns=2, 
legend cell align={left},
legend style={
	fill opacity=0.8,
	draw opacity=1,
	draw=none,
	text opacity=1,
	at={(0.5,1.3)},
	line width=1.5pt,
	anchor=north,
	/tikz/column 2/.style={
		column sep=10pt,
	}
},
legend entries={$m=10$,
	Mean,
	$m=25$,
	90th quantile,
	$m=50$,
	10th quantile},
tick align=outside,
tick pos=left,
x grid style={white!69.0196078431373!black},
xlabel={number of voters},
xmin=0.75, xmax=204.25,
xtick style={color=black},
y grid style={white!69.0196078431373!black},
ylabel={normalized positionwise distance},
ymin=0.0133083625567369, ymax=0.369739916072,
ytick style={color=black},
ytick={0,0.05,0.1,0.15,0.2,0.25,0.3,0.35,0.4},
yticklabels={0.00,0.05,0.10,0.15,0.20,0.25,0.30,0.35,0.40}
]
\addlegendimage{red}
\addlegendimage{gray}
\addlegendimage{blue}
\addlegendimage{gray,dashed}
\addlegendimage{green!50.1960784313725!black}
\addlegendimage{gray,dotted}
\addplot [line width=1.5, red]
table {%
10 0.218747493662316
15 0.184523595226458
20 0.153095975970341
25 0.144091937938009
30 0.130769377003678
35 0.124462497382509
40 0.118308095005482
45 0.109400127022386
50 0.104440420597528
55 0.098128114162916
60 0.093897317642854
65 0.0913846272406065
70 0.0855267085211182
75 0.0857070836045713
80 0.0830404173070095
85 0.0772454061922664
90 0.0775376088950181
95 0.0753740145883175
100 0.0742010236865455
105 0.0727188652326739
110 0.0702943208475945
115 0.070683372841226
120 0.0691894051348885
125 0.0674889007498595
130 0.0658473285149331
135 0.0627510414610269
140 0.0644134292577516
145 0.0608631249724163
150 0.0607609549448872
155 0.0609739454311667
160 0.0592020332475541
165 0.0587358557279551
170 0.0561604362412008
175 0.0559710075606526
180 0.0553995595751989
185 0.0545301792863756
190 0.0543293569844705
195 0.0550058355103388
};
%\addlegendentry{Mean 10}
\addplot [line width=1.5, red, dashed]
table {%
10 0.339696988700466
15 0.277979819250829
20 0.227575784345919
25 0.217636387592012
30 0.202020222942034
35 0.183593089263322
40 0.189393948582989
45 0.17676768638758
50 0.166121233604623
55 0.152672199363058
60 0.15161617234575
65 0.142657363900181
70 0.13476192054875
75 0.136202040088899
80 0.130378806252371
85 0.120623903027312
90 0.130740761587566
95 0.116427451902718
100 0.117575775233634
105 0.118037526958594
110 0.110495882797422
115 0.113610035801927
120 0.109595977481116
125 0.103806071784912
130 0.104941733136322
135 0.0983613987086397
140 0.104805209137725
145 0.0939707532299287
150 0.0963636463112903
155 0.0960508505221118
160 0.0943560701766701
165 0.092617097982403
170 0.0907486714761366
175 0.0894026141444391
180 0.088585874009313
185 0.0843652917580171
190 0.0854864509832678
195 0.0863869535990737
};
%\addlegendentry{90th Quantil10}
\addplot [line width=1.5, red, dotted]
table {%
10 0.127272742568995
15 0.107979815949996
20 0.0848485045022134
25 0.0853939470355258
30 0.0736363732453549
35 0.0727272910884384
40 0.0636363831546271
45 0.0625589323201866
50 0.0599394047463482
55 0.0567493265212485
60 0.0535353618137764
65 0.0522144636873043
70 0.0484848665073514
75 0.0486666759814728
80 0.0453788015370568
85 0.045258477510828
90 0.0434343558877255
95 0.0427432370061676
100 0.0406060773032633
105 0.0402597518016895
110 0.0399449169635773
115 0.0405533779976946
120 0.0363636514849284
125 0.0387757676230236
130 0.0372727324508808
135 0.036352425695143
140 0.0345887539005189
145 0.0354232018982822
150 0.0341212221788186
155 0.0333137868310917
160 0.032196998099486
165 0.0333149816163562
170 0.031711239644298
175 0.0322857255849874
180 0.0313131377608939
185 0.0305405588199695
190 0.0307655533041918
195 0.0299844685659716
};
%\addlegendentry{10th Quantil 10}
\addplot [line width=1.5, blue]
table {%
10 0.233603562030237
15 0.188320457205857
20 0.16409682206923
25 0.141377570705417
30 0.129931558455096
35 0.121626403554632
40 0.116067329180302
45 0.109293231323409
50 0.0983019303999101
55 0.101777020128107
60 0.0926224643404846
65 0.0908285513310619
70 0.0867577656554595
75 0.0831035451018639
80 0.0828480142401158
85 0.078592186378127
90 0.0787282727281518
95 0.0766586517103506
100 0.0728897533480538
105 0.0706092477251047
110 0.0692141705908057
115 0.0705946023807001
120 0.0681654312930933
125 0.0663816199236466
130 0.0648443549149894
135 0.0641458213097156
140 0.0635813871231855
145 0.0610397760855416
150 0.0604793958086297
155 0.0595994827421442
160 0.057600793741421
165 0.0565144729517842
170 0.0571712392174045
175 0.0539157609183204
180 0.0542711419258894
185 0.0554741103140912
190 0.0541147391569654
195 0.0527736461483829
};
%\addlegendentry{Mean 25}
\addplot [line width=1.5, blue, dashed]
table {%
10 0.353538481821306
15 0.28008337789036
20 0.253971185425941
25 0.211749998026957
30 0.198121840189784
35 0.182574210478029
40 0.172966366837947
45 0.167305592657067
50 0.149038467106696
55 0.153730780369817
60 0.139355797608061
65 0.136368364937908
70 0.130598937343502
75 0.126897463877685
80 0.12528127029968
85 0.11945252285133
90 0.119897472879921
95 0.121499001378946
100 0.112538465750941
105 0.105018344197351
110 0.105982528688261
115 0.108448164499938
120 0.104379829167066
125 0.104011560579798
130 0.0989571343502925
135 0.0977906108775642
140 0.0999011220902992
145 0.0930928584725524
150 0.0937307865358889
155 0.090980161614537
160 0.0887343873032548
165 0.090378804199505
170 0.0862149550512326
175 0.0812610048044007
180 0.0828023795880234
185 0.0868867047507173
190 0.0831285510601727
195 0.0783284389664634
};
%\addlegendentry{90th Quantil25}
\addplot [line width=1.5, blue, dotted]
table {%
10 0.136480807380464
15 0.110448765804848
20 0.0953654030588671
25 0.0803846209309995
30 0.0781090147273902
35 0.0746538708676011
40 0.0663894477649592
45 0.0630363564875622
50 0.0563461635691615
55 0.0576014063947906
60 0.0532019449854628
65 0.0529023856699216
70 0.0506868414956933
75 0.0454359136683007
80 0.048841360435248
85 0.046424228095342
90 0.0438867807571884
95 0.0428532576134715
100 0.0435192381622843
105 0.0393663241998113
110 0.0404108468711806
115 0.0398352921877701
120 0.0390945709664196
125 0.037688479174036
130 0.0370636336062139
135 0.036745027165251
140 0.0365838125867483
145 0.0348116803179997
150 0.0333782215275838
155 0.0348691161614485
160 0.0333353457054517
165 0.0319784557509523
170 0.0329762709022571
175 0.0308681417314801
180 0.0317553646750569
185 0.0322889912282475
190 0.0309534480522136
195 0.0302091003274724
};
%\addlegendentry{10th Quantil 25}
\addplot [line width=1.5, green!50.1960784313725!black]
table {%
10 0.224175588900623
15 0.183871235587172
20 0.161873307428388
25 0.145274167540083
30 0.133970749913299
35 0.120089428776889
40 0.115734685204346
45 0.107186513670056
50 0.101761798447728
55 0.0974137363755847
60 0.0959589598721349
65 0.0914062667339884
70 0.0856861444807709
75 0.0828808792208393
80 0.0815996582811216
85 0.0810598877258729
90 0.0761311493773183
95 0.0757016058314313
100 0.0724134955715601
105 0.0718342493738712
110 0.0688439327606666
115 0.0695481164016019
120 0.0671603548863031
125 0.0668642999535101
130 0.0636313725108256
135 0.0635646547341109
140 0.0605530485983047
145 0.0601819435755784
150 0.0574066802053502
155 0.0580682198786061
160 0.0574071353236681
165 0.0568557743241369
170 0.0569900331123905
175 0.0547013249483053
180 0.0544775062664386
185 0.0542640139349625
190 0.0534121266008802
195 0.0517956902865576
};
%\addlegendentry{Mean 50}
\addplot [line width=1.5, green!50.1960784313725!black, dashed]
table {%
10 0.33479236018333
15 0.280391395551299
20 0.246182505420375
25 0.220801941139483
30 0.205197323893397
35 0.186473376882159
40 0.174528237833839
45 0.163518251155113
50 0.15544299625066
55 0.14902064088969
60 0.144768358987956
65 0.139893101925748
70 0.129734566429628
75 0.129483048186949
80 0.124290554166748
85 0.128549170542826
90 0.114688849538265
95 0.117617013340085
100 0.109973600339077
105 0.10949332298541
110 0.104903214359108
115 0.107042043768809
120 0.103833970132017
125 0.105414689762019
130 0.0986477378858108
135 0.0938786364252157
140 0.0937846310250014
145 0.0917722608303788
150 0.0873781975047911
155 0.0882852774619327
160 0.0885372452150543
165 0.085095509167622
170 0.087341615649573
175 0.0849202700601589
180 0.0854918346934284
185 0.0844349886584772
190 0.0806826520742489
195 0.0796807055839529
};
%\addlegendentry{90th Quantil50}
\addplot [line width=1.5, green!50.1960784313725!black, dotted]
table {%
10 0.13605766359578
15 0.107963253628565
20 0.0936470894952442
25 0.0838751663926806
30 0.0800848859553488
35 0.069089043004682
40 0.0651128751258193
45 0.0616937781103841
50 0.06077312589528
55 0.0554990856837826
60 0.0574238024658712
65 0.0526390515064367
70 0.04978908483422
75 0.0491444758860683
80 0.0470798517191628
85 0.0455626324569361
90 0.045551050735452
95 0.0429701375284222
100 0.0423337392768368
105 0.0419946565345323
110 0.0392290843059259
115 0.0389110189810002
120 0.0390664563749079
125 0.037793539793064
130 0.0357650956755005
135 0.0368721800367711
140 0.035043750242537
145 0.0346187937767709
150 0.0334974139033272
155 0.0340287425768274
160 0.0321230627331963
165 0.0338974962883046
170 0.0327717285158866
175 0.0310262501871466
180 0.0321819705007618
185 0.0315320180208028
190 0.0301061653234691
195 0.0295097968074306
};
%\addlegendentry{10th Quantil 50}
\end{axis}

\end{tikzpicture}}   \caption{Conitzer model}
  \label{fig:CitySizeCon}
\end{subfigure}\hfill
\begin{subfigure}{.32\textwidth}
  \centering
  \resizebox{\textwidth}{!}{% This file was created by tikzplotlib v0.9.6.
\begin{tikzpicture}

\begin{axis}[
legend columns=2, 
legend cell align={left},
legend style={
	fill opacity=0.8,
	draw opacity=1,
	draw=none,
	text opacity=1,
	at={(0.5,1.3)},
	line width=1.5pt,
	anchor=north,
	/tikz/column 2/.style={
		column sep=10pt,
	}
},
legend entries={$m=10$,
	Mean,
	$m=25$,
	90th quantile,
	$m=50$,
	10th quantile},
tick align=outside,
tick pos=left,
x grid style={white!69.0196078431373!black},
xlabel={number of voters},
xmin=0.75, xmax=209.25,
xtick style={color=black},
y grid style={white!69.0196078431373!black},
ylabel={normalized positionwise distance},
ymin=0.00664459665932848, ymax=0.226489166962571,
ytick style={color=black},
ytick={0,0.025,0.05,0.075,0.1,0.125,0.15,0.175,0.2,0.225,0.25},
yticklabels={0.000,0.025,0.050,0.075,0.100,0.125,0.150,0.175,0.200,0.225,0.250}
]
\addlegendimage{red}
\addlegendimage{gray}
\addlegendimage{blue}
\addlegendimage{gray,dashed}
\addlegendimage{green!50.1960784313725!black}
\addlegendimage{gray,dotted}
\addplot [line width=1.5, red]
table {%
10 0.168326400286622
15 0.140119986297055
20 0.125351927520877
25 0.112237768199182
30 0.102257082841049
35 0.0955467348214653
40 0.0900425020385195
45 0.08679930816508
50 0.0791544803557447
55 0.0775700282670482
60 0.0738516558458408
65 0.0697275444733995
70 0.0673703342386418
75 0.0658255173145521
80 0.0623208311543508
85 0.0620889143706911
90 0.0599042371525006
95 0.0591664255734985
100 0.0583821080206432
105 0.0557075821307742
110 0.05433053700291
115 0.0539124376751539
120 0.0529718078293772
125 0.0513715944685644
130 0.0500676323209814
135 0.0500557995004805
140 0.0488186980674109
145 0.0469772429922313
150 0.0465592625134182
155 0.0458619856392504
160 0.0444736880406671
165 0.044798452375325
170 0.0434685520545551
175 0.0435555722548466
180 0.0422454068642768
185 0.0422297962543301
190 0.0410588049582403
195 0.0413805913866375
200 0.0405057659232048
};
%\addlegendentry{Mean 10}
\addplot [line width=1.5, red, dashed]
table {%
10 0.216496231948788
15 0.184908478142637
20 0.170840447554083
25 0.149277936870402
30 0.141668262502009
35 0.127742170282837
40 0.123196027021516
45 0.123113968584574
50 0.108191762390462
55 0.104876894989248
60 0.100499541656763
65 0.0942575507769079
70 0.0931026934420295
75 0.0906354247846387
80 0.0858877900762089
85 0.0856674907305701
90 0.0819610281536976
95 0.0832244386501385
100 0.0815161060372537
105 0.0788055874661288
110 0.0760124020899336
115 0.0739099631339989
120 0.0728574906504064
125 0.0696803079099592
130 0.0704616658375458
135 0.0681119402164989
140 0.0687635396762441
145 0.0636984906300451
150 0.0642193892716684
155 0.0640813657830498
160 0.0595714997570736
165 0.0605039802916122
170 0.0601841237565333
175 0.0607742855000789
180 0.0586192781853518
185 0.0589631381811518
190 0.0565735954763086
195 0.0579948691905222
200 0.0547708372980582
};
%\addlegendentry{90th Quantil10}
\addplot [line width=1.5, red, dotted]
table {%
10 0.124888742918318
15 0.101730606592063
20 0.0901420586488464
25 0.0800819166230433
30 0.0707339189156438
35 0.0681764274384036
40 0.0630681889470328
45 0.0602648995590932
50 0.0560667682439089
55 0.0539865289899436
60 0.0514038956639442
65 0.0500074811533771
70 0.0463014352332914
75 0.0453997838508451
80 0.0430161062204702
85 0.0429757341093412
90 0.0424468822579718
95 0.0410018022990588
100 0.0408039833390803
105 0.0385558954383614
110 0.037384219021734
115 0.0371079474461801
120 0.0367716261919475
125 0.0361923349456805
130 0.0346884280993519
135 0.0347308207754836
140 0.0332372992257164
145 0.0324079210672415
150 0.0318226066700211
155 0.031681641189833
160 0.0314394012093544
165 0.0318084300411018
170 0.0304956428692535
175 0.0305649418835387
180 0.0290967130779543
185 0.0291791489347816
190 0.0291497252994415
195 0.0286488310941918
200 0.0282750998397894
};
%\addlegendentry{10th Quantil 10}
\addplot [line width=1.5, blue]
table {%
10 0.110877512001898
15 0.0939781714522113
20 0.0835211033293774
25 0.0768120907976603
30 0.070410797921284
35 0.0661501428961921
40 0.0629556546004889
45 0.0598712125957872
50 0.057840093079841
55 0.0541439557497431
60 0.0522662785559451
65 0.0515072409587073
70 0.0493812622041785
75 0.0482297714566513
80 0.0467694217519751
85 0.0450903906196711
90 0.0435019149167839
95 0.0424424277112718
100 0.0422681480188084
105 0.0405995054702567
110 0.0389287961652138
115 0.038250009428653
120 0.0385226061236179
125 0.0368329646608431
130 0.0375246977480539
135 0.0352015288925031
140 0.0359943435707529
145 0.0345534972548604
150 0.0335540603088493
155 0.0333838294281882
160 0.0334345434289478
165 0.0326673520158809
170 0.0316332907355629
175 0.031527614858944
180 0.0309250054235581
185 0.0310903265888098
190 0.0305174951409796
195 0.0304516899264276
200 0.0300282938729446
};
%\addlegendentry{Mean 25}
\addplot [line width=1.5, blue, dashed]
table {%
10 0.129915504469178
15 0.110884475661442
20 0.100396413055177
25 0.0936217593458983
30 0.0872563999038763
35 0.0820287821796508
40 0.0793015748101215
45 0.0765873231298218
50 0.074554438287249
55 0.0691852758253495
60 0.0690975253891128
65 0.0691095646103629
70 0.0653014461187503
75 0.0647616198313279
80 0.0617415708328526
85 0.0602290940269505
90 0.057982233479225
95 0.0577015048143669
100 0.0571822462110924
105 0.0532578423807326
110 0.0515353212930047
115 0.0506237802041981
120 0.0518355644214003
125 0.0481118603090111
130 0.0502640521583649
135 0.0464074261074599
140 0.0493717546711336
145 0.0457724167481781
150 0.0445488052563563
155 0.0444731909499611
160 0.0460634013208059
165 0.0435389639917188
170 0.0414235316225901
175 0.0408385522218081
180 0.0418608823157578
185 0.043414353852411
190 0.040178531978745
195 0.040218030102551
200 0.0402393863536417
};
%\addlegendentry{90th Quantil25}
\addplot [line width=1.5, blue, dotted]
table {%
10 0.0942340042370443
15 0.0779928991284508
20 0.0672905005210151
25 0.0617950482652164
30 0.0555983675966183
35 0.0519297415695082
40 0.0488157513273808
45 0.0455640672329957
50 0.0438888906250493
55 0.0415522299832306
60 0.0391846159965588
65 0.0384034955493497
70 0.036984999172497
75 0.0356440963071341
80 0.0349885828210972
85 0.0339505889264938
90 0.0324574941505069
95 0.0313163405660397
100 0.0313598035557124
105 0.030421494417197
110 0.0295978019976666
115 0.029056113236136
120 0.028797964970223
125 0.0277615506887042
130 0.0275672724087669
135 0.0264679246994022
140 0.0263663390879698
145 0.0255658810009034
150 0.0252591272578754
155 0.0245821999213569
160 0.0245426740685406
165 0.0240813989084787
170 0.0236008138648825
175 0.0230705133746736
180 0.0227453310870959
185 0.0224759393980583
190 0.0224373665051714
195 0.0223273637171727
200 0.0221042900673078
};
%\addlegendentry{10th Quantil 25}
\addplot [line width=1.5, green!50.1960784313725!black]
table {%
10 0.0789585733298669
15 0.0670387287021703
20 0.0590204396219656
25 0.0538474062301495
30 0.0496515650929053
35 0.0468802557860582
40 0.0450560457189882
45 0.0427547764708781
50 0.041573312558623
55 0.039584726821469
60 0.0380205622788662
65 0.0366018885070128
70 0.035682914190552
75 0.034623848009061
80 0.0336263457196967
85 0.0332297528354469
90 0.032132464604022
95 0.0317073855404347
100 0.0306461931386015
105 0.0303268590364407
110 0.0297506332541827
115 0.0287381386593966
120 0.0278312582558544
125 0.0278450846737969
130 0.027444949918523
135 0.026559841581875
140 0.0265202623242549
145 0.0262059705186248
150 0.025904793814196
155 0.0249486771546611
160 0.0246108247797088
165 0.0245582097223337
170 0.0238183089121587
175 0.023206668562624
180 0.0238988319380041
185 0.0228571645179647
190 0.0227758184655382
195 0.0226671481619349
200 0.0224234573138848
};
%\addlegendentry{Mean 50}
\addplot [line width=1.5, green!50.1960784313725!black, dashed]
table {%
10 0.092503797992073
15 0.0784724435572367
20 0.0691743772234417
25 0.0623911262866822
30 0.0577950370672534
35 0.0545126691621726
40 0.053608022411036
45 0.0504354217590991
50 0.0496126004211145
55 0.0476532934373979
60 0.0466080076325164
65 0.0444021582388352
70 0.0438285798830125
75 0.0428234076321229
80 0.0424688674915211
85 0.041438637109511
90 0.0404585093112979
95 0.0408142819899935
100 0.0394265739103794
105 0.0384731591651674
110 0.0389216099900513
115 0.0367925916272637
120 0.0365413162993219
125 0.036132560254472
130 0.0358903427848754
135 0.0349860121733688
140 0.0348891672872669
145 0.0348981162814189
150 0.0356427464962294
155 0.0331425176203747
160 0.0327223680802673
165 0.0324823373400462
170 0.031192114832395
175 0.0296962025819959
180 0.0319059218864781
185 0.0300401290889394
190 0.0303854764507764
195 0.0305479270586749
200 0.029893490389142
};
%\addlegendentry{90th Quantil50}
\addplot [line width=1.5, green!50.1960784313725!black, dotted]
table {%
10 0.0673204608473013
15 0.0574741414870397
20 0.0505301120049928
25 0.0457420242821027
30 0.0413640354474453
35 0.0389006175854198
40 0.0373382625149208
45 0.0350220088849616
50 0.0332396863563785
55 0.0319122155627527
60 0.0301285420128167
65 0.0291987107958068
70 0.028293769999636
75 0.0270840622115013
80 0.0265429736632807
85 0.0260309533892869
90 0.0246813677700908
95 0.0239614259779331
100 0.0233830662192656
105 0.0231413246953888
110 0.0225983090497592
115 0.0222316370668324
120 0.0214777042653077
125 0.0213356950462972
130 0.020934297804121
135 0.0200878818032908
140 0.0203240479892552
145 0.0198065963803437
150 0.0193622297186961
155 0.0188769265692487
160 0.0187989958555088
165 0.0181824609081717
170 0.0178288381249388
175 0.0179333609328219
180 0.0178618363655721
185 0.0171801135576539
190 0.016917506629646
195 0.0169959287139908
200 0.0166375316731122
};
%\addlegendentry{10th Quantil 50}
\end{axis}

\end{tikzpicture}}   \caption{Walsh model}
  \label{fig:sub1}
\end{subfigure}\hfill
\begin{subfigure}{.32\textwidth}
  \centering
  \resizebox{\textwidth}{!}{% This file was created by tikzplotlib v0.9.6.
\begin{tikzpicture}

\begin{axis}[
legend columns=2, 
legend cell align={left},
legend style={
	fill opacity=0.8,
	draw opacity=1,
	draw=none,
	text opacity=1,
	at={(0.5,1.3)},
	line width=1.5pt,
	anchor=north,
	/tikz/column 2/.style={
		column sep=10pt,
	}
},
legend entries={$m=10$,
	Mean,
	$m=25$,
	90th quantile,
	$m=50$,
	10th quantile},
tick align=outside,
tick pos=left,
x grid style={white!69.0196078431373!black},
xlabel={number of voters},
xmin=0.75, xmax=209.25,
xtick style={color=black},
y grid style={white!69.0196078431373!black},
ylabel={normalized positionwise distance},
ymin=0.0279687665474855, ymax=0.33349950615105,
ytick style={color=black},
ytick={0,0.05,0.1,0.15,0.2,0.25,0.3,0.35},
yticklabels={0.00,0.05,0.10,0.15,0.20,0.25,0.30,0.35}
]
\addlegendimage{red}
\addlegendimage{gray}
\addlegendimage{blue}
\addlegendimage{gray,dashed}
\addlegendimage{green!50.1960784313725!black}
\addlegendimage{gray,dotted}
\addplot [line width=1.5, red]
table {%
10 0.242816452893118
15 0.205219868222496
20 0.177653611350917
25 0.162371984183638
30 0.142889830532173
35 0.136042732181818
40 0.123882536940239
45 0.119700544133143
50 0.114101554924302
55 0.108098105762549
60 0.104476277667916
65 0.100055869832738
70 0.0968599346006346
75 0.0924498403515441
80 0.0900541343836283
85 0.0872709188762243
90 0.0844049709698072
95 0.0827164108230911
100 0.0805281644590161
105 0.0796673948944292
110 0.0757990222098776
115 0.0749711105461712
120 0.0732386104216931
125 0.0715545899429443
130 0.070884330472368
135 0.0697950521691919
140 0.0687109752219509
145 0.0675735605139323
150 0.0658010941333222
155 0.0642545181052816
160 0.0620797825543059
165 0.0622498379105871
170 0.0607877306949883
175 0.0605612652188148
180 0.0598974324493537
185 0.059813713425714
190 0.0592107093711431
195 0.0591696023583798
200 0.0581969450596212
};
%\addlegendentry{Mean 10}
\addplot [line width=1.5, red, dashed]
table {%
10 0.319611745259979
15 0.269693825935776
20 0.235009478190632
25 0.213395825005842
30 0.187679942731153
35 0.175981491595281
40 0.157343754505344
45 0.156702974258047
50 0.151262785008911
55 0.137110231026556
60 0.13351483465257
65 0.13070240729686
70 0.12304552712506
75 0.121085718539402
80 0.117024148489828
85 0.113094371083108
90 0.109211123187208
95 0.106846346773885
100 0.102269893698394
105 0.100776066938697
110 0.0990080488027271
115 0.0976166253881247
120 0.0928756335722
125 0.0930889251218601
130 0.0909708369396288
135 0.0904676178808917
140 0.0889174237212336
145 0.0872901215472005
150 0.0847013796115238
155 0.0815114848936598
160 0.0802296474406665
165 0.0790659339985613
170 0.0795474107476008
175 0.0770437658722089
180 0.0765919633393148
185 0.0767767163275769
190 0.0772349714640189
195 0.0769924649150309
200 0.077062968858941
};
%\addlegendentry{90th Quantil10}
\addplot [line width=1.5, red, dotted]
table {%
10 0.177130686830391
15 0.15190814771887
20 0.126500948597536
25 0.117942224871932
30 0.104100389564128
35 0.0983012117276138
40 0.0901515126679883
45 0.0865964457674912
50 0.0816567254788948
55 0.0786133243166136
60 0.0771062195470387
65 0.074100025489249
70 0.0708428159017455
75 0.068206277800103
80 0.0659398633845602
85 0.0641828227759988
90 0.0618731601615295
95 0.0615253619849682
100 0.059778878154854
105 0.060764123358284
110 0.0552713973306571
115 0.0553994809926459
120 0.0548421820569219
125 0.0522834360543074
130 0.0513578228906474
135 0.0520141530781984
140 0.0505675102358289
145 0.0495875731336348
150 0.0491870292067302
155 0.0480073090007698
160 0.0457125840242952
165 0.0464315075427294
170 0.0430591478111279
175 0.0454692949357471
180 0.0437108055942438
185 0.0430956297582298
190 0.042971614621241
195 0.0428090453684104
200 0.0418565274385566
};
%\addlegendentry{10th Quantil 10}
\addplot [line width=1.5, blue]
table {%
10 0.227337797365557
15 0.188545458418532
20 0.162104245200037
25 0.150133346270662
30 0.13544797081351
35 0.125475016634386
40 0.117349617452023
45 0.110777344086345
50 0.105066820772078
55 0.099965384767737
60 0.0964738337810372
65 0.0925925918104118
70 0.0893788390462839
75 0.0858255884187141
80 0.0829345593317815
85 0.0816951443608298
90 0.0786744770660059
95 0.0769526794422358
100 0.0744741359640522
105 0.0744204056852509
110 0.0717817619109962
115 0.069936156300992
120 0.068517105519191
125 0.0675641670027443
130 0.0658155784402931
135 0.0650053487781686
140 0.0638848298206293
145 0.0622763905985854
150 0.0613649956046645
155 0.060691875756169
160 0.0590131279399942
165 0.0588465599903467
170 0.0577428733287954
175 0.0563025822323965
180 0.0560171161470778
185 0.0553142052780514
190 0.0548639481832521
195 0.0539658758207523
200 0.0532955380885948
};
%\addlegendentry{Mean 25}
\addplot [line width=1.5, blue, dashed]
table {%
10 0.271439259517222
15 0.225787757813501
20 0.196503133308859
25 0.179023155072131
30 0.162740666992389
35 0.148984612749281
40 0.141592377181559
45 0.131112869060598
50 0.125023473788483
55 0.119582658148227
60 0.114037526323675
65 0.109162563313461
70 0.105865909417984
75 0.101973943318276
80 0.0992884271106539
85 0.0971954309386022
90 0.093172167289035
95 0.0925408465626578
100 0.088763955689501
105 0.0885952703887597
110 0.0847566227621148
115 0.0833965006544111
120 0.0808595841369914
125 0.0803746190401188
130 0.079220521996747
135 0.0767684675851622
140 0.0748115932792783
145 0.0736785823730035
150 0.0722673375607253
155 0.0728895724748817
160 0.0700467206889208
165 0.0694809738797351
170 0.0689733474739254
175 0.0673430857653819
180 0.0667280241392571
185 0.0653087409472881
190 0.06532136628655
195 0.0649309914127046
200 0.0624565007169552
};
%\addlegendentry{90th Quantil25}
\addplot [line width=1.5, blue, dotted]
table {%
10 0.184664751026923
15 0.155085894984838
20 0.130552695596662
25 0.122478048908166
30 0.112056173554335
35 0.103288766403253
40 0.0958555192677662
45 0.0888151437285929
50 0.086300783804976
55 0.0821693165962083
60 0.0787878038164658
65 0.0763471276557539
70 0.0716005825625661
75 0.069967814131031
80 0.0674238982440259
85 0.0670157176370804
90 0.063881154189137
95 0.0631533878524071
100 0.061361761441311
105 0.0603854946653323
110 0.0583651461472842
115 0.0575230390473735
120 0.0567033125777156
125 0.0549799973745114
130 0.0540441751829348
135 0.0537581366088349
140 0.0532279364084108
145 0.0513057636294084
150 0.0504750706058425
155 0.0491075305548013
160 0.0482418719564941
165 0.0484967575677169
170 0.0481052293016826
175 0.0459078289595289
180 0.0455332759622252
185 0.0449968042693889
190 0.0451174035766878
195 0.0442818735802295
200 0.0445892167219426
};
%\addlegendentry{10th Quantil 25}
\addplot [line width=1.5, green!50.1960784313725!black]
table {%
10 0.216183032506321
15 0.181701499434811
20 0.154476777705975
25 0.140448273150342
30 0.127158884935534
35 0.119711839194081
40 0.111511271405446
45 0.105024526580116
50 0.100148594216306
55 0.0952307219563561
60 0.0915424757796431
65 0.0876690148453713
70 0.0842080241062961
75 0.081595647309434
80 0.0794746063617377
85 0.0766310286038585
90 0.0748555172218389
95 0.0732648124664704
100 0.0711636328166871
105 0.0696003603668055
110 0.0681465362375884
115 0.0662639314871586
120 0.0648865752198031
125 0.0632015114993245
130 0.0618465100328456
135 0.0614605477924032
140 0.0595493167924488
145 0.0586633889204747
150 0.0577912660601168
155 0.0572947187141552
160 0.0564266356096693
165 0.0552564604821414
170 0.0545098975864973
175 0.0539111173690037
180 0.0526330377768162
185 0.0523555146478457
190 0.0516840416183829
195 0.0510013923438388
200 0.050742410240629
};
%\addlegendentry{Mean 50}
\addplot [line width=1.5, green!50.1960784313725!black, dashed]
table {%
10 0.248373319997781
15 0.204777550198107
20 0.178533753413932
25 0.161268121201899
30 0.146894839415702
35 0.137206580022549
40 0.127510650653374
45 0.120447052090803
50 0.114208068283003
55 0.107877754578085
60 0.103223479513333
65 0.0994455781122345
70 0.0963795648317358
75 0.0935136040936556
80 0.0901981604309485
85 0.0884782393960657
90 0.0856137885927441
95 0.0832937962489885
100 0.0806757944957685
105 0.0794553442898611
110 0.0783655349229556
115 0.075762828452058
120 0.0742483177175763
125 0.07218448909731
130 0.0713194280523965
135 0.0696945997403509
140 0.0676222067177854
145 0.0669728168545093
150 0.0656197386102104
155 0.0653437834285651
160 0.0643521979396075
165 0.0628025934505253
170 0.0615131222405063
175 0.0609214622349881
180 0.0598829747099015
185 0.0591399111837489
190 0.0581138224621864
195 0.0584958358119686
200 0.0579749212582199
};
%\addlegendentry{90th Quantil50}
\addplot [line width=1.5, green!50.1960784313725!black, dotted]
table {%
10 0.185004338629327
15 0.157673880360403
20 0.133635213740828
25 0.120389196544436
30 0.108665336610092
35 0.103365198778622
40 0.0948967950357066
45 0.0907688160537793
50 0.086839463604156
55 0.0829789770941582
60 0.0796314329846952
65 0.0764045526729164
70 0.0720338857313845
75 0.0703925681648749
80 0.0682894400883339
85 0.0650693626348723
90 0.0644522971139717
95 0.0640598755305642
100 0.0620513828512076
105 0.0599689020921617
110 0.0592155142345399
115 0.0570158963505278
120 0.0556488722226018
125 0.0540348776600525
130 0.0530584405848039
135 0.0531555474345193
140 0.0516636789357263
145 0.0508221191967779
150 0.0498730474563506
155 0.0492280340133682
160 0.0491184969957003
165 0.0478882485957218
170 0.0477968345942888
175 0.0469023139410968
180 0.0458009103222179
185 0.0454316896741767
190 0.0449831482353428
195 0.0440821891238746
200 0.0441083312060516
};
%\addlegendentry{10th Quantil 50}
\end{axis}

\end{tikzpicture}}   \caption{GS/caterpillar model}
  \label{fig:sub2}
\end{subfigure}
\caption{For different vote distributions, behavior of the normalized positionwise distance between elections sampled from this distributions  
		 and the distribution's frequency matrix, for $10$/$25$/$50$ candidates and between $10$ and $200$ voters.}
\label{fig:vote_div1}
\end{figure*}

For each pair of matrices from the created set, we compute their normalized positionwise distance for $100$ candidates. Subsequently, for $m\in \{4,6,\dots, 98,100\}$ candidates, we compute the normalized positionwise distance of the frequency matrices of the two considered models for this number of candidates as well as the absolute and relative difference between their normalized distance for $m$ and $100$ candidates (where we normalize by their normalized distance for $100$ candidates). Finally, for each $m\in \{4,6,\dots, 98,100\}$, we take the maximum over the computed absolute/relative differences for all pairs of matrices.  In \Cref{fig:maxDiff}, we present these maxima for all considered values of $m$. Examining the maximum absolute difference (the blue line in \Cref{fig:maxDiff}), what stands out is that for $20$ or more candidates the normalized positionwise distance of any pair of considered vote distributions/matrices differs only by at most $0.0287$ from the pair's normalized positionwise distance for $100$ candidates (for $10$ or more candidates the error is at most $0.0547$). As the diameter of our space has at least length $1$, this change is quite small, and the considered vote distributions indeed remain at nearly the same distance for more than $20$ candidates. Considering the relative difference (the red line in \Cref{fig:maxDiff}), the picture appears to be a bit worse: for $20$ or more candidates, the normalized positionwise distance of any pair of considered vote distributions/matrices differs at most by $11.44$\% from their normalized positionwise distance for $100$ candidates. Nevertheless, this value is still relatively low, indicating an overall high robustness of the normalized positionwise distances of each pair of considered distributions with respect to the number of candidates.

\subsection{Skeleton Map for Different Number of Candidates}\label{app:maps}

After we have provided various arguments for why large parts of the skeleton map are presumably quite robust with respect to changing the number of candidates in the previous two subsections, in \Cref{fig:skeletonMapVaryingM}, we present the skeleton map for $5/10/25/50$ candidates. While the map for $5$ candidates looks a bit different from the other maps, the maps for $10$, $25$, and $50$ candidates differ only in that the frequency matrix for GS/caterpillar moves closer to $\AN$ and that the frequency matrix for the Walsh model moves closer to $\ID$ (both phenomena that we have already observed earlier).

\begin{figure*}
\centering
\begin{subfigure}{.45\textwidth}
  \centering
  \resizebox{0.711\textwidth}{!}{% This file was created by tikzplotlib v0.9.6.
\begin{tikzpicture}

\begin{axis}[
legend columns=2, 
legend cell align={left},
legend style={
	fill opacity=0.8,
	draw opacity=1,
	draw=none,
	text opacity=1,
	at={(0.5,1.3)},
	line width=1.5pt,
	anchor=north,
	/tikz/column 2/.style={
		column sep=10pt,
	}
},
legend entries={$m=10$,
	Mean,
	$m=25$,
	90th quantile,
	$m=50$,
	10th quantile},
tick align=outside,
tick pos=left,
x grid style={white!69.0196078431373!black},
xlabel={number of voters},
xmin=0.75, xmax=204.25,
xtick style={color=black},
y grid style={white!69.0196078431373!black},
ylabel={normalized positionwise distance},
ymin=0.0387121251936663, ymax=0.353712138430865,
ytick style={color=black},
ytick={0,0.05,0.1,0.15,0.2,0.25,0.3,0.35,0.4},
yticklabels={0.00,0.05,0.10,0.15,0.20,0.25,0.30,0.35,0.40}
]
\addlegendimage{red}
\addlegendimage{gray}
\addlegendimage{blue}
\addlegendimage{gray,dashed}
\addlegendimage{green!50.1960784313725!black}
\addlegendimage{gray,dotted}
\addplot [line width=1.5, red]
table {%
10 0.280343454872267
15 0.238063993308355
20 0.201737392305613
25 0.184323237168819
30 0.167387217683624
35 0.154660905465446
40 0.141002530745146
45 0.13687093981156
50 0.129278792234871
55 0.124130396129319
60 0.116321555192987
65 0.114267294819119
70 0.108542577708234
75 0.105837708652734
80 0.102728535737842
85 0.0998865186193525
90 0.0972963015571462
95 0.0951105774316297
100 0.0916535362618213
105 0.0899158280205471
110 0.0874609755540285
115 0.086079492101427
120 0.0839831640639088
125 0.0821288902892007
130 0.0806627860458361
135 0.0784077823779198
140 0.0767785025899759
145 0.0758836635979212
150 0.0747959613367313
155 0.0732961879875699
160 0.0724981054680591
165 0.0706651375378774
170 0.0700089147124402
175 0.0693073578210867
180 0.0680740770451122
185 0.0670859951845775
190 0.0665055831933789
195 0.0654268340696816
200 0.0645924243466421
};
%\addlegendentry{Mean 10}
\addplot [line width=1.5, red, dashed]
table {%
10 0.339393956010992
15 0.287070731886409
20 0.245454565571113
25 0.220606063396642
30 0.202020209237482
35 0.18450218767605
40 0.168181813993689
45 0.163636377250606
50 0.156484871019017
55 0.149311289462176
60 0.141414141598525
65 0.137062940261129
70 0.129870137538422
75 0.126101014334144
80 0.123484843269442
85 0.119429595181436
90 0.116498319714358
95 0.115470487884048
100 0.110909091049072
105 0.108542573813236
110 0.105234162554596
115 0.10197628745527
120 0.0999999935089639
125 0.0996606095044902
130 0.0974358988304933
135 0.0945005557753823
140 0.0917748996931495
145 0.0903030349562566
150 0.0901010166402116
155 0.0875855292108926
160 0.0871212013968916
165 0.0855831031885111
170 0.0837789711175543
175 0.0838095167821104
180 0.0808080844355352
185 0.0809336637457212
190 0.0800637947790551
195 0.0783216712262594
200 0.0773030329269893
};
%\addlegendentry{90th Quantil10}
\addplot [line width=1.5, red, dotted]
table {%
10 0.224242445961996
15 0.191919204344352
20 0.160606080327522
25 0.149090909754688
30 0.135151524209615
35 0.126406937702136
40 0.115151520028259
45 0.113131325206522
50 0.105454549238537
55 0.100826442918994
60 0.0939393990532015
65 0.0923076964463248
70 0.0883116957138885
75 0.08646464878411
80 0.0825757573838487
85 0.0816042797019084
90 0.0793939390530189
95 0.0781499138948592
100 0.0745454650485154
105 0.0724386773439068
110 0.0705234139480374
115 0.0714097497815436
120 0.0686868688825405
125 0.0669090937258619
130 0.0657342688942497
135 0.0628507365670168
140 0.0627705666603464
145 0.0616509884708759
150 0.0606060622542193
155 0.0588465328920971
160 0.0590909086167812
165 0.0580348945025242
170 0.0566844998209765
175 0.055913416747794
180 0.0548821585422212
185 0.055036855003599
190 0.0539074992140134
195 0.0535975100303238
200 0.053030307613539
};
%\addlegendentry{10th Quantil 10}
\addplot [line width=1.5, blue]
table {%
10 0.299676303061692
15 0.243022676315278
20 0.2086724503207
25 0.184882048142131
30 0.17149617132692
35 0.158640580370452
40 0.148244161666252
45 0.139052717500286
50 0.13032980548003
55 0.125851693517373
60 0.120413417374739
65 0.115994732832545
70 0.111877298615261
75 0.105685478829382
80 0.104672320800571
85 0.101412829248032
90 0.0985938127559628
95 0.0960834037979545
100 0.0932259659558529
105 0.0911417659035532
110 0.0893533504998329
115 0.0871693984298662
120 0.0858390782546037
125 0.0835492342207223
130 0.0821550608583941
135 0.0801401942415246
140 0.0790414220612431
145 0.0776373579073697
150 0.075908658032407
155 0.0745665229924578
160 0.0742097956391248
165 0.072897418236154
170 0.0717865820878591
175 0.0704054050010223
180 0.0690369014972552
185 0.0689169267691278
190 0.0675258431156936
195 0.0665766654403444
200 0.0663139462802792
};
%\addlegendentry{Mean 25}
\addplot [line width=1.5, blue, dashed]
table {%
10 0.33598079513042
15 0.270256434757119
20 0.231548095281379
25 0.206923077148027
30 0.190076938601067
35 0.178681329071808
40 0.165302890990503
45 0.155047021875194
50 0.145980767903921
55 0.139276225872947
60 0.1347500063667
65 0.129766284695003
70 0.1244011094928
75 0.11834616791636
80 0.117216354343467
85 0.112972856688206
90 0.109596171415447
95 0.107726724793275
100 0.104038461852962
105 0.100866311793932
110 0.0994493023707316
115 0.0979949840263893
120 0.0955657110970396
125 0.0939307709126017
130 0.0916582971175488
135 0.0904515723608291
140 0.0878090780720903
145 0.0873766640469862
150 0.0849423099434576
155 0.0831861047038379
160 0.0830901459289285
165 0.0818846137674812
170 0.0798834886186971
175 0.0787362648102527
180 0.0771848392332546
185 0.0773825364769436
190 0.0750334005507354
195 0.0738915252185856
200 0.0741394263969806
};
%\addlegendentry{90th Quantil25}
\addplot [line width=1.5, blue, dotted]
table {%
10 0.267307724388173
15 0.215230800677091
20 0.185769240835753
25 0.164192303112493
30 0.153256425334929
35 0.140208797389641
40 0.130764430085233
45 0.124307695131462
50 0.114615385565692
55 0.112482521209257
60 0.106599365393273
65 0.103159772734552
70 0.100071433980842
75 0.0935512917223745
80 0.0932163477848427
85 0.0905588284922907
90 0.0882457324751438
95 0.0848178167442361
100 0.0827788527472876
105 0.0810989068391231
110 0.0797884575485324
115 0.07730602071925
120 0.0757516058323045
125 0.0741461582437086
130 0.0728964592249563
135 0.0713005697847201
140 0.070493135565462
145 0.0689164453186095
150 0.0673717989290778
155 0.0664255555790777
160 0.0656262016407429
165 0.0643449889746709
170 0.0640735311030697
175 0.0623626404925464
180 0.0616453058623637
185 0.0609958414534608
190 0.0603350190678611
195 0.059427030378272
200 0.0591250057565048
};
%\addlegendentry{10th Quantil 25}
\addplot [line width=1.5, green!50.1960784313725!black]
table {%
10 0.297842436178202
15 0.243080092964753
20 0.210054279344858
25 0.186851151043064
30 0.171351765162875
35 0.158906970713537
40 0.14845754508961
45 0.140540113548404
50 0.131547981124318
55 0.126583955999819
60 0.121433122082117
65 0.116795555483073
70 0.112586362310178
75 0.108003963737608
80 0.104943883533756
85 0.101295484864759
90 0.0990892300000653
95 0.0960074100399665
100 0.0933361352791823
105 0.0919950995702433
110 0.0892895927150067
115 0.0877572609177388
120 0.0857597746151653
125 0.0839081816542104
130 0.0820229190979384
135 0.0806400131485986
140 0.0793533123728214
145 0.0781256892445873
150 0.076602782743134
155 0.0755335914980428
160 0.0740423686391323
165 0.073191411052728
170 0.0720501717930811
175 0.0710509473073981
180 0.0696332235982595
185 0.0694700195157409
190 0.0680812538689913
195 0.0670762661804876
200 0.066393661267745
};
%\addlegendentry{Mean 50}
\addplot [line width=1.5, green!50.1960784313725!black, dashed]
table {%
10 0.321978397646389
15 0.261999231735233
20 0.228115255331524
25 0.202746711924028
30 0.183758334167937
35 0.17158773268024
40 0.160092464726673
45 0.151517692009727
50 0.142333734852551
55 0.136763947374117
60 0.131196498213445
65 0.125944053649115
70 0.121428242907608
75 0.116896378240209
80 0.113024027483808
85 0.109564309292288
90 0.108292391959305
95 0.10289379762148
100 0.10110444278643
105 0.0996597613760305
110 0.0967146215752429
115 0.0949417023504732
120 0.0923469485903979
125 0.0910588382040055
130 0.0884151986765475
135 0.0870805227201061
140 0.0863814197306265
145 0.0839584456468258
150 0.0831884820808043
155 0.081700352221137
160 0.0797428030357445
165 0.0788935261408464
170 0.0782052209683541
175 0.0766715849755334
180 0.0752060964507904
185 0.0743679996573177
190 0.0736574238184102
195 0.0721308909736726
200 0.0718607473830018
};
%\addlegendentry{90th Quantil50}
\addplot [line width=1.5, green!50.1960784313725!black, dotted]
table {%
10 0.275822335126556
15 0.223575869863596
20 0.193738298164625
25 0.171654267539801
30 0.158791542203486
35 0.145437156521052
40 0.137346956819868
45 0.129824741389759
50 0.12114765821349
55 0.116340507022446
60 0.11233615063104
65 0.107993362168474
70 0.103906035903251
75 0.0996846782280599
80 0.0965270296346266
85 0.0936159973382717
90 0.0913400138720494
95 0.0889217244394097
100 0.0860240082335876
105 0.0845167886548522
110 0.0822638927150567
115 0.0810198860275386
120 0.0790892532810855
125 0.0772331416004283
130 0.0759084082641253
135 0.0748073483032419
140 0.0728598955455197
145 0.0719953662072983
150 0.0707963255407358
155 0.0699261935894005
160 0.0682749155367694
165 0.0676143955946917
170 0.0663608625269577
175 0.0653267015659419
180 0.0640066779649057
185 0.0644268544914038
190 0.0626190739792256
195 0.0619992038275318
200 0.0607779156813557
};
%\addlegendentry{10th Quantil 50}
\end{axis}

\end{tikzpicture}}   \caption{Impartial Culture}
  \label{fig:sub1}
\end{subfigure}\hfill
\begin{subfigure}{.45\textwidth}
  \centering
  \resizebox{0.711\textwidth}{!}{% This file was created by tikzplotlib v0.9.6.
\begin{tikzpicture}

\begin{axis}[
legend columns=2, 
legend cell align={left},
legend style={
	fill opacity=0.8,
	draw opacity=1,
	draw=none,
	text opacity=1,
	at={(0.5,1.3)},
	line width=1.5pt,
	anchor=north,
	/tikz/column 2/.style={
		column sep=10pt,
	}
},
legend entries={$m=16$,
	Mean,
	$m=32$,
	90th quantile,
	$m=64$,
	10th quantile},
tick align=outside,
tick pos=left,
x grid style={white!69.0196078431373!black},
xlabel={number of voters},
xmin=0.75, xmax=204.25,
xtick style={color=black},
y grid style={white!69.0196078431373!black},
ylabel={normalized positionwise distance},
ymin=0.020622295129276, ymax=0.494684741888111,
ytick style={color=black},
ytick={0,0.1,0.2,0.3,0.4,0.5},
yticklabels={0.0,0.1,0.2,0.3,0.4,0.5}
]
\addlegendimage{red}
\addlegendimage{gray}
\addlegendimage{blue}
\addlegendimage{gray,dashed}
\addlegendimage{green!50.1960784313725!black}
\addlegendimage{gray,dotted}
\addplot [line width=1.5, red]
table {%
10 0.302517159250291
15 0.241568141000382
20 0.207801965249885
25 0.18752244533949
30 0.169225489305442
35 0.156597271100928
40 0.145615199468226
45 0.136399727652689
50 0.133385586716454
55 0.128009802899378
60 0.121614870092083
65 0.116553204821678
70 0.109238376056301
75 0.10664042474198
80 0.102086274516641
85 0.101051930427478
90 0.101267753950959
95 0.096126083148899
100 0.093682352864567
105 0.0937792473315608
110 0.0880337336058272
115 0.0855586721588876
120 0.08394020041838
125 0.0831690581996797
130 0.0818352940942289
135 0.0795877820892515
140 0.0770052537674851
145 0.0776310207905869
150 0.0758728418470013
155 0.073874019580948
160 0.0737656861865053
165 0.072539349433619
170 0.0680309694724924
175 0.0719085580701483
180 0.0689004913163828
185 0.0691209181204584
190 0.0664017798922518
195 0.0673116276728756
200 0.0655877454798742
};
%\addlegendentry{Mean 10}
\addplot [line width=1.5, red, dashed]
table {%
10 0.470411775594249
15 0.367745091827477
20 0.31852941107224
25 0.26759998361854
30 0.249156869961058
35 0.233478987545652
40 0.212647057223846
45 0.198830056115985
50 0.200252935899531
55 0.197096256306943
60 0.177627456065486
65 0.169004523692762
70 0.161668061929591
75 0.157862742569517
80 0.157970587151016
85 0.157455005347729
90 0.148049017410068
95 0.139304955465829
100 0.137088225624141
105 0.139920160183135
110 0.128074863364591
115 0.124363174114157
120 0.125539218621219
125 0.121602353887523
130 0.119710400192177
135 0.121222224761458
140 0.112697481464814
145 0.112516223446411
150 0.116576468203874
155 0.108160346856012
160 0.112132351280135
165 0.107102499950458
170 0.0999861569746452
175 0.107609244076645
180 0.103728754524799
185 0.0993155794020961
190 0.0990108343389104
195 0.0995000067747691
200 0.0979529440928908
};
%\addlegendentry{90th Quantil10}
\addplot [line width=1.5, red, dotted]
table {%
10 0.199705885920454
15 0.157421563951408
20 0.133794122750268
25 0.123317648785956
30 0.109313729558797
35 0.102117648365743
40 0.0911764750585836
45 0.0884803993211073
50 0.0848764697199359
55 0.0806737935630714
60 0.080284315054031
65 0.0757330313061967
70 0.0696722733886803
75 0.0714254909271703
80 0.0646764719135621
85 0.0636712837394546
90 0.0651209130927044
95 0.064701239913702
100 0.059923529638087
105 0.0601176464995917
110 0.0559812808255939
115 0.0563491037487984
120 0.052980395472225
125 0.0533882335398127
130 0.0521357509406174
135 0.0524629618446617
140 0.0491806773359285
145 0.0500791087834274
150 0.0488686279207468
155 0.0493690688995754
160 0.0460147027890472
165 0.046926021299818
170 0.0449394503411125
175 0.0468924349108163
180 0.0454215731121161
185 0.0440119204056614
190 0.0432972110983203
195 0.0437096565041472
200 0.0421705881637685
};
%\addlegendentry{10th Quantil 10}
\addplot [line width=1.5, blue]
table {%
10 0.297222998742695
15 0.245325318477895
20 0.208932797938498
25 0.184537663274861
30 0.172110683092791
35 0.158947614820643
40 0.148564093703257
45 0.138609564258969
50 0.129307229085001
55 0.124836844715699
60 0.119329645311831
65 0.118539418253095
70 0.111639897345109
75 0.108919465288307
80 0.105272486908316
85 0.100093525871607
90 0.0997090359259568
95 0.097037511201701
100 0.0933271981479886
105 0.0925393992049286
110 0.0905521910761156
115 0.087351498290667
120 0.0863459601005374
125 0.0831511068482859
130 0.0857474060121283
135 0.080664919643733
140 0.0818349872693376
145 0.0789542882258431
150 0.0769952183042108
155 0.0747605457615457
160 0.0744739120741459
165 0.0740591412281685
170 0.0726194422053608
175 0.0693385731534648
180 0.0686469272700278
185 0.066617377177245
190 0.0680895803389363
195 0.0676232700873319
200 0.0658740583867315
};
%\addlegendentry{Mean 25}
\addplot [line width=1.5, blue, dashed]
table {%
10 0.470168625456322
15 0.372823814012088
20 0.318185495056452
25 0.26222872370499
30 0.257699173198374
35 0.236746966236345
40 0.214486821801497
45 0.195892387747633
50 0.189642947808432
55 0.181237991199531
60 0.174987784820633
65 0.16790660453354
70 0.166914007848243
75 0.160384657426227
80 0.155799131757987
85 0.148914304746539
90 0.145711950261878
95 0.143223680383212
100 0.134129035904392
105 0.139187428297317
110 0.133996935561299
115 0.126942816888504
120 0.127853129417403
125 0.120416423313516
130 0.133305034594736
135 0.115229849474193
140 0.122852422342338
145 0.113573668555212
150 0.11470331919934
155 0.106077716025985
160 0.109244865862396
165 0.10991613624175
170 0.105655710030992
175 0.101726955323428
180 0.0974617006568528
185 0.0957619668593592
190 0.100218396840885
195 0.0980556721326587
200 0.0957917878611827
};
%\addlegendentry{90th Quantil25}
\addplot [line width=1.5, blue, dotted]
table {%
10 0.198049856335624
15 0.159707975496819
20 0.137280060466696
25 0.12333650485424
30 0.114075028320605
35 0.10336719561106
40 0.0946480965187167
45 0.0915033405810551
50 0.0861290336212367
55 0.0828435715692134
60 0.0781573778548199
65 0.0768945941264547
70 0.0732933604133881
75 0.0716280604212865
80 0.0694831382554254
85 0.0638118849133204
90 0.0639397987844614
95 0.0631409124987435
100 0.0608812312997025
105 0.0598533708290839
110 0.057685285325427
115 0.058661224055063
120 0.056169350543592
125 0.0550299121716942
130 0.0545888785742595
135 0.0536542855629211
140 0.0518323201292858
145 0.0511220809869752
150 0.0499912048091899
155 0.0506406652929584
160 0.0488159829930913
165 0.0486471391689917
170 0.0468697167823741
175 0.0457860273890787
180 0.0456822289197632
185 0.0444639177408474
190 0.0439745712479101
195 0.0436183183560492
200 0.042884897063773
};
%\addlegendentry{10th Quantil 25}
\addplot [line width=1.5, green!50.1960784313725!black]
table {%
10 0.304046828047669
15 0.245706319545224
20 0.209848345279744
25 0.185896638724612
30 0.172207270699215
35 0.157891873399919
40 0.147335811336001
45 0.141027236708521
50 0.132962225090558
55 0.127185705524063
60 0.123359144915468
65 0.114327392743983
70 0.111878007491424
75 0.107820226809851
80 0.106005189662172
85 0.103714977381646
90 0.0982465420659286
95 0.0959042558195581
100 0.093637674900586
105 0.0921250746119983
110 0.0893366811775337
115 0.0877285044658383
120 0.0847464110344792
125 0.0857996787526844
130 0.0832299663038431
135 0.0807704751612357
140 0.0791189487730579
145 0.0763019334907861
150 0.0759165877279847
155 0.0750940927266381
160 0.075573959304502
165 0.0736403022370527
170 0.0725386650873057
175 0.0715962817069252
180 0.0701638546367602
185 0.0672912561503104
190 0.0684967595762078
195 0.0679205508433608
200 0.0664280210960343
};
%\addlegendentry{Mean 50}
\addplot [line width=1.5, green!50.1960784313725!black, dashed]
table {%
10 0.473136448853618
15 0.373305413834341
20 0.319224363487664
25 0.256121324088438
30 0.263085776237232
35 0.240499400757526
40 0.211333349873977
45 0.203650184651161
50 0.199936708516472
55 0.18674196271833
60 0.185853933368847
65 0.16710583216364
70 0.161295782015233
75 0.158996131476629
80 0.157573281581626
85 0.149525609369639
90 0.143889851165328
95 0.140909855678923
100 0.134235165134864
105 0.135562769912675
110 0.128448887697179
115 0.129005974580367
120 0.126036021085831
125 0.12516756484708
130 0.122609072049786
135 0.116224858003631
140 0.114965917055151
145 0.113256787588417
150 0.110696886620611
155 0.106754713725885
160 0.110340209574517
165 0.108309627011446
170 0.106031855122074
175 0.106249924242469
180 0.102399617768892
185 0.0994814830403025
190 0.100448564156931
195 0.100548883627336
200 0.0963228909320611
};
%\addlegendentry{90th Quantil50}
\addplot [line width=1.5, green!50.1960784313725!black, dotted]
table {%
10 0.209750004444585
15 0.160492534175133
20 0.141727104499499
25 0.124233511269857
30 0.112981689658303
35 0.104925887415343
40 0.0974349831310766
45 0.0932475563822376
50 0.088129575882148
55 0.0853494811396459
60 0.0800317497361765
65 0.0763149449452348
70 0.0738499498564491
75 0.0698223161816106
80 0.0679258292186118
85 0.0671859493026094
90 0.0650795144768561
95 0.0644357514516502
100 0.0618921215274122
105 0.0593717507730774
110 0.0591277454176665
115 0.0577176073845135
120 0.0568444745316965
125 0.0549302766377263
130 0.0548345293402126
135 0.0531437068118717
140 0.0525432374393192
145 0.0503040920118818
150 0.0504400815189758
155 0.050651264732956
160 0.0482495435773508
165 0.0477097483810324
170 0.0479551778513351
175 0.046590263126111
180 0.0455790071794593
185 0.044315835064202
190 0.0460579811040681
195 0.04395882035841
200 0.0440448716201652
};
%\addlegendentry{10th Quantil 50}
\end{axis}

\end{tikzpicture}}   \caption{GS/balanced model}
  \label{fig:sub2}
\end{subfigure}
\caption{For different vote distributions, behavior of the normalized positionwise distance between elections sampled from this distributions  
		 and the distribution's frequency matrix, for $10$/$25$/$50$ candidates and between $10$ and $200$ voters.}
\label{fig:vote_div2}
\end{figure*}
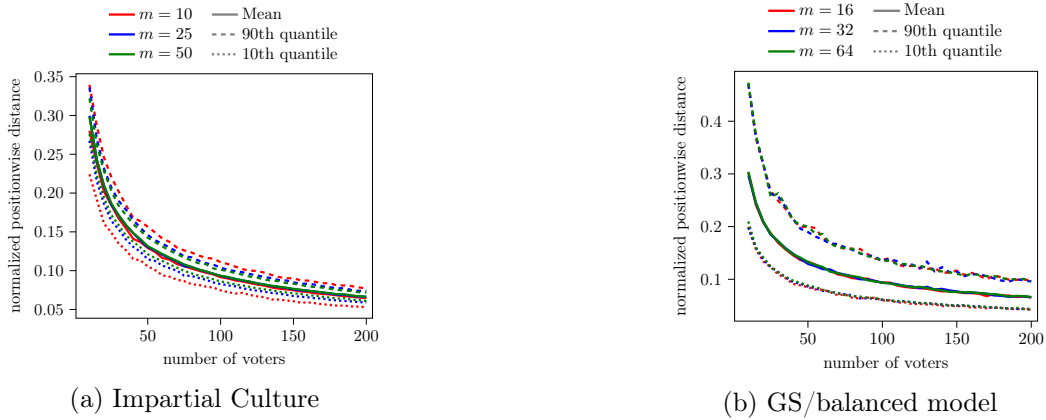

\subsection{Variance of a Vote Distribution} \label{sec:variance} \label{app:variance}
As argued in Section~\ref{sec:setup}, a frequency matrix of a distribution can be interpreted as a matrix of
an ``ideal'' election sampled from this distribution. 
In this section we ask how far, on average,
the elections sampled from our distributions land from the ``ideal''
ones.  This distance may also serve as a measure of ``diversity'' for elections sampled from a given distribution.

For each of our distributions, we consider elections with $10$, $25$,
and $50$ candidates, and vary the number of voters from $10$ to $200$ (with a
step of $5$). For each combination of these parameters, we sample
$600$ elections and, for each election, compute the positionwise distance between its
frequency matrix and the matrix of the respective
distribution.\footnote{For the GS/balanced distribution we consider
  $16$, $32$, and $64$ candidates, as this model requires the number of
  candidates to be a power of two and we do not consider GS/flat trees, as this distribution is too simple.}  We show the results in \Cref{fig:vote_div1,fig:vote_div2}. As expected, for all vote distributions,
increasing the number of votes decreases the average distance of an election from the
distribution's matrix (indeed, in the limit this distance is zero).  What is
more surprising, this distance does not seem to depend on the number
of candidates for the Conitzer, $\IC$, and GS/balanced distribution.
For the Walsh distribution (and, to a lesser extent, for GS/caterpillar), the sampled
elections get slightly closer to the respective matrix as we increase the
number of candidates.  Moreover, if we fix the number of candidates
and voters, then for all our distributions the elections sampled from
them are, roughly, at the same distance from the distribution's
matrix.  For an illustration of this effect consider \Cref{fig:CitySizeCon} for the Conitzer distribution; there, the 10th quantile (dotted) and the 90th
quantile (dashed) only differ by a factor of four. Lastly, comparing the plots for IC and GS/balanced (\Cref{fig:vote_div2}), who both have $\UN$ as their frequency matrix, the average distance of elections sampled form one of these two models to $\UN$ (which is their frequency matrix) is the same for both distributions. However, for IC, the 10th and 90th quantile of the distances of elections to $\UN$ are closer to the average than for GS/balanced, which indicates that IC produces in some sense less varied elections than GS/balanced.  

In \Cref{fig:CitySizeComp} we compare the average distances between
elections sampled from various vote distributions and the distribution's matrices (we fix the number of candidates to $50$ and vary the
number of voters).  While, on average, $\IC$ elections and GS/balanced elections end up at nearly the same
distance from $\UN$ (which is their frequency matrix), Conitzer
elections and GS/caterpillar elections end up closer to their
distribution's matrices, and for Walsh this effect is considerably
stronger. Overall, it is remarkable that even for $200$ voters, for
the Conitzer, $\IC$, GS/balanced, and GS/caterpillar, the
average distance of a sampled election from the respective matrix is
still above $0.05$ (so at least $5\%$ of the diameter of the whole
space).  We also performed the same experiment for the Mallows model
with different values of the normalized dispersion parameter (see \Cref{fig:ovMallows}): For a varying number of voters, we depict the average distance of $600$ elections with $50$ candidates sampled from Mallows model for different values of $\normphi$ to the distribution's frequency matrix. Quite intuitively, the more swaps we make to the central vote (i.e., the higher $\normphi$ is), the higher is the average distance of a sampled elections from the distribution's frequency matrix. 

It is interesting to contrast the data from \Cref{fig:CitySizeCon}
with the map of \citet{boe-bre-fal-nie-szu:t:compass} in
\Cref{fig:compass_map}. \citeauthor{boe-bre-fal-nie-szu:t:compass}
considered $10$ candidates and $100$ voters. For these parameters, in
\Cref{fig:CitySizeCon} we see that $10\%$ of the elections are still
farther from the Conitzer matrix than about $12\%$ of the distance
from $\UN$ to $\ID$. This is roughly reflected by the size of the area
taken by Conitzer elections in Figure~\ref{fig:compass_map}.  Similar
observations hold for the other distributions too.  While this might
be a coincidence, it confirms the value of their map.

\begin{figure}[t]
    \centering
    \begin{minipage}{.49\textwidth}
        \centering
        \centering
  \resizebox{0.66\textwidth}{!}{% This file was created by tikzplotlib v0.9.6.
\begin{tikzpicture}

\definecolor{color0}{rgb}{0.12156862745098,0.466666666666667,0.705882352941177}
\definecolor{color1}{rgb}{1,0.498039215686275,0.0549019607843137}
\definecolor{color2}{rgb}{0.172549019607843,0.627450980392157,0.172549019607843}
\definecolor{color3}{rgb}{0.83921568627451,0.152941176470588,0.156862745098039}
\definecolor{color4}{rgb}{0.580392156862745,0.403921568627451,0.741176470588235}

\begin{axis}[
legend columns=3, 
legend cell align={left},
legend style={
	fill opacity=0.8,
	draw opacity=1,
	draw=none,
	text opacity=1,
	at={(0.5,1.2)},
	line width=1.5pt,
	anchor=north,
	/tikz/every even column/.append style={column sep=0.5cm}
},
tick align=outside,
tick pos=left,
x grid style={white!69.0196078431373!black},
xlabel={number of voters},
xmin=0.5, xmax=209.5,
xtick style={color=black},
y grid style={white!69.0196078431373!black},
ylabel={normalized positionwise distance},
ymin=0.00164411295773807, ymax=0.34,
ytick style={color=black},
ytick={0,0.05,0.1,0.15,0.2,0.25,0.3,0.35},
yticklabels={0.00,0.05,0.10,0.15,0.20,0.25,0.30,0.35}
]
\addplot [line width=1.5, color0]
table {%
10 0.221387073256287
15 0.182818240701393
20 0.159333732076431
25 0.143854319222008
30 0.135805323714332
35 0.123627500870221
40 0.113402312374042
45 0.108772748153759
50 0.102845231466984
55 0.102310208097464
60 0.0933441538630831
65 0.0886422162240396
70 0.087423948179446
75 0.0813223354329056
80 0.0816625131981741
85 0.0797020600370041
90 0.0764794355832177
95 0.0755563771639581
100 0.0731934076362923
105 0.0695475648127484
110 0.0693505064357074
115 0.0703941976874821
120 0.0647249403509929
125 0.0647985852128152
130 0.0642898425135902
135 0.0623343416288228
140 0.0613019252621404
145 0.0608734217149486
150 0.059961343099141
155 0.0584536918274573
160 0.0576554348415044
165 0.0575095401973534
170 0.0553882246801485
175 0.0551381682623912
180 0.0541692720000696
185 0.0532790229533825
190 0.0529004816104378
195 0.0531293110997996
200 0.0519895667773506
};
\addlegendentry{Conitzer}
\addplot [line width=1.5, color1]
table {%
10 0.0783111387154496
15 0.0668481534106084
20 0.0593117227357779
25 0.0539512990739728
30 0.0503018475763901
35 0.0470628264107654
40 0.0452345758407025
45 0.0424416714498778
50 0.041605174045807
55 0.0393594684570455
60 0.0382941189504426
65 0.0365982825087806
70 0.0363994514409751
75 0.0353255681586829
80 0.0333408183785227
85 0.0327254785505016
90 0.0321205730502011
95 0.0313940990492667
100 0.0303685408278859
105 0.0297037025734351
110 0.029190323045956
115 0.0290476701290367
120 0.0279228978214751
125 0.0277178439037137
130 0.0270390679277027
135 0.0267086740382049
140 0.026263331382236
145 0.0259162180531499
150 0.0249209302734767
155 0.0251490689952684
160 0.0249305299133989
165 0.0246046492488304
170 0.0243368529004318
175 0.0233445226942989
180 0.0232584579025807
185 0.0230207311931128
190 0.0225945051290265
195 0.0221309805392548
200 0.0223297696904587
};
\addlegendentry{Walsh}
\addplot [line width=1, color3]
table {%
10 0.298691814912445
15 0.243148387293881
20 0.210419865576196
25 0.186898049788118
30 0.171894355115755
35 0.158665948493951
40 0.148645200119572
45 0.139903850230586
50 0.131803603559418
55 0.126843055986198
60 0.120925879467099
65 0.116162736091676
70 0.111876798203174
75 0.1085650418224
80 0.105008519434395
85 0.102293446415394
90 0.0993386630868657
95 0.0960711280691127
100 0.0932004809612613
105 0.0913548469559333
110 0.0894965628868084
115 0.0876853783899734
120 0.0855515048482834
125 0.083723179550067
130 0.0825151279340444
135 0.0807424481460216
140 0.079906213649559
145 0.0776619893356132
150 0.0760461598496065
155 0.075236300950195
160 0.0738316343860383
165 0.0730127868402403
170 0.0722465186436441
175 0.070848695172525
180 0.0699625333283703
185 0.069163559519836
190 0.0679461767239718
195 0.0671461711980279
200 0.0659084272017463
};
\addlegendentry{IC}
\addplot [line width=1, color4]
table {%
10 0.303432282615521
15 0.24028358802002
20 0.209550985778171
25 0.186187611177864
30 0.173414494404058
35 0.159613546139182
40 0.145560917788591
45 0.139639298664306
50 0.136778450758311
55 0.131232314051512
60 0.120668439523357
65 0.116154668225984
70 0.114015730809094
75 0.112604316059245
80 0.106617117496506
85 0.100965418245596
90 0.0971377012601068
95 0.0974373199364537
100 0.0934012556985835
105 0.0911397397940437
110 0.0893572682695263
115 0.0879477087559522
120 0.0873055462766898
125 0.0833179667276377
130 0.0820276896387295
135 0.0789149647638263
140 0.0808399556271893
145 0.0772099845328656
150 0.0772904203375972
155 0.0746283383714183
160 0.0732606349657125
165 0.0728312023086758
170 0.0728674639054041
175 0.0709821365141368
180 0.0711383553152911
185 0.0661268315015313
190 0.0664628307161607
195 0.0657450254120806
200 0.0673775769946869
};
\addlegendentry{GS/balanced}

\addplot [line width=1, yellow]
table {%
10 0.217031373501329
15 0.181048956431259
20 0.155424904998432
25 0.142055082424457
30 0.127058900991207
35 0.118720369096896
40 0.111220605328009
45 0.105043933946032
50 0.0996835721498994
55 0.0952491255696447
60 0.0911081781060422
65 0.087695859728143
70 0.0846200173783017
75 0.0822516694707565
80 0.0795303986477828
85 0.0771953407927547
90 0.0741843811628104
95 0.0730909068206869
100 0.0710967293574201
105 0.069407171786522
110 0.0677836577545885
115 0.0662235511602864
120 0.0645815770496247
125 0.0637983606504288
130 0.0620648335496755
135 0.0609115729595656
140 0.0607737058706915
145 0.0594068856957251
150 0.0579069137249577
155 0.0571570431372999
160 0.0564061791307392
165 0.0554903729855772
170 0.0542221551990441
175 0.0535994014316089
180 0.0529886552534269
185 0.0522712605296601
190 0.0517654245499715
195 0.0512227791817313
200 0.0505694798959432
};
\addlegendentry{GS/caterpillar}
\end{axis}

\end{tikzpicture}}   \caption{Average normalized positionwise distance between elections sampled from various voter distributions and the frequency matrices of the respective models, for $50$ candidates ($64$ for GS) and between $10$ and $200$ voters.} \label{fig:CitySizeComp}
    \end{minipage}\hfill
    \begin{minipage}{0.49\textwidth}
        \centering
        \resizebox{0.66\textwidth}{!}{% This file was created by tikzplotlib v0.9.9.
\begin{tikzpicture}

\definecolor{color0}{rgb}{0.12156862745098,0.466666666666667,0.705882352941177}
\definecolor{color1}{rgb}{1,0.498039215686275,0.0549019607843137}
\definecolor{color2}{rgb}{0.172549019607843,0.627450980392157,0.172549019607843}
\definecolor{color3}{rgb}{0.83921568627451,0.152941176470588,0.156862745098039}
\definecolor{color4}{rgb}{0.580392156862745,0.403921568627451,0.741176470588235}
\definecolor{color5}{rgb}{0.549019607843137,0.337254901960784,0.294117647058824}
\definecolor{color6}{rgb}{0.890196078431372,0.466666666666667,0.76078431372549}
\definecolor{color7}{rgb}{0.737254901960784,0.741176470588235,0.133333333333333}

\begin{axis}[
legend columns=3, 
legend cell align={left},
legend style={
	fill opacity=0.8,
	draw opacity=1,
	draw=none,
	text opacity=1,
	at={(0.5,1.3)},
	line width=1.5pt,
	anchor=north,
	/tikz/column 2/.style={
		column sep=10pt,
	}
},
tick align=outside,
tick pos=left,
x grid style={white!69.0196078431373!black},
xlabel={n},
xmin=-7.75, xmax=206.75,
xtick style={color=black},
y grid style={white!69.0196078431373!black},
ylabel={normalized mean distance},
ymin=-0.0118653661635964, ymax=0.648038818073075,
ytick style={color=black},
ytick={-0.1,0,0.1,0.2,0.3,0.4,0.5,0.6,0.7},
yticklabels={−0.1,0.0,0.1,0.2,0.3,0.4,0.5,0.6,0.7}
]
\addplot [semithick, color0]
table {%
2 0.102958617908132
7 0.0637954569095837
12 0.0518069161140141
17 0.0460331067223261
22 0.0416825770542254
27 0.0387810728606994
32 0.0354644198728675
37 0.0332255097806207
42 0.0318312156888906
47 0.0303068861077274
52 0.0289898220925436
57 0.0278973505937448
62 0.0267486677485092
67 0.0260900984936955
72 0.0250658895475686
77 0.0242952467985139
82 0.0235743560253912
87 0.023104491375613
92 0.0223662826826373
97 0.0218468031985149
102 0.0214228936038138
107 0.0209394091136606
112 0.0203707903812749
117 0.0201797112322131
122 0.0195802283474592
127 0.0194178278720804
132 0.018936281440742
137 0.0187034718182563
142 0.0183409790493973
147 0.0178911856663451
152 0.0176339245336827
157 0.0174037360807354
162 0.0172707744457068
167 0.0168362545281652
172 0.0166629846888578
177 0.0165161033826332
182 0.0160671026440715
187 0.0159378786350777
192 0.0156385724859941
197 0.0155708556680352
};
\addlegendentry{$0.1$}
\addplot [semithick, color1]
table {%
2 0.179891517523228
7 0.109152571765145
12 0.0876733130720241
17 0.0747607029666167
22 0.0676056719444268
27 0.0622476197266191
32 0.0580619110861142
37 0.0544525903048096
42 0.0515697856903219
47 0.0492361849423552
52 0.0469038331730718
57 0.0452677753097774
62 0.0437459989885039
67 0.0422176124653244
72 0.0413197110356146
77 0.0398100586845437
82 0.0387202366963564
87 0.0378801604581007
92 0.0368091108044779
97 0.036433580241969
102 0.0353950743704256
107 0.0349867427249725
112 0.0341024244858973
117 0.0333800531630634
122 0.0326747768812236
127 0.032116587522699
132 0.0316041087659111
137 0.0308432412822627
142 0.0309230272949432
147 0.0301921129620237
152 0.029968923269973
157 0.0292614048168571
162 0.0289781248835905
167 0.0287670782793538
172 0.0284442573175732
177 0.0279418991425311
182 0.0273888387257239
187 0.0272290275417148
192 0.0267763641440927
197 0.0262395362811485
};
\addlegendentry{$0.2$}
\addplot [semithick, color2]
table {%
2 0.251190353014416
7 0.145970630409224
12 0.115690530506328
17 0.0992403536840826
22 0.0885669857689185
27 0.0816283966913659
32 0.0756478731507643
37 0.0714057445822693
42 0.0670144949914322
47 0.0641619993173357
52 0.0612821892884116
57 0.0593697215031726
62 0.0570331090935232
67 0.0548421397436189
72 0.0533676156899174
77 0.0515523572518551
82 0.0504160440370474
87 0.0493479550207747
92 0.048229903264998
97 0.0469563809394208
102 0.0458777836344634
107 0.0447469007955386
112 0.0442590321542453
117 0.0434904951297332
122 0.0427329141947311
127 0.0422251908993349
132 0.0412788832361437
137 0.040411661481007
142 0.0398819430349972
147 0.0394830107727752
152 0.0387233760082324
157 0.0378320619085183
162 0.0377931589180405
167 0.0374529631278153
172 0.0368470687444917
177 0.0362470943449025
182 0.0358198468518308
187 0.0353867946659726
192 0.0351668797328048
197 0.0347408803088912
};
\addlegendentry{$0.3$}
\addplot [semithick, color3]
table {%
2 0.318104219373831
7 0.179774899244665
12 0.140455733954927
17 0.120399688824225
22 0.106016721017227
27 0.0970483235258755
32 0.0890190734152915
37 0.0838269048253304
42 0.0795028994018984
47 0.0753965984065533
52 0.0722787559702651
57 0.0697711454281734
62 0.0666304896539254
67 0.0646118183326684
72 0.0625045666789618
77 0.0609216786927168
82 0.0590986856290905
87 0.0576560535842524
92 0.0561397335936141
97 0.055018384238284
102 0.0535933345430514
107 0.0528544888462401
112 0.0514566633408047
117 0.050998643916792
122 0.0498865000219134
127 0.0485373596736503
132 0.0480987114586764
137 0.0471062482073921
142 0.0467358089867099
147 0.0460405672977459
152 0.0452150812746453
157 0.0447844473760813
162 0.0438917464081231
167 0.0436352894051685
172 0.0430492663089187
177 0.0420022360878151
182 0.0418957271277764
187 0.0414960527627264
192 0.0410205669371037
197 0.0405627148860121
};
\addlegendentry{$0.4$}
\addplot [semithick, color4]
table {%
2 0.380677991027797
7 0.206100415816886
12 0.158784711186189
17 0.133872081396965
22 0.11804266375993
27 0.108427441442792
32 0.0995883036983487
37 0.0937251321901938
42 0.0869217043717046
47 0.0835616055860679
52 0.0795272128780594
57 0.0764748812343757
62 0.0741276197129411
67 0.0708775461521847
72 0.0684436221345857
77 0.0667048648648629
82 0.0650114618701662
87 0.0626395994721651
92 0.0614882449943311
97 0.059894900071286
102 0.0591583277810268
107 0.0576522026209081
112 0.056222087641203
117 0.0551798798747448
122 0.0544971585820063
127 0.0532952899899842
132 0.0523184678837822
137 0.0514917035607543
142 0.0504410475660797
147 0.0499218504905624
152 0.0488312560520895
157 0.0487207113966007
162 0.0478475824701821
167 0.0472734056541077
172 0.0469564675066559
177 0.0462946316007029
182 0.0451837309142333
187 0.0452355407052826
192 0.0445177973196509
197 0.0444201548567342
};
\addlegendentry{$0.5$}
\addplot [semithick, color5]
table {%
2 0.433636299148118
7 0.225985549824489
12 0.172447495265311
17 0.144681070586931
22 0.12632793463527
27 0.115463678502017
32 0.106451694403317
37 0.0987641736747971
42 0.0933117455130654
47 0.0881434207903023
52 0.0843445861805985
57 0.080534110525131
62 0.0773398629914481
67 0.0753215325190401
72 0.072124598266059
77 0.0711255208503998
82 0.0688369467284372
87 0.0670938625286927
92 0.0648091969143592
97 0.0629362630817367
102 0.0613756787436198
107 0.060683840960506
112 0.0591695941083085
117 0.0580023549353348
122 0.0565148107626681
127 0.0556830652509098
132 0.0549495439955547
137 0.054400483816236
142 0.0534732326070791
147 0.0522616105749817
152 0.0517124786488961
157 0.0507439101907653
162 0.0497086402534696
167 0.0496960008147733
172 0.0490213720151838
177 0.0480429715041637
182 0.0476347463891692
187 0.0471210012154216
192 0.0466980441057292
197 0.0457816027991288
};
\addlegendentry{$0.6$}
\addplot [semithick, color6]
table {%
2 0.487163682898422
7 0.243850010809112
12 0.182452953090441
17 0.153320707372917
22 0.13446636864049
27 0.122334683273012
32 0.110199110683821
37 0.103458231875023
42 0.0968061248449405
47 0.0914416295069852
52 0.0869955094611006
57 0.0826388368909756
62 0.0793895000222028
67 0.0774942410007423
72 0.0752426786101641
77 0.0719482037460259
82 0.0701443533400199
87 0.0681519307568949
92 0.0667755906130238
97 0.0648458053346998
102 0.0628596354027738
107 0.0621167956578063
112 0.0608044626402176
117 0.0596509579628022
122 0.0581825015797601
127 0.0570423973374581
132 0.0561028877028856
137 0.0550216398585983
142 0.0545050807008609
147 0.0534050133909151
152 0.0524981940877294
157 0.0518319376790784
162 0.0507840891062402
167 0.0503211217325758
172 0.0496768452168698
177 0.0490135475188787
182 0.0483948740985698
187 0.0480390635565748
192 0.047331111570468
197 0.0465858728935806
};
\addlegendentry{$0.7$}
\addplot [semithick, white!49.8039215686275!black]
table {%
2 0.54653343327197
7 0.262535745853621
12 0.194698442856922
17 0.162240311975908
22 0.140411684278372
27 0.125618983178303
32 0.11427496277934
37 0.1064153555228
42 0.0996108078157154
47 0.0938585552739791
52 0.0892062589618247
57 0.0851093026617763
62 0.0815468313127123
67 0.0790619926913535
72 0.0753799689558278
77 0.0740172192765319
82 0.0715446791656724
87 0.06947727837971
92 0.0673777536847511
97 0.0649023428729804
102 0.0639330633430231
107 0.0623383936465138
112 0.0614436342853422
117 0.0592495302118234
122 0.0581607992914455
127 0.0570341701669918
132 0.056208955025723
137 0.0558316807433435
142 0.0545926151611861
147 0.0531202841932312
152 0.0528293599729065
157 0.0518697100782002
162 0.0514184456558771
167 0.0505439269564104
172 0.0490746063411924
177 0.0491699929844805
182 0.04880676690839
187 0.0482104074767157
192 0.0474436112869607
197 0.0470785510794809
};
\addlegendentry{$0.8$}
\addplot [semithick, color7]
table {%
2 0.613750906722801
7 0.296794768784515
12 0.217148073885069
17 0.175775625679736
22 0.154172962974122
27 0.13582193036917
32 0.123387977442596
37 0.113404517253807
42 0.105427614917089
47 0.0987866785584995
52 0.0947497935813157
57 0.0881620989361999
62 0.0852655324083613
67 0.0817717354823047
72 0.0787625706221546
77 0.0763420241469649
82 0.0748884664342424
87 0.0712416513099045
92 0.0695026426152343
97 0.067668613428004
102 0.0656955576078834
107 0.0642342639579085
112 0.0627001986243859
117 0.0607802363324548
122 0.0599011451122547
127 0.0587502662103069
132 0.0569347944116558
137 0.0554956009967581
142 0.0551787525792422
147 0.0541338974286076
152 0.0536101390225846
157 0.0528786193230832
162 0.0515342949410858
167 0.0510830074804416
172 0.0501930044854612
177 0.0495440576622198
182 0.0484965527027852
187 0.0476398031976709
192 0.0473944444292519
197 0.0470667539767268
};
\addlegendentry{$0.9$}
\end{axis}

\end{tikzpicture}}   \caption{For different values of $\normphi$, average normalized positionwise distance of elections with $50$ candidates and between $10$ and $200$ voters sampled from the normalized Mallows model from the frequency matrix of the respective distribution.} \label{fig:ovMallows}
    \end{minipage}
\end{figure}

\section{Validation of Location Framework} \label{se:validation}
Again, in this section, distances and dispersion parameters are always normalized.
To validate whether our approach from \Cref{sub:framework} correctly identifies the ``nature'' of an election, we test its capabilities to find for a given election the dispersion parameter of the closest Mallows distribution with a single central vote. 
That is, given an election, we computed for all $\phi\in \{0,0.001,\dots, 1\}$ its distance to $\calD_\Mallows^{v,\phi}$ and returned the minimizing $\phi$ value.
We compare the computed value to the maximum-likelihood estimator for the dispersion parameter of the underlying Mallows distribution computed from the Kemeny consensus ranking~\citep{MM09} in two experiments. 

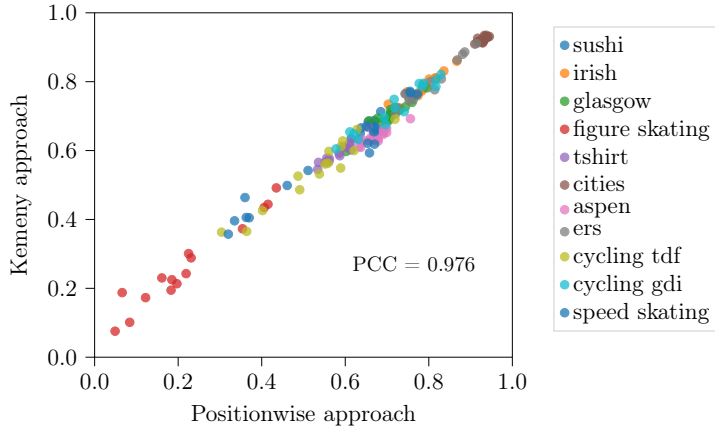
\begin{figure}
    \centering
        \resizebox{0.6\textwidth}{!}{% This file was created with tikzplotlib v0.9.12.
\begin{tikzpicture}

\definecolor{color0}{rgb}{0.12156862745098,0.466666666666667,0.705882352941177}
\definecolor{color1}{rgb}{1,0.498039215686275,0.0549019607843137}
\definecolor{color2}{rgb}{0.172549019607843,0.627450980392157,0.172549019607843}
\definecolor{color3}{rgb}{0.83921568627451,0.152941176470588,0.156862745098039}
\definecolor{color4}{rgb}{0.580392156862745,0.403921568627451,0.741176470588235}
\definecolor{color5}{rgb}{0.549019607843137,0.337254901960784,0.294117647058824}
\definecolor{color6}{rgb}{0.890196078431372,0.466666666666667,0.76078431372549}
\definecolor{color7}{rgb}{0.737254901960784,0.741176470588235,0.133333333333333}
\definecolor{color8}{rgb}{0.0901960784313725,0.745098039215686,0.811764705882353}

\begin{axis}[
legend cell align={left},
legend style={fill opacity=0.8, draw opacity=1, text opacity=1, at={(1.5,0.96)}, draw=white!80!black},
tick align=outside,
tick pos=left,
x grid style={white!69.0196078431373!black},
xlabel={Positionwise approach},
xmin=0, xmax=1,
xtick style={color=black},
xtick={0,0.2,0.4,0.6,0.8,1},
xticklabels={0.0,0.2,0.4,0.6,0.8,1.0},
y grid style={white!69.0196078431373!black},
ylabel={Kemeny approach},
ymin=0, ymax=1,
ytick style={color=black},
ytick={0,0.2,0.4,0.6,0.8,1},
yticklabels={0.0,0.2,0.4,0.6,0.8,1.0}
]
\addplot [draw=color0, fill=color0, mark=*, only marks, opacity=0.75]
table{%
x  y
0.614 0.634666666667103
0.679 0.663111111110805
0.668 0.649333333333147
0.708 0.705777777777759
0.677 0.655555555555553
0.708 0.697333333333354
0.655 0.683555555555371
0.636 0.649777777778021
0.638 0.665777777777786
0.661 0.677777777777664
0.67 0.683111111111462
0.67 0.667555555555509
0.685 0.712888888888448
0.705 0.692444444444649
0.67 0.687555555555422
};
\addlegendentry{sushi}
\addplot [draw=color1, fill=color1, mark=*, only marks, opacity=0.75]
table{%
x  y
0.722 0.722222222222255
0.748 0.770222222222267
0.867 0.859555555555327
0.791 0.786666666666802
0.783 0.772888888888805
0.817 0.809777777777725
0.786 0.791555555555703
0.786 0.779555555555889
0.816 0.811555555555787
0.703 0.734666666666592
0.774 0.75911111111106
0.731 0.723111111110957
0.801 0.807555555555408
0.836 0.831111111110843
0.765 0.76088888888858
};
\addlegendentry{irish}
\addplot [draw=color2, fill=color2, mark=*, only marks, opacity=0.75]
table{%
x  y
0.678 0.694222222222189
0.699 0.686666666667049
0.76 0.739999999999991
0.744 0.727111111111554
0.696 0.688444444444092
0.708 0.71911111111121
0.602 0.59733333333363
0.689 0.685333333333178
0.719 0.724888888888758
0.796 0.782222222222364
0.708 0.704444444444155
0.73 0.716888888888733
0.711 0.712000000000067
0.657 0.686666666667049
0.704 0.687111111110782
};
\addlegendentry{glasgow}
\addplot [draw=color3, fill=color3, mark=*, only marks, opacity=0.75]
table{%
x  y
0.219 0.242666666666413
0.406 0.434666666666843
0.354 0.372888888889041
0.197 0.213333333333299
0.415 0.4440000000003
0.435 0.491555555555709
0.084 0.101333333333229
0.185 0.224888888888634
0.122 0.172888888888521
0.161 0.230222222222184
0.231 0.288444444444214
0.066 0.18755555555595
0.183 0.19466666666679
0.049 0.0755555555553971
0.225 0.30088888888894
};
\addlegendentry{figure skating}
\addplot [draw=color4, fill=color4, mark=*, only marks, opacity=0.75]
table{%
x  y
0.609 0.618222222222217
0.596 0.612
0.587 0.584888888888879
0.614 0.601777777777837
0.633 0.631555555555518
0.534 0.545333333333288
0.536 0.565777777777361
0.655 0.667111111110675
0.612 0.621333333333005
0.552 0.565333333333443
0.585 0.592000000000001
0.602 0.609333333332932
0.562 0.576444444444791
0.556 0.578222222222538
0.561 0.568444444444071
};
\addlegendentry{tshirt}
\addplot [draw=color5, fill=color5, mark=*, only marks, opacity=0.75]
table{%
x  y
0.934 0.930222222222148
0.945 0.931111111111006
0.931 0.91955555555556
0.941 0.930222222222148
0.929 0.912444444444102
0.91 0.90933333333325
0.917 0.915999999999745
0.923 0.920888888888603
0.917 0.926222222222143
0.928 0.91466666666701
0.938 0.934666666666325
0.925 0.915555555555746
0.932 0.919111111111051
0.933 0.934222222222232
0.94 0.928444444444611
};
\addlegendentry{cities}
\addplot [draw=color6, fill=color6, mark=*, only marks, opacity=0.75]
table{%
x  y
0.633 0.611555555555762
0.691 0.652888888888518
0.691 0.659111111111014
0.679 0.627555555555664
0.691 0.647111111111354
0.678 0.634222222222168
0.655 0.629777777777671
0.756 0.692444444444649
0.661 0.636888888888663
0.699 0.664888888889299
0.637 0.609333333332932
0.699 0.652444444444121
0.669 0.630666666666707
0.671 0.63200000000037
0.641 0.629777777777671
};
\addlegendentry{aspen}
\addplot [
  draw=white!49.8039215686275!black,
  fill=white!49.8039215686275!black,
  mark=*,
  only marks,
  opacity=0.75
]
table{%
x  y
0.914 0.91066666666647
0.763 0.74844444444426
0.743 0.764888888888707
0.881 0.879555555555553
0.83 0.808000000000316
0.725 0.725333333333136
0.753 0.751555555555695
0.753 0.744888888888793
0.814 0.776444444444299
0.72 0.705777777777759
0.745 0.764444444444367
0.868 0.86311111111087
0.774 0.76711111111115
0.804 0.799999999999839
0.886 0.886222222222655
};
\addlegendentry{ers}
\addplot [draw=color7, fill=color7, mark=*, only marks, opacity=0.75]
table{%
x  y
0.628 0.661333333333183
0.304 0.363111111111383
0.538 0.532000000000057
0.59 0.612
0.491 0.486222222222289
0.402 0.425777777777985
0.589 0.549333333333014
0.559 0.563555555555895
0.364 0.365333333333565
0.594 0.628000000000255
0.719 0.691111111111294
0.487 0.525777777777426
0.561 0.597777777777786
0.552 0.560888888888534
0.621 0.599999999999821
};
\addlegendentry{cycling tdf}
\addplot [draw=color8, fill=color8, mark=*, only marks, opacity=0.75]
table{%
x  y
0.578 0.604888888888785
0.817 0.785777777777864
0.693 0.696888888888593
0.717 0.748888888889163
0.697 0.67511111111098
0.631 0.632444444444761
0.788 0.79555555555543
0.621 0.648444444444384
0.817 0.804888888888831
0.785 0.786222222222341
0.777 0.794666666666483
0.719 0.723555555555668
0.741 0.712888888888448
0.611 0.654222222222398
0.829 0.820888888888781
};
\addlegendentry{cycling gdi}
\addplot [draw=color0, fill=color0, mark=*, only marks, opacity=0.75]
table{%
x  y
0.32 0.357333333333194
0.658 0.593333333333361
0.37 0.404888888888969
0.651 0.670666666666795
0.756 0.768000000000259
0.772 0.763111111111069
0.669 0.654666666667
0.654 0.621333333333005
0.511 0.542222222222556
0.363 0.405777777777692
0.67 0.618666666666716
0.461 0.498666666666629
0.36 0.463555555555544
0.755 0.771999999999931
0.335 0.396000000000099
};
\addlegendentry{speed skating}
\draw (axis cs:0.6,0.25) node[
  scale=0.9,
  anchor=base west,
  text=black,
  rotate=0.0
]{PCC = 0.976};
\end{axis}

\end{tikzpicture}}   \caption{Correlation between the predicted dispersion parameter of our real-world data by our (positionwise) approach and by the maximum-likelihood Kemeny approach. } \label{fig:corr}
\end{figure}

We start by sampling  for $\phi\in \{0,0.05,\dots,1\}$ an election with $10$ candidates and $100$ voters from Mallows model with dispersion parameter $\phi$ and computed estimates for the dispersion parameter based on our and the Kemeny approach. 
For all elections, the returned estimates differ by at most $0.01$. 
Thus, the dispersion parameter returned by our approach is always very close to the maximum-likelihood estimate. 
However, the estimated dispersion parameter might deviate a bit from the originally used dispersion parameter: On average, the absolute difference between the dispersion parameter returned by our approach and the underlying dispersion parameter is $0.0179$ with the maximum difference being $0.069$; for Kemeny, the average is $0.016$ and the maximum is $0.082$. 
While it might seem surprising that the estimated dispersion parameter is different from the underlying one, recall from \Cref{sec:variance} that elections sampled from a vote distribution typically have non-zero distance from the distribution's frequency matrix. 
To illustrate this idea, we can think of a Mallows distribution as a normal distribution placed in the space of elections with the dispersion parameter being its mean (in particular as soon as the dispersion parameter is greater zero, all elections have a non-zero probability of being sampled).
So multiple Mallows distributions for different dispersion parameters translate to multiple partly overlapping normal distributions and it might as well happen that an election sampled from a Mallows distribution with one dispersion parameter is in fact closer to the mean of the Mallows distribution with a different dispersion parameter. 

We also repeated the above experiment to measure the capabilities of our approach to estimate the parameters of a mixed Mallows distribution. For each pair of $p\in \{0.1,0.2,0.3,0.4,0.5\}$ and $\phi \in \{0.05,0.1,\dots,0.95\}$ we sampled an election with $10$ candidates and $100$ voters from  $p$-$\calD_\Mallows^{v, \phi, \phi}$ (i.e., we sample a vote from the Mallows distribution with dispersion parameter $\phi$ and subsequently flip the sampled vote with probability $p$). Subsequently we computed for which value of $p\in \{0,0.05,\dots, 0.5\}$ and $\phi\in \{0,0.05,\dots, 1\}$ the distance of the sampled election is closest to the frequency matrix of the induced distribution $p$-$\calD_\Mallows^{v, \phi, \phi}$  (the Kemeny consensus ranking can no longer be used here to provide a maximum-likelihood estimate). 
The average difference between the estimated and underlying dispersion parameter is $0.028$ and the average difference between the estimated and underlying flipping probability is $0.068$. 
For $63$ out of the $95$ elections is the difference between the estimated and underlying dispersion parameter and weight smaller equal $0.05$ (which is the smallest non-zero difference). 
The error of the estimated dispersion parameter here is around three times larger than for Mallows elections with a single central vote. 
This can be explained by the fact that for mixtures of Mallows distributions, the ``overlap'' between different distributions is even larger; in fact, some paramterizations even result in the same distributions (e.g., for $\phi=1$ all flipping probabilities result in the same distribution). 

Second, while producing good estimates for elections that have been sampled from a Mallows distribution is a good sanity check, we are ultimately interested in computing to which distribution (unknown) real-world elections are closest. 
To do so, we again compare the estimated dispersion parameter for a Mallows distribution with a single central vote computed by our approach with the one estimated via the Kemeny consensus ranking (as described in the beginning of this section); however, this time instead of considering elections sampled from Mallows model, we examine $165$ real-world elections used by \citet{boe-bre-fal-nie-szu:t:compass} (see the data part of \Cref{sub:framework} for details on the dataset). 
The estimated dispersion parameters returned by both methods are highly correlated with a Pearson correlation coefficient of $0.976$ and an average difference of $0.017$, median difference of $0.0105$, and maximum difference of $0.197$, indicating the power of our approach.
Interestingly, the correlation is particularly strong for larger dispersion parameters (see Figure \Cref{fig:corr} for a plot showing the correlation between the two approaches). 
Together with the estimated normalized dispersion parameter, both approaches also return the central order $v$ of the closest Mallows model, which are typically again quite similar: the average swap distance between the two estimators is $2.81$ out of $45$ possible swaps.

\end{document}